\newcommand*\widefbox[1]{\fbox{\hspace{2em}#1\hspace{2em}}}
\title{Analysis of Density Matrix Embedding Theory around the non-interacting limit}
\author{Eric Canc\`es\footnotemark[1], Fabian Faulstich\footnotemark[2], Alfred Kirsch\footnotemark[1], Elo\"ise Letournel\footnotemark[1] and Antoine Levitt\footnotemark[3]}
\def\R{{\mathbb R}}
\def\cD{{\mathcal D}}
\def\cF{{\mathcal F}}
\def\cH{{\mathcal H}}
\def\cY{{\mathcal Y}}
\def\1{{\mathds{1}}}
\newcommand \dps{\displaystyle }
\newcommand{\mat}{\textrm{mat}} 
\newcommand{\Ran}{\mathrm{Ran}} 
\newcommand{\Ker}{\mathrm{Ker}} 
\newcommand{\Trace}{\mathrm{Tr}} 
\newcommand{\Span}{\mathrm{Span}} 
\newcommand{\llbracket}{[\![} 
\newcommand{\rrbracket}{]\!]} 
\newcommand{\HSpace}{\mathcal{H}}
\newcommand{\HSpaceN}[1][n]{\mathcal{H}_{#1}} 
\newcommand{\DimH}{L} 
\newcommand{\NElec}{N} 
\newcommand{\Fock}{{\rm Fock}} 
\newcommand{\LinearMap}{\mathcal L}
\newcommand{\AtomicVector}[1][\kappa]{e_{{#1}}} 
\newcommand{\AtomicBasis}{\mathcal{B}_{\mathrm{at}}} 
\newcommand{\Ann}[1][\kappa]{\widehat a_{#1}} 
\newcommand{\NumberOpe}{\widehat N} 
\newcommand{\Hamiltonian}{\widehat H} 
\newcommand{\Partition}{{\rm Part}} 
\newcommand{\FragIndic}[1][x]{\mathcal{I}_{#1}} 
\newcommand{\FragSize}[1][x]{L_{#1}} 
\newcommand{\NFrag}{N_f} 
\newcommand{\DGS}{D_{0}}
\newcommand{\Grass}[1][\NElec]{\mathcal{D}} 
\newcommand{\CHGrass}[1][\NElec]{\mathrm{CH}({\mathcal{D})}} 
\newcommand{\Core}[2][x]{\HSpace_{{#1},{#2}}^\mathrm{core}} 
\newcommand{\Virt}[2][x]{\HSpace_{{#1},{#2}}^\mathrm{virt}} 
\newcommand{\Env}[2][x]{\HSpace_{{#1},{#2}}^\mathrm{env}} 
\newcommand{\WF}{\Psi} 
\newcommand{\WFSlater}[1][D]{\WF^{0}_{\NElec,{#1}}} 
\newcommand{\WFimp}{\WF^{0,\mathrm{imp}}_{x, D}} 
\newcommand{\WFcore}{\WF^{0, \mathrm{core}}_{x, D}} 
\newcommand{\WFimpTrial}{\WF^{\mathrm{imp}}_{x,D}} 
\newcommand{\ImpurityHamiltonian}[2][x]{\Hamiltonian^{\mathrm{imp}}_{{#1},{#2}}} 
\newcommand{\ChemicalPotential}{\mu} 
\newcommand{\GSDensity}[1][x]{P_{\ChemicalPotential,{#1}}} 
\newcommand{\HLMapPartial}[1][x]{F^{\mathrm{HL}}_{\ChemicalPotential,{#1}}} 
\newcommand{\HLMap}{F^{\mathrm{HL}}} 
\newcommand{\MatFrag}[1][x]{E_{#1}}
\newcommand{\DiagonalBlockCHGrass}[1][\Partition]{\mathcal{P}} 
\newcommand{\EMF}{{\mathcal E}^{\mathrm{HF}}}
\newcommand{\LLMap}{F^{\mathrm{LL}}} 
\newcommand{\Frag}[1][x]{X_{#1}} 
\newcommand{\Impurity}[2][x]{W_{{#1},{#2}}}
\newcommand{\Projector}[1][E]{\Pi_{#1}}
\newcommand{\PartitionProj}{\mathrm{Bd}}
\newcommand{\DMETMap}{F^{\mathrm{DMET}}}
\newcommand{\GrassCompatiblePartition}{\PartitionProj^{-1}\mathring{\DiagonalBlockCHGrass}} 
\newcommand{\PGS}{P_0}
\definecolor{airforceblue}{rgb}{0.36, 0.54, 0.66}
\newtheorem{theorem}{Theorem}
\newtheorem{remark}[theorem]{Remark}
\newtheorem{lemma}[theorem]{Lemma}
\newtheorem{proposition}[theorem]{Proposition}
\newtheorem{definition}[theorem]{Definition}
\begin{document}

\maketitle

\begin{abstract}
  This article provides the first mathematical analysis of the Density Matrix Embedding Theory (DMET) method. 
  We prove that, under certain assumptions, (i) the exact ground-state density matrix is a fixed-point of the DMET map for non-interacting systems, (ii) there exists a unique physical solution in the weakly-interacting regime, and (iii) DMET is exact at first order in the coupling parameter. 
  We provide numerical simulations to support our results and comment on the physical meaning of the assumptions under which they hold true.  
  We show that the violation of these assumptions may yield multiple solutions of the DMET equations.
  We moreover introduce and discuss a specific $\NElec$-representability problem inherent to DMET.
\end{abstract}

\footnotetext[1]{CERMICS, Ecole des Ponts and Inria Paris, 6 \& 8 avenue Blaise Pascal,
77455 Marne-la-Vall\'ee, France}
\footnotetext[2]{Department of Mathematics, Rensselaer Polytechnic Institute, Troy, 12180 NY}
\footnotetext[3]{Laboratoire de Math\'ematiques d'Orsay, Universit\'e Paris-Saclay, Orsay, 91405, France}
        
\setcounter{tocdepth}{2}        

\tableofcontents

\section{Introduction}

Electronic structure theory is a powerful quantum mechanical framework for investigating the intricate behavior of electrons within molecules and crystals. 
At the core lies the interaction between particles, specifically the electron-electron and electron-nuclei interactions.
Embracing the essential quantum physical effects, this theory is the foundation for {\it ab initio} quantum chemistry and materials science calculations performed by many researchers in chemistry and related fields, complementing and supplementing painstaking laboratory work.
With its diverse applications in chemistry and materials science, electronic structure theory holds vast implications for the mathematical sciences. 
Integrating mathematical doctrines into this field leads to the development of precise and scalable numerical methods, enabling extensive {\it in silico} studies of chemistry for e.g. sustainable energy, green catalysis, and nanomaterials. 
The synergy between mathematics and electronic structure theory offers the potential for groundbreaking advancements in addressing these global challenges.

Within the realm of electronic structure theory, the treatment of {\it strongly correlated quantum systems} is a particularly challenging and long-standing challenge.
Here, the application of high-accuracy quantum chemical methods that are able to capture the electronic correlation effects at chemical accuracy is inevitable. 
Unfortunately, the application of such high-accuracy methods is commonly stymied by a steep computational scaling with respect to the system's size.
A potential remedy is provided by quantum embedding theories, i.e., a paradigm for bootstrapping the success of highly accurate solvers at small scales up to significantly larger scales by decomposing the original system into smaller fragments, where each fragment is then solved individually and from which, a solution to the whole system is then obtained~\cite{gordon2012fragmentation,jones2020embedding,sekaran2022local}.   
Such approaches include dynamical mean-field theory~\cite{metzner1989correlated,georges1996dynamical,georges1992numerical,kotliar2006electronic,maier2005quantum}, or variational embedding theory~\cite{lin2022variational,chen2020multiscale,khoo2021scalable}.

Subject of this article is a widely-used quantum embedding theory, namely, density matrix embedding theory (DMET)~\cite{DMET2012, DMET2013, tsuchimochi2015density, bulik2014density, wouters2016practical, cui2019efficient,sun2020finite,cui2020ground}.
The general idea of DMET is to partition the global quantum system into several quantum ``impurities'', each impurity being treated accurately via a high-level theory (such as full configuration interaction (FCI)~\cite{knowles1984new,olsen1990passing,vogiatzis2017pushing}, coupled cluster theory~\cite{Cizek1966}, density matrix renormalization group (DMRG)~\cite{White92}, etc.).
More precisely, the DMET methodology follows the procedure sketched out as: 
1) fragment the system, 
2) for each fragment, construct an interacting bath that describes the
coupling between the fragment and the remaining system, thus giving rise to a so-called impurity problem, 
3) solve an interacting problem for each impurity using a highly accurate method, 
4) extract properties of the system, 
5) perform step 2)--4) self-consistently in order to embed updated correlation effects back into the full system. 
Over the past years, a large variety of this general framework has been developed, including how
the bath space is defined (including the choice of low-level theory)~\cite{fertitta2018rigorous,nusspickel2020efficient,nusspickel2020frequency,yalouz2022quantum}, how the interacting cluster Hamiltonian is constructed and solved~\cite{nusspickel2022effective,potthoff2001two,lu2019natural,ganahl2015efficient,scott2021extending}, and the choice of self-consistent requirements~\cite{wu2019projected,wu2020enhancing,faulstich2022pure}.
This variety of DMET flavors has been successfully applied to a wide range of systems such as Hubbard models~\cite{DMET2012, bulik2014density, chen2014intermediate, boxiao2016, Zheng2017, zheng2017stripe, welborn2016bootstrap,senjean2018site,senjean2019projected}, quantum spin models~\cite{Fan15, gunst2017block,ricke2017performance}, and a number of strongly correlated molecular and periodic systems~\cite{DMET2013, wouters2016practical,cui2020ground,nusspickel2022systematic,bulik2014electron,pham2018can,hermes2019multiconfigurational,tran2019using,ye2018incremental,ye2019bootstrap,ye2019atom,ye2020bootstrap,ye2021accurate,tran2020bootstrap,meitei2023periodic,mitra2022periodic,mitra2021excited,ai2022efficient}. 
Recently, the application of DMET variants on quantum computers has been explored~\cite{liu2023bootstrap,vorwerk2022quantum,cao2022ab}.

In this article, we follow the computational procedure where the global information, at the level of the one-electron reduced density matrix (1-RDM), is made consistent between all the impurities with the help of a low-level Hartree-Fock (HF) type of theory. 
In the self-consistent-field DMET (SCF-DMET)\footnote{Throughout the paper, DMET refers to SCF-DMET. This is in contrast to one-shot DMET, in which the impurity problem is only solved once without self-consistent updates.}, this global information is then used to update the impurity problems in the next self-consistent iteration, until a consistency condition of the 1-RDM is satisfied between the high-level and low-level theories.

This article is organized as follows. In Section~\ref{sec:notations}, we introduce the many-body quantum model under investigation and its fragment decomposition, and set up some notation used in the sequel. In Section~\ref{sec:impurity_pb}, we present a mathematical formulation of the DMET impurity problem and introduce (formally) the high-level DMET map. The low-level DMET map and the DMET fixed point problem are defined (still formally) in Sections~\ref{sec:overview} and \ref{sec:DMET_pb} respectively. In Section~\ref{sec:main_results}, we state our main results:
\begin{enumerate}
\item in Proposition~\ref{prop:DMET0}, we show that for non-interacting systems, the exact ground-state density matrix is a fixed-point of the DMET map if (i) the system is gapped (Assumption (A1)), and (ii) the fragment decomposition satisfies a natural and rather mild condition (Assumption (A2)). Although this result is well-known in the physics and chemistry community, a complete mathematical proof was still missing;
\item in Theorem~\ref{thm:theory}, we prove that under two additional assumptions ((A3) and (A4)), the DMET fixed-point problem has a unique physical solution in the weakly-interacting regime, which is real-analytic in the coupling parameter $\alpha$. Assumption (A3) is related to some specific $\NElec$-representability condition inherent to the DMET approach, while Assumption (A4) has a physical interpretation in terms of linear response theory;
\item in Theorem~\ref{thm:DMET_HF}, we prove that in the weakly-interacting regime, DMET is exact at first order in $\alpha$.
\end{enumerate}
The numerical simulations reported in Section~\ref{sec:numerics} illustrate the above results and indicate that DMET does not seem to be exact at second order. Although, in the special case when there is only one site per fragment, Assumption (A4) is a consequence of Assumptions (A1)-(A3) (see Remark~\ref{rem:one_site_per_fragment}), the numerical simulations presented show that this is in general not the case. 
Further investigations using the H$_6$-model (vide infra) reveal the existence of a specific configuration ($\Theta_3$) for which only Assumption (A4) is not satisfied. In the vicinity of this configuration, DMET has at least two distinct solutions that arise from a transcritical bifurcation at $\Theta_3$.
In Section~\ref{sec:hl_DMET}, we formulate the impurity problem in more detail and discuss the domain of the high-level DMET map. In Section~\ref{sec:ll_DMET}, we study the $\NElec$-representability problem mentioned above and provide a simple criterion of local $\NElec$-representability directly connected to Assumption (A3). In order to improve the readability of the paper, we postponed the technical proofs to Section~\ref{sec:proofs}. For the reader's convenience, the main notations used throughout this article are collected in Table~\ref{tab:notations} in Appendix~\ref{sec:appendix}.

\section{The DMET formalism}

\subsection{The quantum many-body problem and its fragment decomposition}
\label{sec:notations}

We consider a physical system with $\DimH$ quantum sites, with one orbital per site, occupied by $1 \leq \NElec < \DimH$ electrons, and assume that magnetic effects (interaction with an external magnetic field, spin-orbit coupling, etc.) can be neglected. This allows us to work with real-valued wave-functions and density matrices. We set
\begin{align}
\HSpace := \R^\DimH \quad \text{(one-particle state space)}, \quad & \AtomicBasis := \{\AtomicVector\}_{\kappa \in \llbracket1,\DimH \rrbracket} \quad \text{(canonical basis of  $\R^\DimH$)}, \label{eq:one_particle_state_space} \\
\HSpaceN :=\bigwedge^n \HSpace \quad  \text{($n$-particle state space)}, \quad & \Fock(\HSpace) := \bigoplus_{n=0}^{\DimH} \HSpaceN \quad  \text{(real fermionic Fock space)}. \nonumber
\end{align}
We denote by $\Ann$ and $\Ann^\dagger$ the generators of the (real) CAR algebra associated with the canonical basis of $\HSpace$, i.e. 
$$
\Ann := \Ann[](\AtomicVector) \quad \mbox{and} \quad \Ann^\dagger = \Ann[]^\dagger (\AtomicVector).
$$
Recall that the maps
$$
\R^\DimH \ni f \mapsto \Ann[]^\dagger(f) \in \LinearMap(\Fock(\HSpace)) \quad \mbox{and} \quad \R^\DimH  \ni f \mapsto \Ann[](f) \in \LinearMap(\Fock(\HSpace)),
$$
are both linear in this setting since we work in a real Hilbert space framework. Here and below, $\LinearMap(E)$ is the space of linear operators from the finite-dimensional vector space $E$ to itself. We also define the number operator $\NumberOpe$ by
\begin{align*}
\NumberOpe:= \sum_{n=0}^\DimH n \, \widehat{\1}_{\HSpaceN} = \sum_{\kappa=1}^\DimH \Ann^\dagger \Ann && \mbox{(particle number operator)}.
\end{align*}
For each linear subspace $E$ of $\HSpace$, we denote the orthogonal projector on $E$ by $\Projector \in \LinearMap(\HSpace)$.
We assume that the Hamiltonian of the system in the second-quantized formulation reads 
\begin{equation} \label{eq:Hamiltonian}
\Hamiltonian := \sum_{\kappa,\lambda=1}^\DimH h_{\kappa \lambda} \Ann[\kappa]^\dagger \Ann[\lambda] + \frac{1}2  \sum_{\kappa, \lambda, \nu, \xi=1}^\DimH  V_{\kappa \lambda \nu \xi} \Ann[\kappa]^\dagger \Ann[\lambda]^\dagger \Ann[\xi]  \Ann[\nu],
\end{equation}
where the matrix $h \in \R^{\DimH \times \DimH}$ and the 4th-order tensor $V \in \R^{\DimH \times \DimH \times \DimH \times \DimH}$ satisfy the following symmetry properties:
$$
h_{\kappa \lambda}=h_{\lambda \kappa}\quad \mbox{and} \quad V_{\kappa\lambda\nu\xi} = V_{\nu\lambda\kappa\xi} = V_{\kappa\xi\nu\lambda}  = V_{\nu\xi\kappa\lambda}. 
$$
We denote by $\Grass[\NElec]$ the Grassmannian of rank-$\NElec$ orthogonal projectors in $\R^\DimH$:
\begin{equation}\label{eq:def_Grass}
\Grass[\NElec]:={\rm Gr}_\R(\NElec,\DimH)=\{ D \in  \R^{\DimH \times \DimH}_{\rm sym}  \; | \; D^2=D, \; \Trace(D)=\NElec \},
\end{equation}
and by $\CHGrass[\NElec]$ the convex hull of $\Grass[\NElec]$, i.e.
\begin{equation}\label{eq:def_CHGrass}
\CHGrass[\NElec]= \{ D \in \R^{\DimH \times \DimH}_{\rm sym} \; | \;0 \leq D \leq 1, \; \Trace(D)=\NElec \}.
\end{equation}
Physically, the set $\CHGrass[\NElec]$ corresponds to the set of (real-valued, mixed-state) $\NElec$-representable one-body density matrices with $\NElec$ electrons, and $\Grass[\NElec]$ is the set of one-body density matrices generated by (real-valued) Slater determinants in $\HSpaceN[\NElec]$. 

\medskip

We consider a fixed partition of the $\DimH$ sites into $\NFrag$ non-overlapping fragments $\{\FragIndic \}_{x \in \llbracket 1,\NFrag \rrbracket}$ of sizes $\{ \FragSize \}_{x \in \llbracket 1,\NFrag \rrbracket}$ such that $\FragSize < \NElec $ for all $x$. Up to reordering the sites, we can assume that the partition is the following:
\begin{equation}\label{eq:partition}
\llbracket 1, \DimH \rrbracket = 
\bigg\{ \underbrace{(1,\cdots,\FragSize[1])}_{\FragIndic[1]},\underbrace{(1+\FragSize[1],\cdots, \FragSize[1]+\FragSize[2])}_{\FragIndic[2]},\cdots, \underbrace{(1+\FragSize[1]+\cdots+\FragSize[\NFrag-1],\cdots,\DimH)}_{\FragIndic[\NFrag]} \bigg\}.
\end{equation} 

This partition corresponds to a decomposition of the space into $\NFrag$ fragment subspaces fulfilling
\begin{equation}\label{eq:dec_H1}
\HSpace = \Frag[1] \oplus \cdots \oplus \Frag[\NFrag] \quad \mbox{with} \quad \Frag := \Span(\AtomicVector, \; \kappa \in \FragIndic).
\end{equation}
For $M \in \mathbb{R}^{\DimH \times \DimH}_{\rm sym}$, we set
\begin{equation} \label{eq:PartitionProjDefinition}
  \PartitionProj(M):= \sum_{x=1}^{\NFrag} \Projector[x] M \Projector[x],
\end{equation}
where $\Projector[x]:=\Projector[\Frag]$ is the orthogonal projector on $\Frag$.
The operator $\PartitionProj \in \LinearMap(\mathbb{R}^{\DimH \times \DimH}_{\rm sym})$ is the orthogonal projector onto the set of block-diagonal matrices for the partition \eqref{eq:partition} (endowed with the Frobenius inner product).

\medskip

As we will see, a central intermediary in DMET is the diagonal blocks of the density matrix,
$P = \PartitionProj(D) \in \PartitionProj(\Grass)$. It is clear that these blocks
must satisfy $0 \le P_{x} \le 1$ and $\sum_{x=1}^{\NFrag} \Trace(P_x)=N$.
Conversely, it is easy to see that grouping these
blocks together into a block-diagonal matrix produces a matrix in
$\CHGrass$; therefore, we have
\begin{align}
  \DiagonalBlockCHGrass := \PartitionProj(\CHGrass)
  = \bigg\{ P = &
    \left(\begin{array}{cccc}
   P_1& 0 &\cdots & 0 \\
   0 &P_2 &\cdots &0\\
   \vdots & &\ddots &\vdots \\
   0 &0 &\cdots & P_{\NFrag}\\
   \end{array}\right) \nonumber \\ & \mbox{ s.t. } \forall 1 \le x \le \NFrag, \;  
P_x \in \R^{\FragSize \times \FragSize}_{\rm sym}, \; 0 \le P_x \le 1, \; \sum_{x=1}^{\NFrag} \Trace(P_x)=\NElec \bigg\} . \label{eq:def_PP}
\end{align}
From a geometrical viewpoint, $\DiagonalBlockCHGrass$ is a non-empty, compact, convex subset of an affine vector subspace of $\R^{\DimH \times \DimH}_{\rm sym}$ with base vector space
\begin{equation} \label{eq:defY}
  \cY := 
  \bigg\{ Y = 
    \left(\begin{array}{cccc}
   Y_1& 0 &\cdots & 0 \\
   0 &Y_2 &\cdots &0\\
   \vdots & &\ddots &\vdots \\
   0 &0 &\cdots & Y_{\NFrag}\\
   \end{array}\right) \mbox{ s.t. } \forall 1 \le x \le \NFrag, \;  
Y_x \in \R^{\FragSize \times \FragSize}_{\rm sym}, \sum_{x=1}^{\NFrag} \Trace(Y_x)=0 \bigg\} .
\end{equation}
The structure of the set $\PartitionProj(\Grass) \subset \DiagonalBlockCHGrass$ is a more subtle issue that we will investigate in Section~\ref{sec:ll_DMET}.

\subsection{The impurity high-level problem} 
\label{sec:impurity_pb}

Given one of the spaces $\Frag$ and a one-body density matrix $D \in \Grass$, we set:
\begin{equation}\label{eq:ImpuritySubspaceDefinition}
\Impurity{D}:=\Frag+D\Frag = D\Frag \oplus (1-D)\Frag \quad \mbox{($x$-th impurity subspace)}.
\end{equation}
We will assume in the following that 
\begin{equation}\label{eq:max_rank_assumption}
  \dim(D\Frag)= \dim((1-D)\Frag) = \dim(\Frag) = \FragSize \quad \mbox{(maximal-rank assumption)},
\end{equation}
so that $\dim(\Impurity{D}) = 2 \FragSize$. 
Decomposing $\Ran(D)$ and $\Ker(D)$ as
$$
\Ran(D) = D \Frag \oplus \Core{D}  \quad \mbox{and} \quad 
\Ker(D) = (1-D)\Frag \oplus  \Virt{D},
$$
we obtain the following decomposition of $\HSpace=\R^\DimH$:
$$
\HSpace = \Impurity{D} \oplus \underbrace{\Core{D} \oplus \Virt{D}}_{ =:\Env{D}}.
$$
Note that the space $\Core{D}$ has dimension $(\NElec - \FragSize)$. The matrix $D$ can be seen as the one-body density matrix associated with the Slater determinant
$$
\WFSlater = \WFimp  \wedge \WFcore  \quad \mbox{with} \quad 
\WFimp \in \bigwedge^{\FragSize}  D\Frag \quad \mbox{and} \quad 
\WFcore \in \bigwedge^{(\NElec-\FragSize)} \Core{D},
$$
where $\WFimp$ and $\WFcore$ are normalized.
More precisely, $\WFSlater$ is the Slater determinant built from an orthonormal basis of $\FragSize$ orbitals in $D \Frag$ and an orthonormal basis of $(\NElec-\FragSize)$ orbitals in $\Core{D}$. The so-defined wave-function $\WFSlater$ is unique up to an irrelevant sign. 

\medskip

We denote by $\NumberOpe_{\Frag} \in \LinearMap(\Fock(\HSpace))$ the projection of the number operator onto the fragment Fock space $\Fock(\Frag)$.
Solving the impurity problem aims at minimizing, for a given $\ChemicalPotential \in \mathbb{R}$ which will be specified later, the thermodynamic potential 
    \begin{equation}\label{eq:impurityPb1}
    \langle \WF  | (\Hamiltonian - \ChemicalPotential \NumberOpe_{\Frag} ) |\WF \rangle
    \end{equation}
over the set of normalized trial states in $\Fock(\HSpace)$ of the form
\begin{equation}\label{eq:impurityPb2}
 \WF = \WFimpTrial \wedge \WFcore
\end{equation}
 with $\WFcore$ fixed, and $\WFimpTrial$ in 
 \begin{equation}
\Fock(\Impurity{D}):= \bigoplus_{n=0}^{\FragSize} \bigwedge^n \Impurity{D} \quad  \mbox{($x$-th impurity Fock space)}. \nonumber
\end{equation}
The impurity Hamiltonian is the unique operator $\ImpurityHamiltonian{D}$ on $\Fock(\Impurity{D})$ such that
\begin{equation} \label{eq:def_impurity_Ham_1}
\forall \WFimpTrial \in \Fock(\Impurity{D}), \quad
\langle \WFimpTrial  | \ImpurityHamiltonian{D} | \WFimpTrial  \rangle =
\langle \WFimpTrial \wedge \WFcore | \Hamiltonian | \WFimpTrial \wedge \WFcore \rangle.
\end{equation}
For an explicit expression of $\ImpurityHamiltonian{D}$, see Proposition
\ref{prop:HimpDefinition}.

The impurity problem defined by \eqref{eq:impurityPb1}-\eqref{eq:impurityPb2} can then be reformulated as
    \begin{equation} \label{eq:ImpurityProblem}
    \min_{\WFimpTrial \in \Fock(\Impurity{D}), \lVert \WFimpTrial \rVert =1}
    \langle  \WFimpTrial |  \ImpurityHamiltonian{D} - \ChemicalPotential \NumberOpe_{\Frag} | \WFimpTrial \rangle \quad \mbox{(impurity problem)}.
    \end{equation}
In practice, this full-CI problem in the Fock space $\Fock(\Impurity{D})$ is solved by an approximate correlated wave-function method such as CASSCF, CCSD or DMRG for example, but we assume in this analysis that it can be solved exactly.

\medskip

If \eqref{eq:ImpurityProblem} has a non-degenerate ground state for all
$x$, we denote the one-body ground-state density matrices by $\GSDensity(D)$, seen as matrices in $\R^{\DimH \times \DimH}_{\mathrm{sym}}$, and finally set
\begin{equation}\label{eq:impurity_Fhl}
\boxed{\HLMapPartial(D):= \Projector[\Frag] \GSDensity(D)  \Projector[\Frag].}
\end{equation}
Let us remark incidentally that if the ground state of the impurity problem is degenerate, we can either consider $\HLMapPartial(D)$ as a multivalued function or define them from finite-temperature versions of \eqref{eq:ImpurityProblem}, which are strictly convex compact problems on the set of density operators on the Fock space, and therefore always have a unique minimizer. We will not proceed further in this direction and only consider here the case of impurity problems with non-degenerate ground states. 

The combination of the $\NFrag$ impurity problems introduced in \eqref{eq:ImpurityProblem} (see also \eqref{eq:impurity_Fhl}) gives rise to a high-level DMET map $\HLMap$ 
\begin{equation} \label{eq:HLmap0}
 \Grass \ni D \mapsto \HLMap(D) \in \DiagonalBlockCHGrass
\end{equation}
formally defined by 
\begin{equation}\label{eq:HLmap}
\boxed{\HLMap(D) := \sum_{x=1}^{\NFrag}  \HLMapPartial(D) \quad \mbox{(high-level map)} }
\end{equation}
with $\ChemicalPotential \in \mathbb{R}$ chosen such that $\Trace(\HLMap(D))=\NElec$. The domain of $\HLMap$ and the regularity properties of this map will be studied in Section~\ref{sec:hl_DMET}.

\subsection{The global low-level problem}
\label{sec:overview}

The low-level map is defined by 
\begin{equation}\label{eq:LowLevelMapDefinition}
  \boxed{\LLMap(P):= \mathop{\mathrm{argmin}}_{D \in \Grass ,\; \PartitionProj(D)=P} \EMF \left(  D \right) \quad \mbox{(low-level map)}, }
\end{equation}
where $\EMF$ is the Hartree-Fock (mean-field) energy functional of the trial density-matrix $D$. The latter reads 
\begin{equation} \label{eq:HF_functional}
\EMF(D):= \Trace(hD) + \frac 12 \Trace(J(D)D)-\frac 12 \Trace(K(D)D),
\end{equation}
where
\begin{equation} \label{eq:direct_exchange}
[J(D)]_{\kappa\lambda} := \sum_{\nu,\xi=1}^\DimH V_{\lambda\xi\kappa\nu} D_{\nu\xi} 
\quad \mbox{and} \quad 
[K(D)]_{\kappa\lambda}  := \sum_{\nu,\xi=1}^\DimH V_{\kappa\xi\nu\lambda} D_{\nu\xi}. 
\end{equation}
The existence and uniqueness of a minimizer
to~\eqref{eq:LowLevelMapDefinition} will be discussed in
Section~\ref{sec:ll_DMET}.

\subsection{The DMET problem}
\label{sec:DMET_pb}

Finally, the full DMET map is formally defined as the self-consistent solution to the system
\begin{empheq}[box=\widefbox]{align}
  D &= \LLMap(P) \in \Grass, \nonumber \\
  P &= \HLMap(D) \in \mathcal P. \nonumber
\end{empheq}
In particular, $D = \LLMap(P)$ implies that $P = \PartitionProj (D)$. Equivalently, we can formulate the problem as
$$
  \boxed{P = \DMETMap(P) := \HLMap(\LLMap(P)).}
$$

Assuming that the solution to this fixed-point problem exists and is unique, $P$ is expected to provide a good approximation of the diagonal blocks (in the decomposition \eqref{eq:dec_H1} of $\HSpace$) of the ground-state one-body density matrix of the interacting system.
The mathematical properties of this self-consistent loop will be studied in the next section, first for the non-interacting case, and second, for the interacting case in a perturbative regime.

\section{Main results}
\label{sec:main_results}

We now embed the Hamiltonian $H$ into the family of Hamiltonians
\begin{equation} \label{eq:Halpha}
\Hamiltonian_\alpha := \sum_{\kappa,\lambda=1}^\DimH h_{\kappa\lambda} \Ann[\kappa]^\dagger \Ann[\lambda] + \frac{\alpha}2  \sum_{\kappa,\lambda,\nu\xi=1}^\DimH V_{\kappa\lambda\nu\xi} \Ann[\kappa]^\dagger \Ann[\lambda]^\dagger \Ann[\xi] \Ann[\nu], \quad \alpha \in \R,
\end{equation}
acting on $\Fock(\HSpace)$.
For $\alpha=0$, we obtain the one-body Hamiltonian 
\begin{equation} \label{eq:def_H0}
\Hamiltonian_0 := \sum_{\kappa,\lambda=1}^\DimH h_{\kappa\lambda} \Ann[\kappa]^\dagger \Ann[\lambda]
\end{equation}
describing non-interacting particles, and we recover the original Hamiltonian $\Hamiltonian$ for $\alpha=1$. We denote by $\HLMap_{\alpha}$, $\LLMap_{\alpha}$, and $\DMETMap_{\alpha}$ the high-level, low-level, and DMET maps constructed from $\Hamiltonian_\alpha$.
\medskip

We first assume that the non-interacting problem is non-degenerate.
Denoting by $\varepsilon_n$ the $n$-th lowest eigenvalue of $h$ (counting multiplicities), this condition reads

\begin{description}
\item[(A1)] { $\varepsilon_N < 0 < \varepsilon_{N+1}$},
\end{description}
where without loss of generality we have chosen the Fermi level to be $0$. Assumption (A1) indeed implies that the ground-state of $\Hamiltonian_0$ in the $\NElec$-particle sector of the Fock space is non-degenerate, and that the ground-state one-body density is the rank-$\NElec$ orthogonal projector given by
\begin{equation}\label{eq:GS_DM}
\DGS=\1_{(-\infty,0]}(h).
\end{equation}
By perturbation theory, the ground state of $\widehat H_\alpha$ in the $\NElec$-particle sector is non-degenerate for all $\alpha \in (-\alpha_+, \alpha_+)$ for some $0 < \alpha_+ \le +\infty$. We denote by $D_\alpha^{\rm exact}$ the corresponding ground-state one-body density matrix.
As a consequence of analytic perturbation theory for hermitian matrices, the map $(-\alpha_+,\alpha_+) \ni \alpha \mapsto D_\alpha^{\rm exact} \in \R^{\DimH \times \DimH}_{\rm sym}$ is real-analytic.

\medskip

Second, we make the maximal-rank assumption:
\begin{description}
\item[(A2)] For all $1 \le x \le \NFrag$, $\dim(\DGS\Frag)= \dim((1-\DGS)\Frag) = \dim(\Frag) = \FragSize$.
\end{description}
Assumption (A2) implies that the impurity problem \eqref{eq:ImpurityProblem} for $\Hamiltonian=\Hamiltonian_0$ and $D=\DGS$ is well-defined for each $x$ and each $\mu$. 
We emphasize however that this does not prejudge that the so-obtained $\NFrag$ impurity problems are well-posed (i.e. have a unique ground-state) for a given value of $\mu$, nor {\it a fortiori} that $\DGS$ is in the domain of the high-level map $\HLMap_{0}$.
We will elaborate more on the meaning of Assumptions (A2) in Section~\ref{sec:hl_DMET}. 

\medskip

DMET is then consistent in the non-interacting case:

\begin{proposition}[$\PGS:=\PartitionProj(\DGS)$ is a fixed point of the DMET map for
  $\alpha=0$] 
\label{prop:DMET0} 
Under Assumptions (A1)-(A2), $\PGS:=\PartitionProj(\DGS)$ is a fixed point of the non-interacting DMET iterative scheme, i.e. $\PGS$ is in the domain of $\LLMap_0$, $\DGS$ is in the domain of $\HLMap_0$, and 
$\DMETMap_0(\PGS)=\PGS$.
 \label{prop:DGSFixedPoint}
\end{proposition}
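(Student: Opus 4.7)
The plan is to verify the two conditions separately: (i) $\PGS$ is in the domain of $\LLMap_0$ with $\LLMap_0(\PGS) = \DGS$, and (ii) $\DGS$ is in the domain of $\HLMap_0$ with $\HLMap_0(\DGS) = \PGS$. Together these imply $\DMETMap_0(\PGS) = \HLMap_0(\LLMap_0(\PGS)) = \PGS$.

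For step (i), at $\alpha = 0$ the interaction tensor $V$ vanishes and $\EMF$ reduces to the linear functional $D \mapsto \Trace(hD)$. By Assumption (A1), this functional attains its minimum over the convex set $\CHGrass$ at the unique rank-$\NElec$ spectral projector $\DGS = \1_{(-\infty,0]}(h) \in \Grass$. Since $\DGS$ a fortiori satisfies the constraint $\PartitionProj(\DGS) = \PGS$, it is the unique minimizer of the constrained problem~\eqref{eq:LowLevelMapDefinition}, giving $\LLMap_0(\PGS) = \DGS$.

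For step (ii), I would take $\ChemicalPotential = 0$ and analyze each impurity problem. Since $\Hamiltonian_0$ is quadratic, a direct computation in a basis adapted to the orthogonal splitting $\HSpace = \Impurity{\DGS} \oplus \Core{\DGS} \oplus \Virt{\DGS}$ shows, using the definition~\eqref{eq:def_impurity_Ham_1}, that $\ImpurityHamiltonian{\DGS}$ reduces up to an additive constant to the second-quantization on $\Fock(\Impurity{\DGS})$ of the one-body operator $h|_{\Impurity{\DGS}}$; the cross-terms involving $\WFcore$ vanish because the core orbitals are fully occupied and the virtual orbitals fully empty in $\WFcore$. The essential observation is then that $\DGS$ commutes with $h$, so for $v \in \DGS\Frag \subset \Ran(\DGS)$ and $w \in (1-\DGS)\Frag \subset \Ker(\DGS)$ one has $\langle v, hw\rangle = \langle v, h\DGS w\rangle = 0$. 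Hence $h|_{\Impurity{\DGS}}$ is block-diagonal with respect to the orthogonal decomposition $\Impurity{\DGS} = \DGS\Frag \oplus (1-\DGS)\Frag$, and by (A1) it is strictly negative on $\DGS\Frag$ and strictly positive on $(1-\DGS)\Frag$. The ground state of $\ImpurityHamiltonian{\DGS}$ on $\Fock(\Impurity{\DGS})$ at $\ChemicalPotential = 0$ is therefore non-degenerate: up to sign, it is the Slater determinant of any orthonormal basis of $\DGS\Frag$, a state with $\FragSize$ particles whose one-body density matrix is the orthogonal projector $\Pi_{\DGS\Frag}$.

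To conclude, for any $u \in \Frag$, the decomposition $u = \DGS u + (1-\DGS)u$ has $\DGS u \in \DGS\Frag$ and $(1-\DGS)u \in \Ker(\DGS) \subset \Ran(\DGS)^{\perp} \subset (\DGS\Frag)^{\perp}$, so $\Pi_{\DGS\Frag} u = \DGS u$. This yields $\HLMapPartial(\DGS) = \Projector[\Frag] \Pi_{\DGS\Frag} \Projector[\Frag] = \Projector[\Frag] \DGS \Projector[\Frag]$, and summing over $x$ gives $\HLMap_0(\DGS) = \PartitionProj(\DGS) = \PGS$. In particular $\Trace(\HLMap_0(\DGS)) = \NElec$, confirming that $\ChemicalPotential = 0$ is the correct chemical potential. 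The only delicate technical point is the explicit reduction of $\ImpurityHamiltonian{\DGS}$ to a one-body operator on $\Impurity{\DGS}$ (the non-interacting specialization of Proposition~\ref{prop:HimpDefinition}); once this is in hand, the whole argument boils down to the spectral analysis of $h|_{\Impurity{\DGS}}$ carried out above.
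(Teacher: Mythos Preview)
Your argument for step (i) matches the paper's exactly, and your spectral analysis in step (ii) is correct: it shows that at $\ChemicalPotential=0$ each impurity problem has a non-degenerate ground state and that $\mathcal F_{\DGS}(0)=\PGS$, hence $0\in M_{\DGS}$. This is essentially the first half of the paper's own proof of step (ii).

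However, there is a genuine gap. According to the paper's definition of the domain of $\HLMap$ (Section~\ref{sec:hl_DMET}), showing $\DGS\in{\rm Dom}(\HLMap_0)$ requires not only that $M_{\DGS}\neq\emptyset$, but also that the map $\mathcal F_{\DGS}$ is \emph{constant} on $M_{\DGS}$. Your proof establishes $0\in M_{\DGS}$ and computes $\mathcal F_{\DGS}(0)$, but never rules out the possibility that some other $\mu\neq 0$ also lies in $M_{\DGS}$ with $\mathcal F_{\DGS}(\mu)\neq\PGS$; the sentence ``confirming that $\ChemicalPotential=0$ is the correct chemical potential'' asserts uniqueness without justification. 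The paper closes this gap by a nontrivial monotonicity argument: it introduces $\zeta(\mu)=\sum_x\Trace({\mathfrak p}^x\1_{(-\infty,0]}({\mathfrak h}^x-\mu{\mathfrak p}^x))$, shows $\zeta$ is nondecreasing on $\R$ and real-analytic near $0$, and then proves $\zeta'(0)>0$ via the linear-response operator ${\mathfrak L}_x^+$ together with the bound $\langle{\mathfrak p}^x,{\mathfrak L}_x^+{\mathfrak p}^x\rangle\ge 2\gamma^{-1}\Trace(D_{0,x}(1-D_{0,x}))>0$ (the latter strict inequality using Assumption (A2)). This forces $M_{\DGS}=\{0\}$, hence $\mathcal F_{\DGS}$ is trivially constant and $\HLMap_0(\DGS)=\PGS$. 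Your proof needs an argument of this kind to be complete.
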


\medskip

\begin{remark} \label{rem:hl_HF}
We formally define the {\em high-level Hartree-Fock} map 
$$
\HLMap_{\rm MF} : \Grass \to \DiagonalBlockCHGrass,
$$
as the high-level map constructed from the Hartree-Fock $\NElec$-body Hamiltonian
$$
\Hamiltonian_D^{\rm HF}:= \sum_{\kappa,\lambda=1}^\DimH [h^{\rm HF}(D)]_{\kappa\lambda} \Ann[\kappa]^\dagger \Ann[\lambda],
$$
where
\begin{equation}\label{eq:Fock_Hamiltonian}
h^{\rm HF}(D) = h+J(D)-K(D)
\end{equation}
is the one-particle mean-field (Fock) Hamiltonian.
Using exactly the same arguments as in the proof of Proposition~\ref{prop:DGSFixedPoint}, we obtain that the low-level map $\LLMap$ satisfies the mean-field consistency property 
$$
\LLMap(\HLMap_{\rm HF}(D_*))=D_*,
$$
for any Hartree-Fock ground state $D_*$.
We will make use of this important observation in the proof of Theorem~\ref{thm:theory}.
\end{remark}

\medskip

We now study the DMET equations in the perturbative regime of $\alpha$ small.
In order to use perturbative techniques, we need to determine the space in which we seek $P$.
Generically, at $\alpha \neq 0$, we expect $P$ to be equal to the block diagonal of the one-body density matrix, which is not a projector.
Therefore it is natural to seek $P$ in $\mathcal P = \PartitionProj(\CHGrass)$.
However, in the DMET method, $D$ is constrained to be a projector, and therefore $P$ will necessarily belong to $\PartitionProj(\Grass)$.
We will study in Section~\ref{sec:ll_DMET} the relationship between the two sets $\mathcal P$ and $\PartitionProj(\Grass)$ (the $\NElec$-representability problem), and in particular show that, in the regime of interest to DMET (many relatively small fragments, so that $\DimH \gg \max_{x} \FragSize$), the two sets are (generically) locally the same.
Therefore, it is natural to assume the local $\NElec$-representability condition:
\begin{description}
\item[(A3)] The linear map $\PartitionProj$ is surjective from
  $\mathcal T_{D_{0}} \Grass$ to $\mathcal Y$,
\end{description}
where $\cY$ is the vector subspace defined in \eqref{eq:defY}.
Indeed, $\mathcal P$ is a (non-empty, compact, convex) subset of the affine space $\PGS+\cY$ and Assumption (A2) implies that $\PGS \in \dps\mathop{\mathcal P}^\circ$, where $\dps\mathop{\mathcal P}^\circ$ is the interior of ${\mathcal P}$ in $\PGS+\cY$.
Thus $\cY$ can be identified with the tangent space at $\PGS$ to the manifold $\dps\mathop{\mathcal P}^\circ$.
By the local submersion theorem, this implies that any $P$ in the neighborhood of $P_{0}$ can be expressed as the block diagonal of a density matrix in the neighborhood of $D_{0}$ in $\Grass$.

\medskip

Our last assumption is concerned with the response properties of the impurity problems at the non-interacting level. 
Consider a self-adjoint perturbation $Y \in \R^{\DimH \times \DimH}_{\rm sym}$ of the one-particle Hamiltonian $h$, non-local but  block-diagonal in the fragment decomposition, i.e. such that $Y \in \R I_L +\cY$, and denote by $\widetilde \HLMap_{h+Y}(D)$ the non-interacting high-level map obtained by replacing $h$ with $h+Y$ (so that $\widetilde \HLMap_{h}(D)=\HLMap_0$).
Formally, we have
\begin{equation}\label{eq:def_R_formal}
\widetilde \HLMap_{h+Y}(\DGS) = 
\PGS + RY + o(\lVert Y \rVert),
\end{equation}
with $R : \R I_L +\cY \to \cY$ linear (the fact that $RY \in \cY$ is due to particle-number conservation).
The map $R$ can be interpreted as a non-interacting static 4-point density-density linear response function for frozen impurity spaces.
It follows from Assumption (A1) that constant perturbations do not modify the density matrix: $R(I_L)=0$.
Our fourth assumption reads:
\begin{description}
\item[(A4)] the 4-point linear response function $R : \cY \to \cY$ is invertible.
\end{description}

This condition is somewhat reminiscent of the Hohenberg-Kohn theorem from Density Functional Theory.
Together with the local inversion theorem, it implies that, locally around $h$, in the non-interacting case and for frozen impurity spaces $\Impurity{\DGS}$, the high-level map defines a one-to-one correspondence between non-local fragment potentials (up to a constant shift) and fragment density matrices.

\medskip

\begin{remark}\label{rem:one_site_per_fragment}
We will show in Section~\ref{sec:A4} that in the case when $\NFrag=\DimH$ (one site per fragment), it holds: under Assumptions (A1)-(A2),
\begin{align*}
\mbox{(A3) is satisfied} &\implies  
\DGS \mbox{ is an irreducible matrix} \iff 
\mbox{(A4) is satisfied}.
\end{align*}  
On the other hand, numerical simulations indicate that in the general case, Assumption (A4) is not a consequence of Assumptions (A1)-(A3).
\end{remark}

\medskip

We are now in position to state our main results.

\begin{theorem}[DMET is well-posed in the perturbative regime]\label{thm:theory}  Under assumptions (A1)-(A4),
there exist $0 < \widetilde\alpha_+  \le \alpha_+$, and a neighborhood $\Omega$ of $\DGS$ in $\Grass$ such that for all $\alpha \in (-\widetilde\alpha_+,\widetilde\alpha_+)$, the fixed-point DMET problem
\begin{equation}
P^{\rm DMET}_\alpha = \HLMap_\alpha(D^{\rm DMET}_\alpha), \quad D^{\rm DMET}_\alpha = \LLMap_\alpha(P^{\rm DMET}_\alpha) \nonumber
\end{equation}
has a unique solution $(D^{\rm DMET}_\alpha,P^{\rm DMET}_\alpha)$ with $D^{\rm DMET}_\alpha \in \Omega$ (otherwise stated, the DMET map for $H_\alpha$ has a unique fixed point $P^{\rm DMET}_\alpha$ in the neighborhood of $\PGS$).
In addition, the maps $(-\widetilde\alpha_+,\widetilde \alpha_+) \ni \alpha \mapsto  D^{\rm DMET}_\alpha \in \R^{\DimH \times \DimH}_{\rm sym}$  and $(-\widetilde\alpha_+,\widetilde \alpha_+) \ni \alpha \mapsto  P^{\rm DMET}_\alpha \in \R^{\DimH \times \DimH}_{\rm sym}$  are real-analytic and such that 
\begin{equation}
\DGS^{\rm DMET}=\DGS = \1_{(-\infty,0]}(h), \quad \PGS^{\rm DMET}=\PGS = \PartitionProj(\DGS). \nonumber
\end{equation}
\end{theorem}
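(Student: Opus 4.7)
I would apply the real-analytic implicit function theorem to the map
\[
\Phi(\alpha, P) := P - \DMETMap_\alpha(P),
\]
viewed on a neighborhood of $(0, \PGS)$ in $\R \times (\PGS + \cY)$ and taking values in $\cY$. Proposition~\ref{prop:DMET0} already gives $\Phi(0, \PGS) = 0$, so the two remaining ingredients are: (i) $\Phi$ is real-analytic on a neighborhood of $(0, \PGS)$, and (ii) the differential $I_\cY - \partial_P \DMETMap_0(\PGS)$ is invertible on $\cY$. The IFT then produces a unique real-analytic branch $\alpha \mapsto P^{\rm DMET}_\alpha$ in the neighborhood, and $D^{\rm DMET}_\alpha := \LLMap_\alpha(P^{\rm DMET}_\alpha)$ is real-analytic by composition; the initial values $\DGS^{\rm DMET} = \DGS$ and $\PGS^{\rm DMET} = \PGS$ follow from $\LLMap_0(\PGS) = \DGS$, which is a by-product of Proposition~\ref{prop:DMET0}.

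\textbf{Regularity.} Real-analyticity of $\LLMap_\alpha(P)$ near $(0, \PGS)$ is obtained by applying the IFT to the KKT system of the constrained HF problem $\min\{\EMF(D) : D \in \Grass,\, \PartitionProj(D) = P\}$ with $\alpha$-dependent $\EMF$. At $(0, \PGS)$ the unconstrained minimizer $\DGS$ already satisfies the constraint, so the Lagrange multiplier vanishes; Assumption~(A3) is the constraint-qualification condition turning the feasible set into a smooth submanifold of $\Grass$ near $\DGS$, and (A1) provides a positive-definite reduced Hessian, so the KKT system is regular. Real-analyticity of $\HLMap_\alpha(D)$ near $(0, \DGS)$ uses (A2) to keep $\dim(\Impurity{D}) = 2\FragSize$ for every $x$ and all $D$ close to $\DGS$, together with non-degeneracy of the impurity ground states at $(\alpha, D) = (0, \DGS)$ (a consequence of (A1) obtained along the proof of Proposition~\ref{prop:DMET0}) and standard analytic perturbation theory; the scalar chemical-potential adjustment is handled by a one-dimensional IFT.

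\textbf{Invertibility (core step).} The chain rule gives $\partial_P \DMETMap_0(\PGS) = \partial_D \HLMap_0(\DGS) \circ \partial_P \LLMap_0(\PGS)$. Let $\chi_0 : Y \mapsto \delta D$ denote the non-interacting linear response of the ground-state projector of $h + Y$ at $Y = 0$. Linearizing the KKT system for $\LLMap_0$ at $(0, \PGS)$ yields
\[
\partial_P \LLMap_0(\PGS) = \chi_0|_\cY \circ (\PartitionProj \circ \chi_0|_\cY)^{-1},
\]
with $\PartitionProj \circ \chi_0|_\cY : \cY \to \cY$ invertible as part of the well-posedness of $\LLMap_0$. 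To relate $\partial_D \HLMap_0(\DGS)$ to the response function $R$ of Assumption~(A4), apply Proposition~\ref{prop:DMET0} to the perturbed Hamiltonian $h + Y$ with $Y \in \cY$ small (Assumptions (A1)--(A2) persist by continuity); denoting by $D^Y := \DGS + \chi_0 Y + O(Y^2)$ the ground-state projector of $h + Y$, this yields the identity
\[
\widetilde\HLMap_{h+Y}(D^Y) = \PartitionProj(D^Y).
\]
Expanding both sides to first order in $Y$ and splitting the left-hand side into the contribution of the shift $h \to h + Y$ with frozen impurity space $\Impurity{\DGS}$ (which equals $RY$ by the very definition of $R$) and the contribution of the shift $\DGS \to D^Y$ with frozen $h$ (which equals $\partial_D \HLMap_0(\DGS) \cdot \chi_0 Y$), one obtains the covariance identity
\[
\partial_D \HLMap_0(\DGS) \circ \chi_0|_\cY = \PartitionProj \circ \chi_0|_\cY - R.
\]
Composing with $(\PartitionProj \circ \chi_0|_\cY)^{-1}$ on the right gives $\partial_P \DMETMap_0(\PGS) = I_\cY - R\,(\PartitionProj \circ \chi_0|_\cY)^{-1}$, so that
\[
I_\cY - \partial_P \DMETMap_0(\PGS) = R\,(\PartitionProj \circ \chi_0|_\cY)^{-1},
\]
which is invertible by Assumption~(A4). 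This is exactly where (A4) is used, and it explains why invertibility of $R$ (not merely of some related quantity) is the natural condition.

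\textbf{Main obstacle.} The delicate step is the covariance identity above: one must disentangle, in the first-order expansion of $\widetilde\HLMap_{h+Y}(D^Y)$, the effect of changing $h \to h+Y$ inside the impurity Hamiltonian (with the impurity space $\Impurity{\DGS}$ frozen, which is $RY$ by definition of $R$) from the effect of changing $\DGS \to D^Y$, which acts both on the impurity space $\Impurity{D}$ and on the core state $\WFcore$ entering the definition~\eqref{eq:def_impurity_Ham_1} of the impurity Hamiltonian (and which produces $\partial_D \HLMap_0(\DGS) \cdot \chi_0 Y$). This bookkeeping is feasible because at $\alpha = 0$ the impurity Hamiltonian is one-body and both first-order variations admit explicit expressions in terms of the eigenvectors of $h$.
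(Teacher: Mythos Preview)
Your strategy is correct and matches the paper's: both apply the real-analytic implicit function theorem to $\Phi(\alpha,P)=\DMETMap_\alpha(P)-P$ (up to sign), and both establish the needed regularity of $\LLMap_\alpha$ via the KKT system (using (A1) for the coercive Hessian and (A3) for constraint qualification) and of $\HLMap_\alpha$ via analytic perturbation theory together with the one-dimensional IFT for~$\mu$.

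The genuine difference is in the invertibility step. The paper computes $d_{\DGS}\HLMap_0$ \emph{explicitly}: it differentiates $C^x(D)$ and the spectral projector, obtains $d_{\DGS}\HLMap_0=B+L$ for an explicit operator $L$ built from the impurity linear-response operators ${\mathfrak L}_x^+$, and then verifies, by a second calculation, that the algebraic object $LA^{-1}B^*$ coincides with the response function $R$ of Assumption~(A4); this yields $d_P\Phi(0,\PGS)=R(BA^{-1}B^*)^{-1}$. Your covariance argument short-circuits all of this: you differentiate the \emph{identity} $\widetilde\HLMap_{h+Y}(D^Y)=\PartitionProj(D^Y)$ (which is exactly Proposition~\ref{prop:DMET0} applied to the perturbed one-body Hamiltonian $h+Y$) and read off $R+d_{\DGS}\HLMap_0\circ\chi_0|_\cY=\PartitionProj\circ\chi_0|_\cY$ directly, with $R$ already being the object in (A4). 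Since $\chi_0|_\cY=-A^{-1}B^*$ and $\PartitionProj=B$, the two final formulas agree. Your route is shorter and conceptually cleaner (it never needs the internal structure of $d_{\DGS}\HLMap_0$, only its restriction to $\mathrm{Ran}(\chi_0|_\cY)$, which is exactly what enters $\partial_P\DMETMap_0(\PGS)$); the paper's route has the advantage of producing an explicit formula for $R$ in terms of the impurity Liouvillians ${\mathfrak L}_x^+$, which is what is used later in Section~\ref{sec:A4} to analyze (A4) in the one-site-per-fragment case.
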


As is standard, the first-order perturbation of the exact density matrix is given by the Hartree-Fock method.
DMET is able to reproduce this, and is therefore exact at first order:
\begin{theorem}[DMET is exact to first order] \label{thm:DMET_HF} Under Assumptions (A1)-(A4) and with the notation of Theorem~\ref{thm:theory}, it holds
\begin{align*}
D_{\alpha}^{\rm DMET} = D_{\alpha}^{\rm exact} + O(\alpha^{2}) = D_{\alpha}^{\rm HF} + O(\alpha^{2}),
\end{align*}
where $D_{\alpha}^{\rm HF}$ is the Hartree-Fock ground-state density matrix for $\widehat H_\alpha$, which is unique for $\alpha$ small enough.
\end{theorem}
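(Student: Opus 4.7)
The plan is to prove the two estimates $D_\alpha^{\rm exact}=D_\alpha^{\rm HF}+O(\alpha^2)$ and $D_\alpha^{\rm DMET}=D_\alpha^{\rm HF}+O(\alpha^2)$ separately. The first is the classical statement that first-order M{\o}ller--Plesset theory reproduces Hartree--Fock, and I would treat it by straightforward Rayleigh--Schr{\"o}dinger perturbation theory in the $\NElec$-particle sector: expanding $|\Psi_\alpha^{\rm exact}\rangle=|\Psi_0\rangle+\alpha|\Psi^{(1)}\rangle+O(\alpha^2)$ about the non-interacting Slater determinant, the Slater--Condon rules restrict $|\Psi^{(1)}\rangle$ to single and double excitations. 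Since the one-body reduced density operator couples $|\Psi_0\rangle$ only to single excitations, the Slater--Condon identity $\langle\Psi^i_a|(\Hamiltonian-\Hamiltonian_0)|\Psi_0\rangle=[J(\DGS)-K(\DGS)]_{ai}$ identifies the resulting first-order density-matrix correction with the linearization at $\alpha=0$ of the Hartree--Fock Roothaan equation.

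The core of the theorem is the DMET estimate, which I would prove by showing that $D_\alpha^{\rm HF}$ is an $O(\alpha^2)$-approximate fixed point of the DMET iteration and then invoking local stability from Theorem~\ref{thm:theory}. Introduce the auxiliary ``mean-field high-level map'' $F^{\rm HL,MF}_\alpha$, defined exactly like $\HLMap_\alpha$ but with $\Hamiltonian_\alpha$ replaced by the one-body Hartree--Fock Hamiltonian $\sum_{\kappa\lambda}[h+\alpha(J(D)-K(D))]_{\kappa\lambda}\Ann[\kappa]^\dagger\Ann[\lambda]$ of Remark~\ref{rem:hl_HF}. The $\alpha$-dependent version of that remark yields the exact identity $\LLMap_\alpha(F^{\rm HL,MF}_\alpha(D_\alpha^{\rm HF}))=D_\alpha^{\rm HF}$, so it is enough to prove
\begin{equation}
\HLMap_\alpha(D_\alpha^{\rm HF})=F^{\rm HL,MF}_\alpha(D_\alpha^{\rm HF})+O(\alpha^2). \nonumber
\end{equation}

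At $\alpha=0$, both maps solve the same one-body impurity problem associated with $\ImpurityHamiltonian{\DGS}-\ChemicalPotential\NumberOpe_{\Frag}$, whose ground state is a Slater determinant. At first order in $\alpha$, the DMET impurity Hamiltonian acquires a two-body perturbation (together with a one-body contribution coming from the core electrons), while the mean-field variant acquires the one-body perturbation obtained by restricting $\alpha(J(\DGS)-K(\DGS))$ to $\Impurity{\DGS}$. First-order perturbation theory on the impurity Fock space expresses the first-order correction to the impurity one-body density matrix as a sum over single excitations of the impurity Slater determinant, and the same Slater--Condon identity as in the first part shows that the matrix elements of the two perturbations against these single excitations coincide. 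The trace constraint fixing $\ChemicalPotential$ in~\eqref{eq:HLmap} matches in both theories and does not spoil the cancellation.

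Combining the last two facts with the Lipschitz continuity of $\LLMap_\alpha$ (inherited from the analyticity established in Theorem~\ref{thm:theory}) gives $\LLMap_\alpha(\HLMap_\alpha(D_\alpha^{\rm HF}))=D_\alpha^{\rm HF}+O(\alpha^2)$, and the local invertibility of $\mathrm{Id}-d(\LLMap_\alpha\circ\HLMap_\alpha)$ at $\DGS$ (precisely what underlies the implicit function argument of Theorem~\ref{thm:theory}) then forces $D_\alpha^{\rm DMET}=D_\alpha^{\rm HF}+O(\alpha^2)$. The main delicacy I foresee is the bookkeeping: the impurity space $\Impurity{\DGS}$, the core wave function $\WFcore$, and the chemical potential $\ChemicalPotential$ all inherit first-order $\alpha$-dependence through $D_\alpha^{\rm HF}=\DGS+O(\alpha)$, so the Brillouin cancellation must be carried out in a smoothly varying impurity basis. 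I would freeze the impurity basis at $\DGS$ and check that the induced first-order basis changes enter both high-level maps symmetrically, so that their difference remains $O(\alpha^2)$.
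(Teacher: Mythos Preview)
Your proposal is correct and rests on the same three ingredients as the paper's proof: the mean-field consistency identity of Remark~\ref{rem:hl_HF}, the Slater--Condon/Brillouin computation on the impurity problem showing that $\HLMap_\alpha$ and its mean-field counterpart agree to first order in~$\alpha$, and the invertibility established in Theorem~\ref{thm:theory}. The difference is organizational. The paper differentiates the two fixed-point relations $P^{\rm DMET}(\alpha)=\HLMap(\alpha,\LLMap(\alpha,P^{\rm DMET}(\alpha)))$ and $P^{\rm HF}(\alpha)=\HLMap_{\rm HF}(\alpha,\LLMap(\alpha,P^{\rm HF}(\alpha)))$ at $\alpha=0$ and compares the resulting linear equations term by term, separating the $\partial_\alpha$ and $\partial_D$ contributions; you instead package the same content as ``$D_\alpha^{\rm HF}$ is an $O(\alpha^2)$-approximate fixed point'' plus stability. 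The paper's separation makes your bookkeeping concern evaporate: since $\HLMap(0,D)=\HLMap_{\rm HF}(0,D)$ for all $D$ near $\DGS$, the equality $\partial_D\HLMap(0,\DGS)=\partial_D\HLMap_{\rm HF}(0,\DGS)$ is automatic, and only $\partial_\alpha\HLMap(0,\DGS)=\partial_\alpha\HLMap_{\rm HF}(0,\DGS)$ at \emph{frozen} $\DGS$ requires the Slater--Condon argument---no moving impurity basis to track. One small point to make explicit in your version: Theorem~\ref{thm:theory} gives invertibility of $I_\cY-(d_{\DGS}\HLMap_0)(d_{\PGS}\LLMap_0)$ on $\cY$, whereas you invoke invertibility of $I-(d_{\PGS}\LLMap_0)(d_{\DGS}\HLMap_0)$ on $T_{\DGS}\Grass$; the two are equivalent by the standard $(I-AB)\leftrightarrow(I-BA)$ identity, but this should be stated.
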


\medskip

The numerical simulations reported in the next section show that such exactness property is not expected to hold at second order.

In the weakly interacting regime, the solution $D_{\alpha}^\mathrm{DMET}$ to the DMET fixed-point problem is the only physical one because it is the only one laying in the vicinity of $\DGS$, where the exact ground-state density matrix is known to be by analytic perturbation theory.

\section{Numerical simulations}
\label{sec:numerics}

In this section, we perform numerical investigations of DMET for two distinct test systems:
The first system is H$_{10}$ in a circular geometry which serves as a benchmark where DMET has been previously recognized for its exceptional performance~\cite{DMET2013}. 
By studying this system, we aim to reaffirm the efficacy of DMET and numerically showcase that DMET is exact to first order in the non-interacting limit.
However, to gain a comprehensive understanding of DMET's limitations, we also explore a second system which is an H$_6$ variant. 
This particular system allows us to numerically scrutinize the assumptions made in the analysis presented above. 
Through these numerical investigations, we aim to provide valuable insights into the mathematical structure of DMET, paving the way for further advancements and improvements in this promising computational approach.
Throughout this section, we denote by $\|\cdot\|_{\rm F}$ the Frobenius norm on matrix spaces. 

\subsection{H$_{10}$ ring}
\label{sec:H10}

We consider a circular arrangement of ten hydrogen atoms, with a nearest-neighbor distance of 1.5~$a_0$ between each pair of atoms (where $a_0 \simeq 0.529$ {\AA} is the Bohr radius). The system is treated using the STO-6G basis set and is half-filled, i.e., containing ten electrons. We partition the system into five fragments, each consisting of two atoms, as shown in Figure \ref{fig:H10_circ_mod}. 

\begin{figure}[!ht]
    \centering
    \includegraphics[width = 0.25\textwidth]{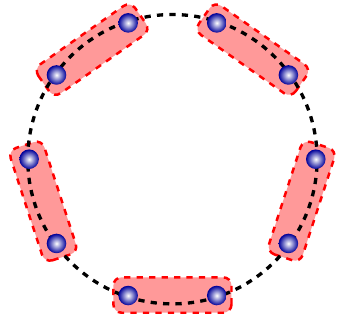}
    \caption{Depiction of the H$_{10}$ system in circular geometry. The red-shaded areas show the chosen fragmentation.}
    \label{fig:H10_circ_mod}
\end{figure}

In order to numerically confirm that DMET is exact to first order for this ``well-behaved'' system, we determine $P_\alpha$ for $\alpha \in [0,1]$ and compute $\Vert dP_\alpha/d\alpha \Vert_F$.  
Figure~\ref{fig:H6_firs_derv} compares the DMET result with the exact diagonalization result (abbreviated FCI). 
We clearly see that DMET is indeed exact to first order for the considered system.

\begin{figure}[!ht]
\centering
\begin{subfigure}[b]{0.48\textwidth}
    \centering
    \includegraphics[width=\textwidth]{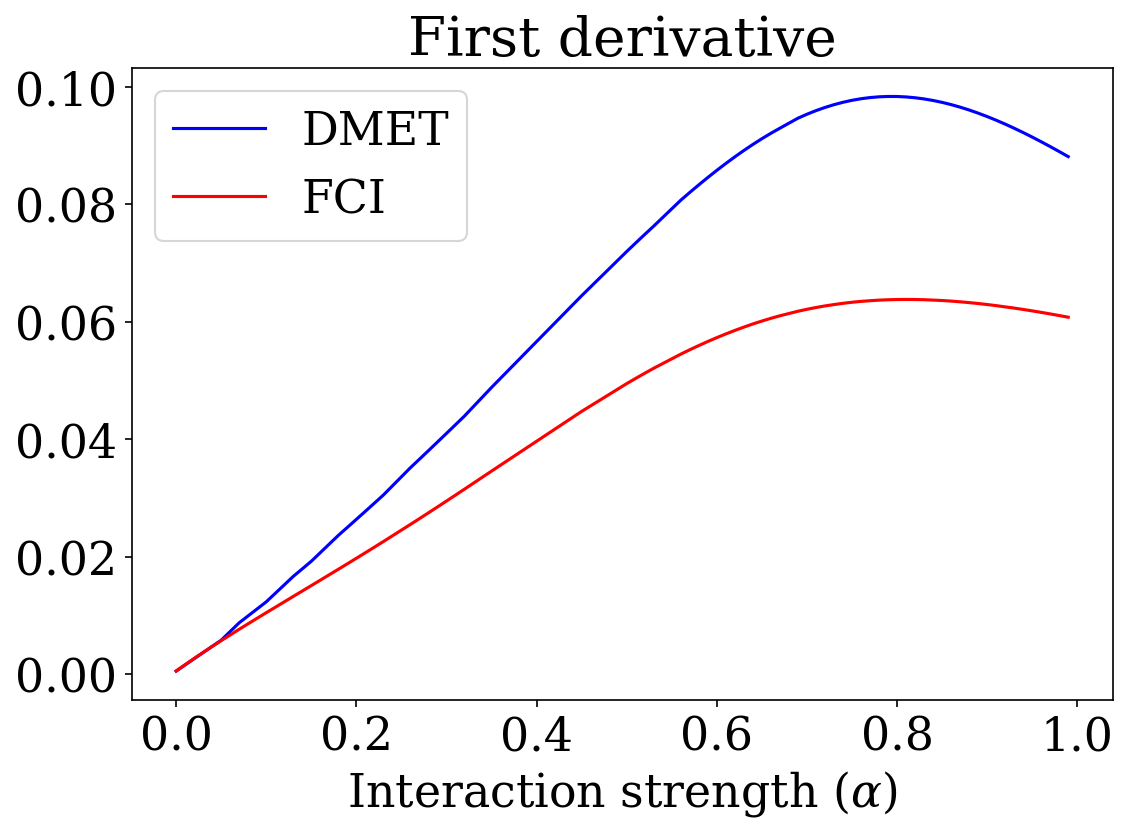}
    \caption{}
    \label{fig:h10_first_derivative}
\end{subfigure}
\hfill
\begin{subfigure}[b]{0.48\textwidth}
    \centering
    \includegraphics[width=\textwidth]{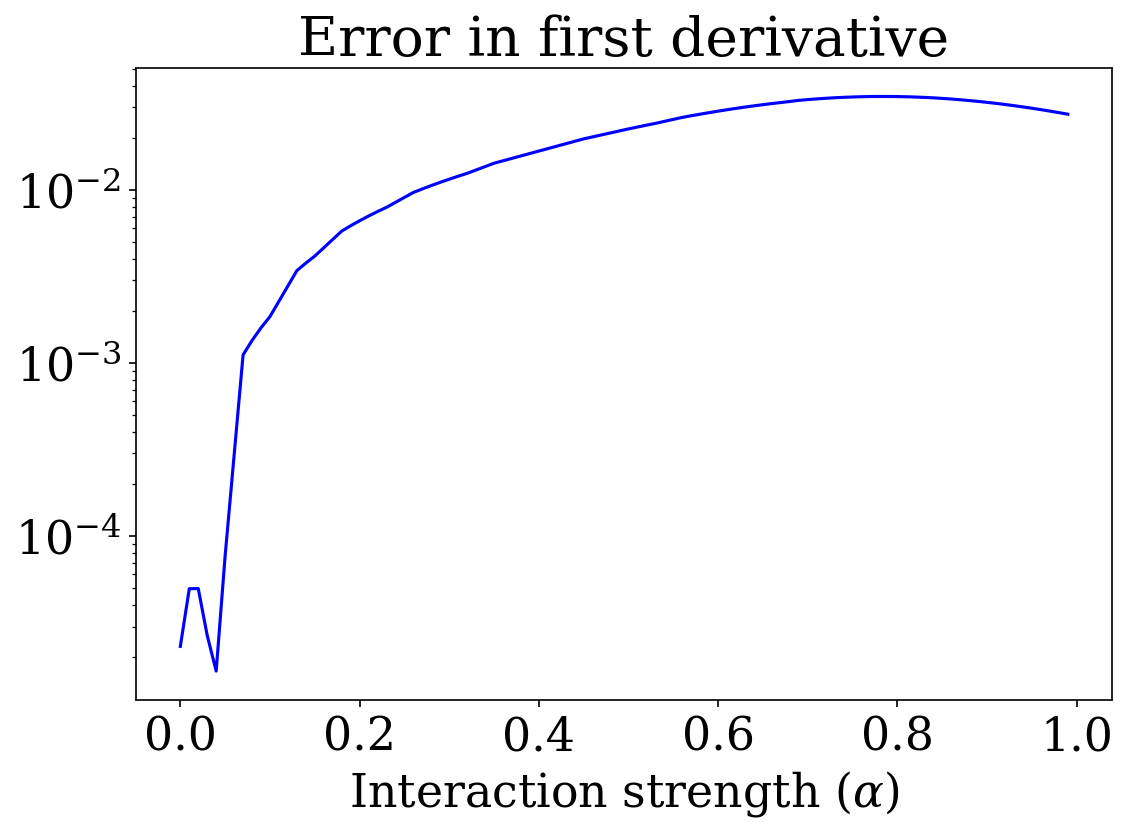}
    \caption{}
    \label{fig:h10_first_derivative_error}
\end{subfigure}
\caption{\label{fig:H6_firs_derv}
(\subref{fig:h10_first_derivative}) Shows $\Vert dP_\alpha/d\alpha \Vert_F$ for DMET and FCI, respectively 
(\subref{fig:h10_first_derivative_error}) Shows the error on $ dP_\alpha/d\alpha$ between DMET and FCI, measured in Frobenius norm.
}
\end{figure}

\subsection{H$_6$ model}
\label{sec:H6}

In this section, we will numerically investigate the assumptions required for the analysis presented in this article. 
To that end, we consider a non-interacting H$_6^{4-}$ system, undergoing the following transition on a circular geometry:
We begin by placing three hydrogen molecules in equilibrium geometry, i.e., bond length of $1.4\,a_0$, equidistantly on a circle of radius $3\,a_0$. We then dissociate each hydrogen molecule while maintaining a circular geometry. Specifically, we break each hydrogen molecule in such a way that the hydrogen atoms from neighboring molecules can form new molecules. We stop this transition at $\Theta=\Theta_{\rm max}$, when the hydrogen atoms from neighboring molecules form new hydrogen molecules in equilibrium geometry. We steer this transition with the angle $\Theta$ that measures the displacement of the individual hydrogen atoms relative to their initial positions. The dissociation is done in a manner that maintains the circular arrangement of the hydrogen atoms throughout the process, see Figure~\ref{fig:H6_example} for a schematic depiction of this process and a depiction of $\Theta$. The system is partitioned into 3 fragments that correspond to the initial molecules. Note that the fragments remain unchanged during the transition process.
In order to fulfill the $N$-representability condition \eqref{eq:NecCondLorRep} below (which is necessary for Assumption (A3) to be fulfilled), we dope the system with four additional electrons, i.e., 10 electrons in total. The system is discretized using the 6-31G basis set.

\begin{figure}[ht!]
    \centering
    \includegraphics[width=0.8\textwidth]{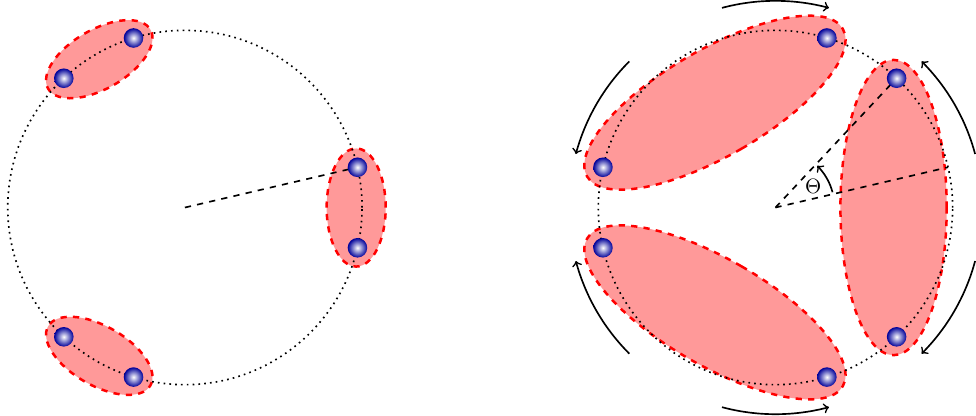}
    \caption{Schematic depiction of the considered H$_6$ transition. The left panel shows the initial configuration for $\Theta=0$; the right panel shows the final configuration $\Theta=\Theta_{\rm max}$. The red-shaded areas depict the imposed fragmentation. The arrows indicate the transition of the hydrogen atoms for $\Theta \in [0, \Theta_{\rm max}]$.}
    \label{fig:H6_example}
\end{figure}

In order to numerically depict Theorem~\ref{thm:DMET_HF}, we compute $P_\alpha$ and $D_\alpha$ using a mean-field theory approach (HF), DMET, and the exact diagonalization (FCI), and compare these quantities for $\alpha=0$ as well as their first derivatives with respect to $\alpha$.
Note that in the non-interacting limit, the mean-field theory is exact, which is reflected in our simulations. We indeed observe that $\sup_\Theta \Vert P_0^{\rm HF}(\Theta) - P_0^{\rm FCI}(\Theta) \Vert_F$ and $\sup_\Theta \Vert D_0^{\rm HF}(\Theta) - D_0^{\rm FCI}(\Theta) \Vert_F$ are equal to zero up to numerical accurary, while $\sup_\Theta \Vert P_0^{\rm DMET}(\Theta) - P_0^{\rm FCI} \Vert_F(\Theta)$, , $\sup_\Theta \Vert D_0^{\rm DMET}(\Theta) - D_0^{\rm FCI}(\Theta) \Vert_F$ are respectively of the order of $10^{-13}$ and $10^{-7}$ with the chosen convergence thresholds.
Figure~\ref{fig:H6_dervs} shows the first-order exactness of DMET in the non-interacting limit for the H$_6^{4-}$ model.

\begin{figure}[!ht]
\centering
\begin{subfigure}[b]{0.48\textwidth}
    \centering
    \includegraphics[width=\textwidth]{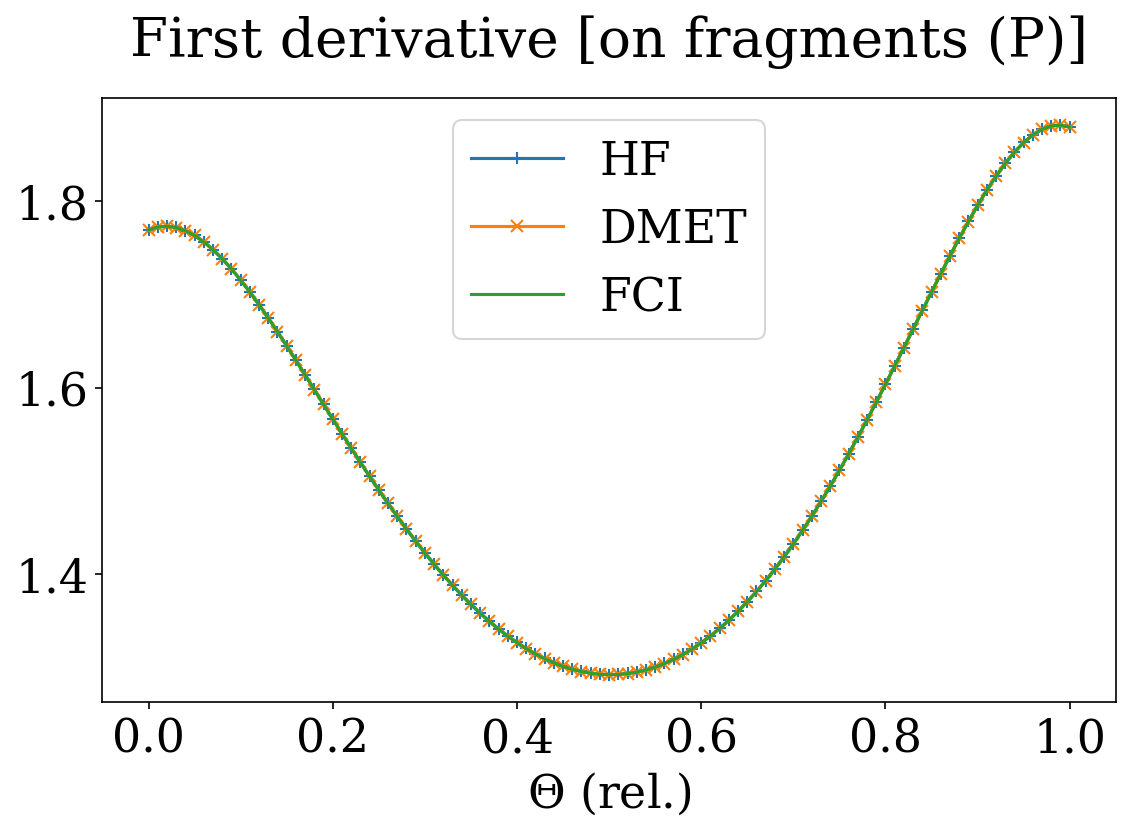}
    \caption{}
    \label{fig:H6_derv}
\end{subfigure}
\hfill
\begin{subfigure}[b]{0.48\textwidth}
    \centering
    \includegraphics[width=\textwidth]{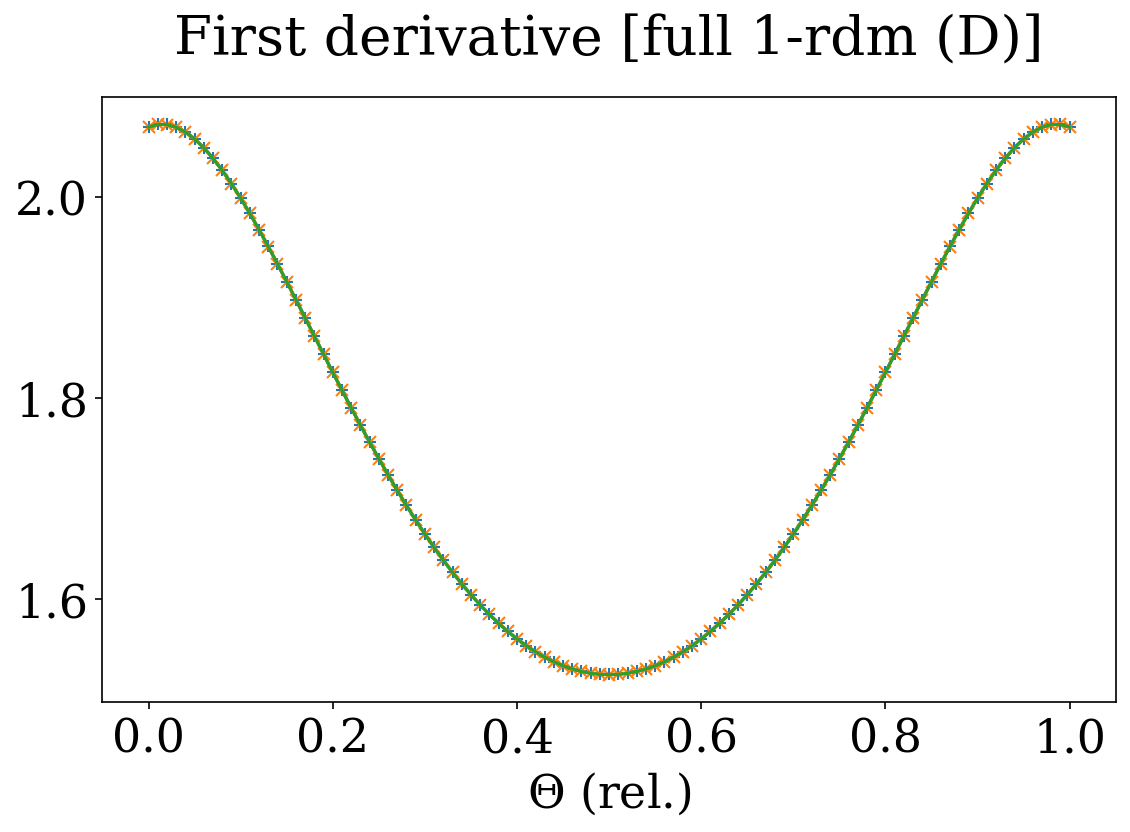}
    \caption{}
    \label{fig:H6_derv_full}
\end{subfigure}
\caption{\label{fig:H6_dervs}
(\subref{fig:H6_derv}) Shows $\Vert \partial_\alpha P_\alpha \big|_{\alpha = 0} \Vert_F$ for HF, DMET and FCI
(\subref{fig:H6_derv_full}) Shows $\Vert \partial_\alpha D_\alpha \big|_{\alpha = 0} \Vert_F$ for HF, DMET and FCI.
}
\end{figure}

Our numerical investigations include an analysis of Assumptions (A1)-(A4). 
We present a check of Assumptions (A1) and (A2) in Figure~\ref{fig:H6_ass1_2}. 
Assumption (A1) can be directly tested by calculating the HOMO-LUMO gap of the non-interacting Hamiltonian under consideration for each value of $\Theta$. 
Furthermore, Assumption (A2) can be tested by monitoring the behavior of the smallest and largest singular values of the matrix $P_0$ as a function of the variable $\Theta$ (see Lemma~\ref{lem:Omega_L}).

\pagebreak
\begin{figure}[!ht]
\centering 
\begin{subfigure}[b]{0.48\textwidth}
    \centering
    \includegraphics[width=\textwidth]{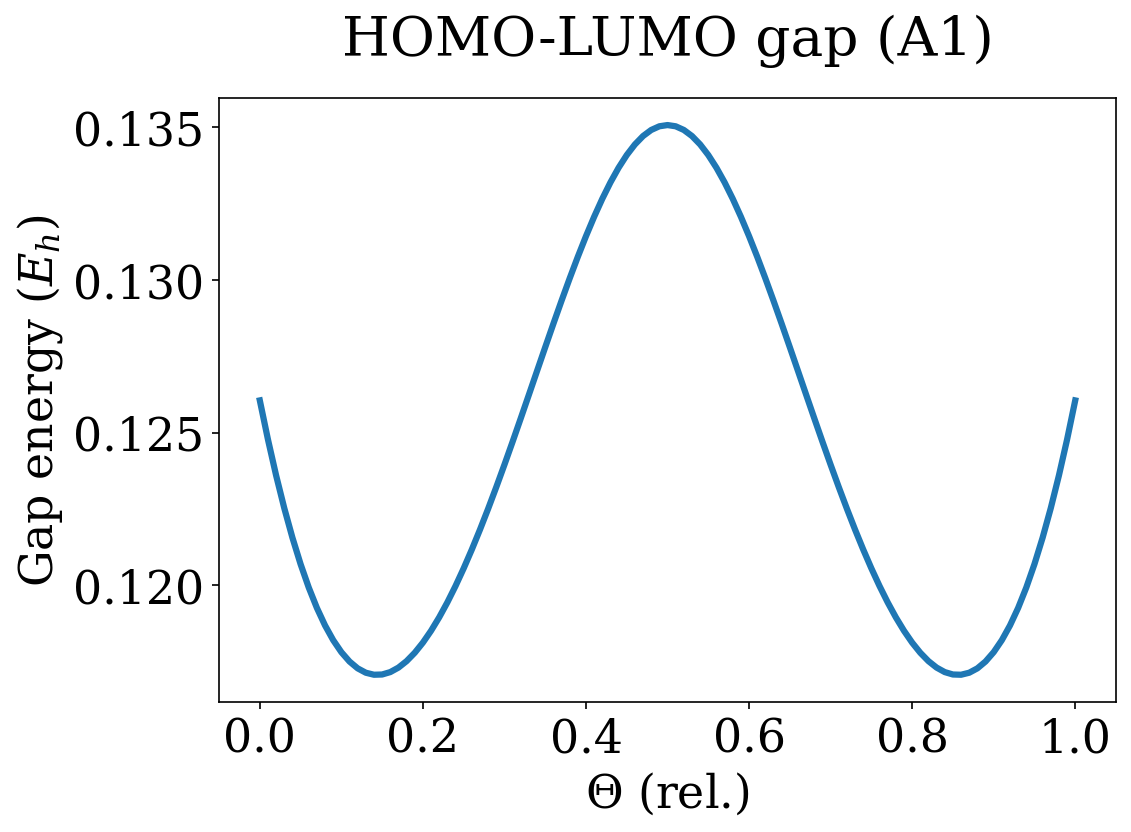}
    \caption{}
    \label{fig:H6_ass1}
\end{subfigure}
\hfill
\begin{subfigure}[b]{0.48\textwidth}
    \centering
    \includegraphics[width=\textwidth]{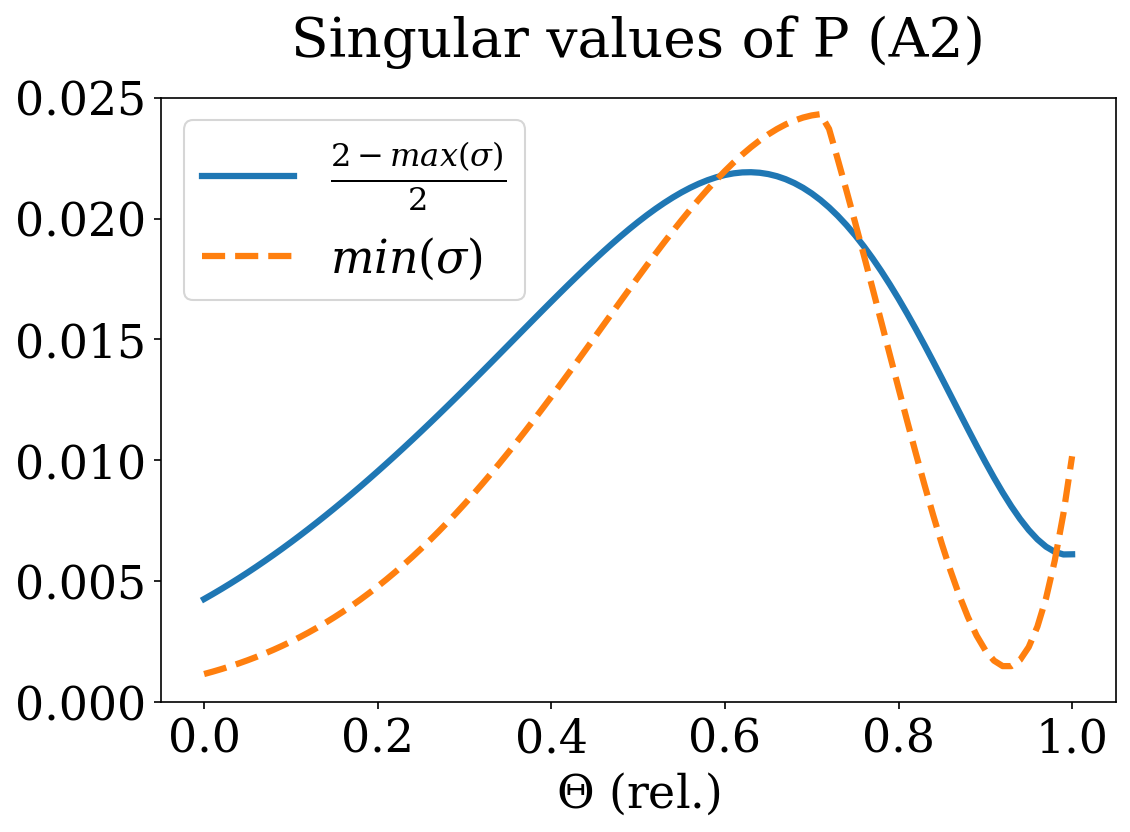}
    \caption{}
    \label{fig:H6_ass2}
\end{subfigure}
\caption{\label{fig:H6_ass1_2}
(\subref{fig:H6_ass1}) Shows the HOMO-LUMO gap for the H$_6$ model as a function of $\Theta$ for $\alpha = 0$. 
(\subref{fig:H6_ass2}) Shows the largest and smallest singular values of $P$ for the H$_6$ model as a function of $\Theta$ for $\alpha = 0$.
}
\end{figure}

The validity of assumptions (A3) and (A4) is tested in Figure~\ref{fig:H6_ass3_4} by monitoring the lowest eigenvalue of the operator $S:=({\rm Bd}|_{T_{D_0}\cD \to \cY})^*{\rm Bd}|_{T_{D_0}\cD \to \cY}$ (which corresponds to (A3)), and the smallest singular value of the operator $R|_{\cY \to \cY}$ (which corresponds to (A4)).

\begin{figure}[!ht]
\centering
\begin{subfigure}[b]{0.48\textwidth}
    \centering
    \includegraphics[width=\textwidth]{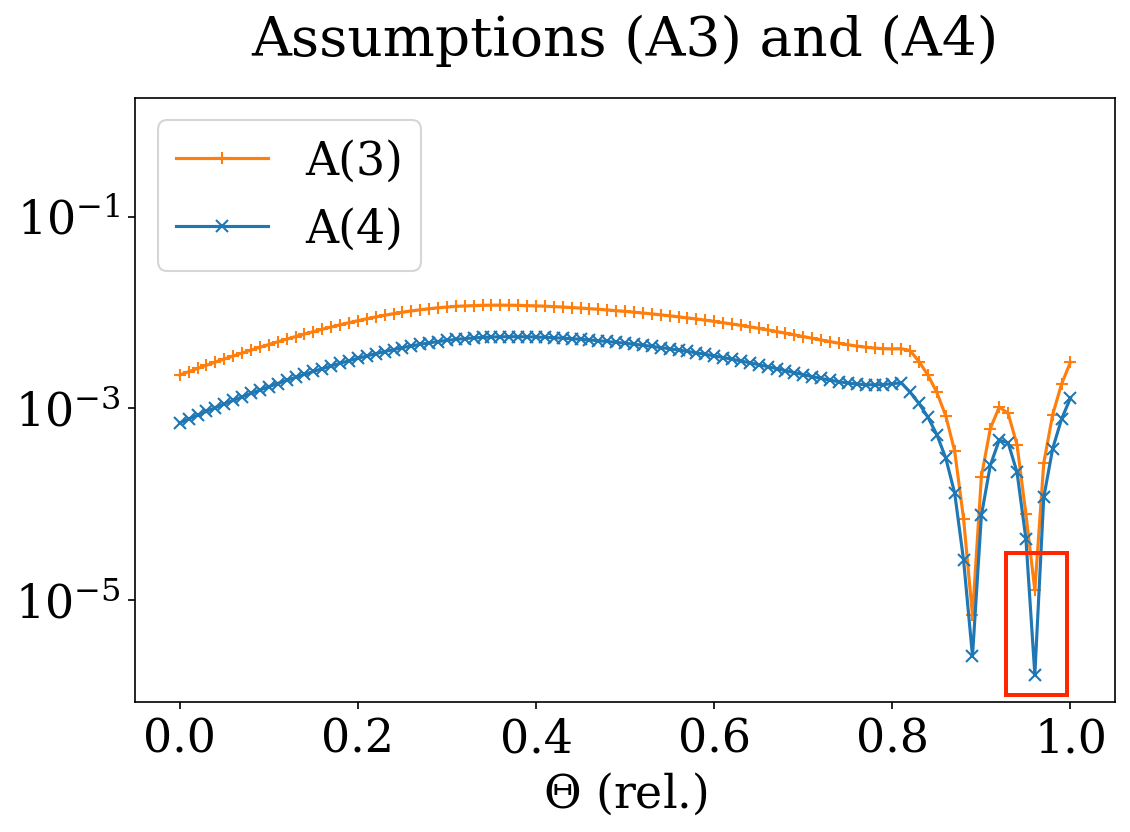}
    \caption{}
    \label{fig:H6_ass34}
\end{subfigure}
\hfill
\begin{subfigure}[b]{0.48\textwidth}
    \centering
    \includegraphics[width=\textwidth]{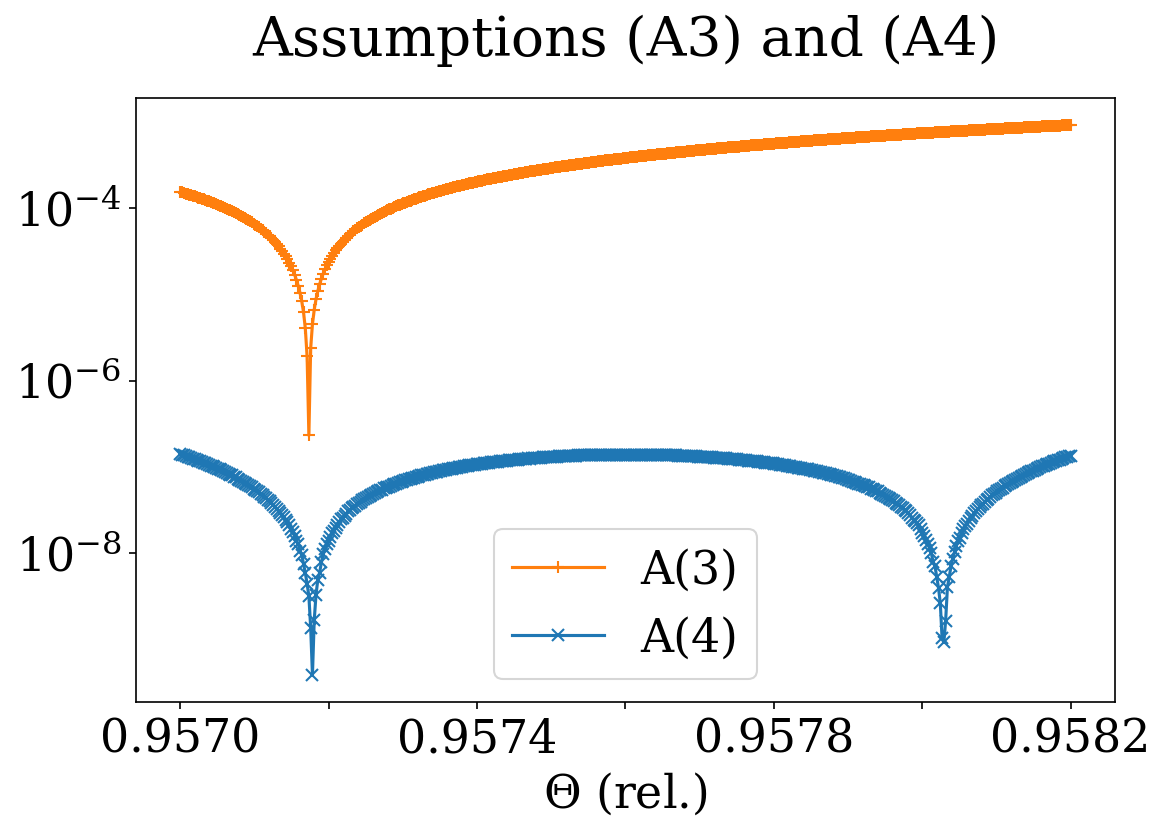}
    \caption{}
    \label{fig:H6_ass34_zoom}
\end{subfigure}
\caption{\label{fig:H6_ass3_4}
(\subref{fig:H6_ass34}) The orange line shows the lowest eigenvalue of $S:=({\rm Bd}|_{T_{D_0}\cD \to \cY})^*{\rm Bd}|_{T_{D_0}\cD \to \cY}$ for the H$_6$ model as a function of $\Theta$ for $\alpha = 0$ (which corresponds to (A3)), and the blue line shows the smallest singular value $\sigma_{\rm min}$ of $R|_{\cY \to \cY}$ (which corresponds to (A4)).
(\subref{fig:H6_ass34_zoom}) Shows a zoomed version of (\subref{fig:H6_ass34}) around the second (local) minimum.
}
\end{figure}

We see that Assumptions (A1) and (A2) are uniformly fulfilled over the whole range $[0,\Theta_{\rm max}]$. Assumption (A3) seems to be satisfied for all $\Theta$ except two values $\Theta_1 \simeq 0.885$ and  $\Theta_2 \simeq 0.957$. Careful testing around $\Theta_2$ shows that Assumption (A4) is additionally not satisfied at $\Theta_3 \simeq 0.958$, where all other assumptions are satisfied. This illustrates the fact that in the general case $N_f < L$, Assumption (A4) is independent of Assumptions (A1)-(A3) (see Remark~\ref{rem:one_site_per_fragment}). 

Figure~\ref{fig:H6_sderv} shows the Frobenius norms of the second derivative of $P_\alpha$ and $D_\alpha$ at $\alpha=0$ for HF, DMET, and FCI. We see that the three methods give different results, and that the result in Theorem~\ref{thm:DMET_HF} is therefore optimal. We also observe that for DMET, the second derivatives become noisy in the range of $\Theta$'s where Assumptions (A3) and (A4) are poorly or not satisfied. This is probably due to conditioning issues or to the use of convergence thresholds not directly connected to the computed quantity of interest. The numerical analysis of DMET is left for future work.

\pagebreak
\begin{figure}[!ht]
\centering
\begin{subfigure}[b]{0.48\textwidth}
    \centering
    \includegraphics[width=\textwidth]{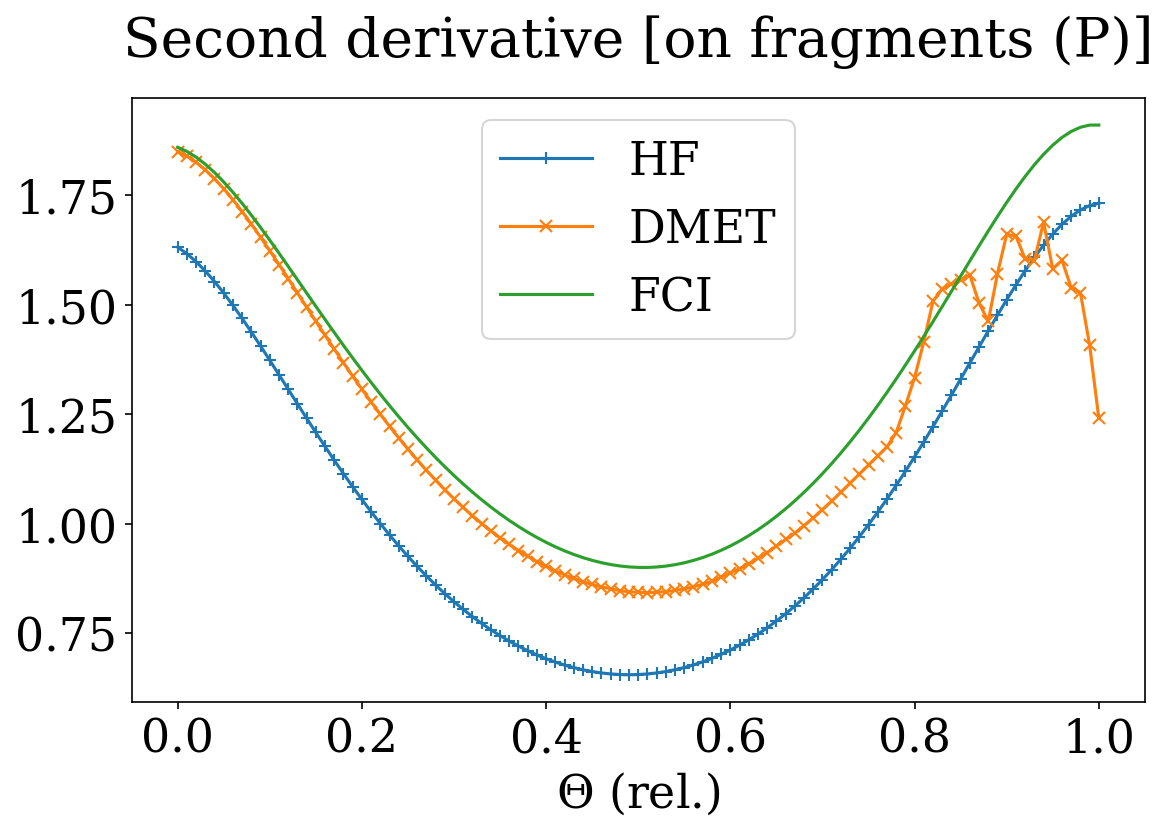}
    \caption{}
    \label{fig:H6_sderv}
\end{subfigure}
\hfill
\begin{subfigure}[b]{0.48\textwidth}
    \centering
    \includegraphics[width=\textwidth]{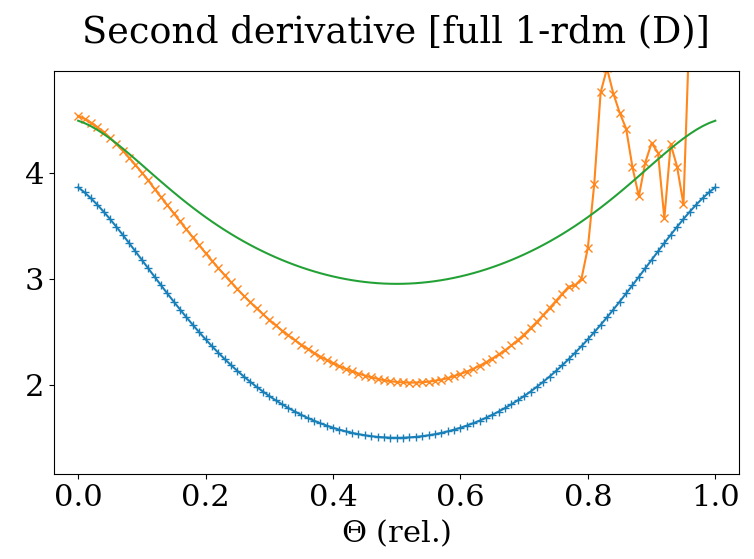}
    \caption{}
    \label{fig:H6_sderv_full}
\end{subfigure}
\caption{\label{fig:H6_sdervs}
(\subref{fig:H6_sderv}) shows $\Vert \partial^2_\alpha P_\alpha \big|_{\alpha = 0} \Vert_F$ for HF, DMET and FCI
(\subref{fig:H6_sderv_full}) shows $\Vert \partial^2_\alpha D_\alpha \big|_{\alpha = 0} \Vert_F$ for HF, DMET and FCI.
}
\end{figure}

We now investigate more closely the violation of the hypotheses at
$\Theta_{3}$, where $R$ is not invertible, but (A3) is still
satisfied. To that end, we compute the differential of $\DMETMap_0(\PGS)$
at $\PGS$, as a function of $\Theta$, and see that for $\Theta$ close to $\Theta_3$, $\DMETMap_0(P_0(\Theta))$ possesses a simple real eigenvalue 
which transitions from being positive (for $\Theta < \Theta_{3}$) to being negative (for $\Theta > \Theta_{3}$), with all other eigenvalues having negative real parts. As
is standard, this type of eigenvalue crossing generically gives rise
to a transcritical bifurcation \cite{strogatz2018nonlinear}. This
suggests the existence of another branch of solutions $P_{1}(\Theta)$ of $P = \DMETMap_0(P)$,
which collides with $P_{0}(\Theta)$ at $\Theta=\Theta_{3}$, and such
that the largest eigenvalue of the differential of $\DMETMap_0$ at
$P_{0}$ has the opposite sign to that at $P_{1}$.

To find this branch of solutions, we employ a Newton algorithm on
$\DMETMap_0$. Since we are looking at small differences, this requires
accurate computations of $\HLMap_{0}$ and $\LLMap_{0}$ as well as
their differentials (without resorting to finite differences). The
differential of $\HLMap_{0}$ is computed analytically by perturbation
theory (taking into account the self-consistent Fermi level). For
$\LLMap_{0}$, we implemented a manifold Newton algorithm to compute an
accurate solution of the problem defining the low-level solver. This
is done by, starting from the point $D_{n}$, parametrizing $D_{n+1}$
as $D(X)$ with an unconstrained matrix $X$ as in the proof of Lemma
\ref{lem:N-rep-loc}, and then performing a Newton step on the
Lagrangian $L(X,\Lambda)$ that corresponds to minimizing
$\EMF \left( D(X) \right)$ subject to $\PartitionProj(D(X))=P$. From
the Hessian of the Lagrangian one can also compute the differential of
$\LLMap_{0}$, and then ultimately of $\DMETMap_0$.

To initialize the Newton algorithm on $\DMETMap_{0}$ at a given
$\Theta$ close to $\Theta_{3}$, we start from $P_{0}$, and compute the eigenvector $Y$ of
$d\DMETMap_{0}$ associated with the eigenvalue that crosses zero. Then,
we run a Newton algorithm started from
$P_{0} + \alpha (\Theta-\Theta_{3}) Y$, where $\alpha$ is an
empirically chosen parameter (its precise determination involves
higher derivatives \cite{strogatz2018nonlinear}, which are cumbersome
to compute). We observe the two branches $P_{0}$ and $P_{1}$ shown in Figure
\ref{fig:bifurcation}, confirming the transcritical bifurcation. Let us emphasize that this bifurcation is not due to symmetry breaking, as can be shown from a detailed analysis of the solutions $P_{0}$ and $P_{1}$ (see Appendix~\ref{sec:appendixB}).

\begin{figure}[ht!]
    \centering
    \includegraphics[width=.6\textwidth]{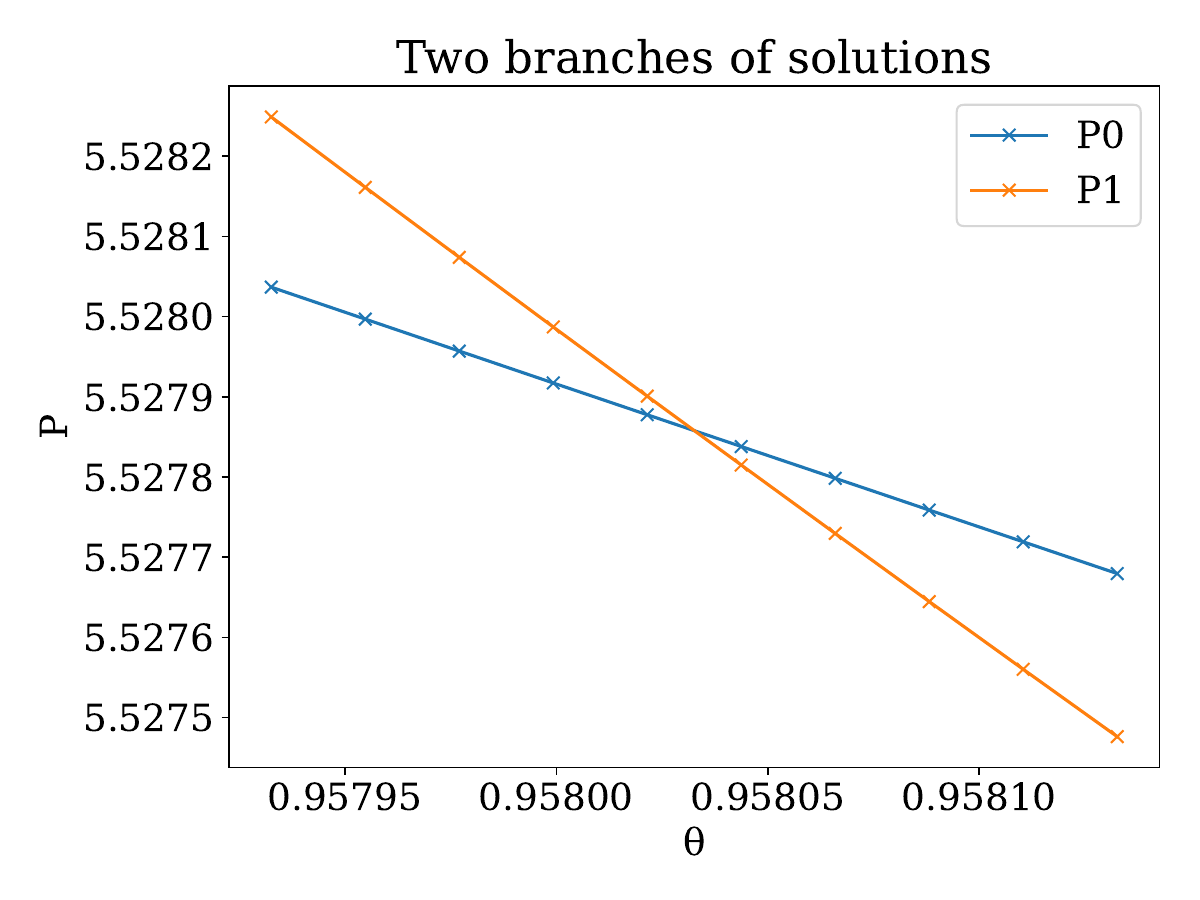}
    \caption{The two branches $P_{0}$ and $P_{1}$ (displayed are
      the scalars $\sum_{ij} P_{ij}$) as functions of $\Theta$ near $\Theta=\Theta_{3}$.
      }
    \label{fig:bifurcation}
\end{figure}

\pagebreak

\section{Impurity problems and high-level map}
\label{sec:hl_DMET}

\subsection{Impurity Hamiltonians}

It follows from the considerations in Section~\ref{sec:impurity_pb} that if 
\begin{equation}
\forall x \in \llbracket1,\NFrag\rrbracket, \quad \dim(D\Frag)=\dim((1-D)\Frag)= \FragSize. \nonumber
\end{equation}
the impurity problem is well-defined for each fragment since the maximal rank assumption~\eqref{eq:max_rank_assumption} is satisfied for each $\Frag$.
The next lemma gives useful equivalent characterizations of these conditions.

\medskip

Let us introduce the matrix
\begin{equation} \label{eq:def_Ex}
\MatFrag := \mat_{\AtomicBasis}(\AtomicVector, \kappa \in \FragIndic) =  \left( \begin{array}{c}  0_{\FragSize' \times \FragSize} \\  I_{\FragSize} \\  0_{\FragSize'' \times \FragSize} \end{array} \right) \in \R^{\DimH \times \FragSize}
\quad \mbox{with } \left\lbrace \begin{matrix} 
\FragSize':= \sum_{1 \leq x' < x } \FragSize[x']\\
\FragSize'' := \sum_{x < x' \le \NFrag} \FragSize[x'] \\
\end{matrix}\right.
\end{equation}
representing the orbitals of fragment $x \in \llbracket 1, \NFrag \rrbracket$, whose range is $X_{x}$.
We recall that $\dps\mathop{\mathcal P}^\circ$ denotes the interior of the set $\mathcal P=\PartitionProj(\CHGrass)$ in the affine space $\PGS+\cY$. 

\medskip

\begin{lemma}[Compatibility conditions] \label{lem:Omega_L}
Let $D \in \Grass$. The following assertions are equivalent: 
\begin{enumerate}
    \item $\PartitionProj(D) \in \dps\mathop{\mathcal P}^\circ$;
    \item $\forall x \in \llbracket1,\NFrag\rrbracket, \quad \dim(D\Frag)=\dim((1-D)\Frag)= \FragSize$;
    \item $\forall x \in \llbracket 1,\NFrag \rrbracket, \quad 0 < \MatFrag^T D \MatFrag < 1 \quad \mbox{(all the eigenvalues of  $\MatFrag^T D \MatFrag$ are in $(0,1)$)}$; 
    \item $\forall x \in \llbracket 1,\NFrag \rrbracket, \quad \MatFrag^T D \MatFrag \in \mathrm{GL}_\R(\FragSize) \quad \mbox{and} \quad \MatFrag^T (1-D) \MatFrag \in \mathrm{GL}_\R(\FragSize)$.
\end{enumerate}
If $D$ satisfies these conditions, we say that it is {\em compatible with the fragment decomposition}.
\end{lemma}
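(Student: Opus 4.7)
The plan is to prove the chain 1 $\Leftrightarrow$ 3 $\Leftrightarrow$ 4 $\Leftrightarrow$ 2, keeping in mind that the matrix $\MatFrag^T D \MatFrag \in \R^{\FragSize \times \FragSize}$ is nothing but the $x$-th diagonal block $P_x$ of $P=\PartitionProj(D)$ expressed in the canonical basis of $\Frag$. All four conditions are therefore really statements about the diagonal blocks $P_x$.

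First I would establish 1 $\Leftrightarrow$ 3. By the definition~\eqref{eq:def_PP} of $\mathcal P$, the set $\mathcal P$ is cut out inside the affine space $\PGS + \cY$ by the inequalities $0 \le P_x \le 1$ for every $x$ (the trace condition $\sum_x \Trace(P_x) = \NElec$ is built into the affine space itself). Its interior in $\PGS + \cY$ therefore consists exactly of the block-diagonal matrices satisfying $0 < P_x < 1$ strictly for every $x$, which is precisely condition 3 written in terms of $\MatFrag^T D \MatFrag = P_x$.

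Next I would handle 3 $\Leftrightarrow$ 4. Since $D\in\Grass$ is an orthogonal projector, $0 \le D \le 1$, so $0 \le \MatFrag^T D \MatFrag \le 1$ and $0 \le \MatFrag^T(1-D)\MatFrag = I_{\FragSize} - \MatFrag^T D \MatFrag \le 1$. Then 3 clearly implies 4. Conversely, if both matrices in 4 are invertible, then $\MatFrag^T D \MatFrag$ has no eigenvalue equal to $0$ (first invertibility) and no eigenvalue equal to $1$ (second invertibility), and combined with $0 \le \MatFrag^T D \MatFrag \le 1$ this gives condition 3.

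Finally I would prove 2 $\Leftrightarrow$ 4 via the key identity that, using $D^2=D=D^T$, for any $w \in \R^{\FragSize}$ and $v = \MatFrag w \in \Frag$,
\begin{equation*}
\|Dv\|^2 = v^T D^T D v = v^T D v = w^T \MatFrag^T D \MatFrag w,
\end{equation*}
so $\Ker(D|_{\Frag}) = \MatFrag \Ker(\MatFrag^T D \MatFrag)$. By the rank-nullity theorem, $\dim(D\Frag) = \FragSize$ if and only if $\MatFrag^T D \MatFrag$ is invertible. The same argument applied to the projector $1-D$ yields $\dim((1-D)\Frag) = \FragSize$ if and only if $\MatFrag^T (1-D) \MatFrag$ is invertible, giving 2 $\Leftrightarrow$ 4.

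None of the four implications is really an obstacle; the whole lemma is elementary linear algebra once one identifies $\MatFrag^T D \MatFrag$ with $P_x$ and uses the projector identity $D^2=D$ to relate the kernel of $D|_{\Frag}$ to the kernel of $\MatFrag^T D\MatFrag$. The only minor care needed is in 1 $\Leftrightarrow$ 3, where one must argue that the trace constraint $\sum_x\Trace(P_x)=\NElec$ is the defining equation of the ambient affine space $\PGS+\cY$ and therefore does not contribute additional boundary conditions, so that the interior of $\mathcal P$ relative to this affine space is indeed characterized by strict versions of the remaining inequalities $0\le P_x\le 1$.
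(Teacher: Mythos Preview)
Your proposal is correct and follows essentially the same approach as the paper: both use the projector identity $y^T(\MatFrag^T D \MatFrag)y = |D\MatFrag y|^2$ to link the dimension condition (2) to the spectral/invertibility conditions (3)--(4), and both characterize $\mathring{\mathcal P}$ by the strict inequalities $0 < P_x < 1$. The only cosmetic difference is that the paper links (2) directly to (3) whereas you link (2) to (4), but the underlying argument is identical.
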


It is easily seen that if $D$ is compatible with the fragment decomposition, then the column vectors defined by the matrix
\begin{equation}\label{eq:matrix_C}
C^x(D) := \bigg(D\MatFrag \left(\MatFrag^T D \MatFrag \right)^{-1/2} \, \bigg| (1-D) \MatFrag \left(\MatFrag^T (1-D) \MatFrag \right)^{-1/2} \bigg) \in \R^{\DimH \times 2\FragSize}
\end{equation}
form an orthonormal basis of the impurity one-particle state space $\Impurity{D}$ defined in \eqref{eq:ImpuritySubspaceDefinition}.
More precisely, the first $\FragSize$ columns of $C^x(D)$ form an orthonormal basis of $D\Frag$ and its last $\FragSize$ columns form an orthonormal basis of $(1-D)\Frag$. Likewise, the column vectors of the matrix
\begin{equation}\label{eq:matrix_C_tilde}
\widetilde C^x(D) := \bigg(\MatFrag  \, \bigg| (1-\Projector[x]) D \MatFrag \left(\MatFrag^T D (1-\Projector[x]) D \MatFrag \right)^{-1/2} \bigg) \in \R^{\DimH \times 2\FragSize}
\end{equation}
form an orthonormal basis of $\Frag\oplus (1-\Projector[x])D\Frag$.

\medskip

We denote by $\Ann[j]^x(D)$ and $\Ann[j]^x(D)^\dagger$, $1 \le j \le 2 \FragSize$
the annihilation and creation operators in the basis of the columns
of $C^x(D)$:
\begin{align*}
  \Ann[j]^x(D) = \sum_{\kappa=1}^{\DimH}  (C^x(D))_{\kappa j}\Ann[\kappa], \quad \Ann[j]^x(D)^\dagger = \sum_{\kappa=1}^{\DimH}  (C^x(D))_{\kappa j}\Ann[\kappa]^\dagger.
\end{align*}
These operators allow for an explicit form of the impurity Hamiltonian $\ImpurityHamiltonian{D}$ as follows.
\begin{proposition}[Impurity Hamiltonian]\label{prop:HimpDefinition}
Let $D \in \Grass$ be compatible with the fragment decomposition. The $x$-th impurity Hamiltonian $\ImpurityHamiltonian{D}$
is the operator on $\Fock(\Impurity{D})$ given by
\begin{align} \label{eq:impurity_Hamiltonian}
\ImpurityHamiltonian{D} &= E^{\mathrm{env}}_x(D)+ \sum_{i,j=1}^{2\FragSize} \left[ C^x(D)^T \left(h+J({\mathfrak D}^x(D))-K({\mathfrak D}^x(D))\right) C^x(D) \right]_{ij} \Ann[i](D)^\dagger \Ann[j](D)  \nonumber \\
& +\frac 12 \sum_{i,j,k,\ell=1}^{2\FragSize} [V^x(D)]_{ijkl}  \Ann[i](D)^\dagger \Ann[j](D)^\dagger \Ann[\ell](D)  \Ann[k](D),
\end{align}
where 
\begin{itemize}
\item the Coulomb and exchange matrices $J({\mathfrak D}^x(D)) \in \R^{\DimH \times \DimH}$ and $K({\mathfrak D}^x(D)) \in \R^{\DimH \times \DimH}$  for the $x$-th impurity are constructed from the density matrix
\begin{equation}\label{eq:Dxd}
{\mathfrak D}^x(D):= D - D\MatFrag (\MatFrag^TD\MatFrag)^{-1}\MatFrag^TD  \in {\rm Gr}(\NElec-\FragSize,\DimH);
\end{equation}
\item the rank-4 tensor $V^x(D)$ is given by 
\begin{equation} \label{eq:tensor_VxD}
[V^x(D)]_{ijkl} := \sum_{\kappa,\lambda,\nu,\xi=1}^\DimH V_{\kappa\lambda\nu\xi} [C^x(D)]_{\kappa i} [C^x(D)]_{\lambda j}  [C^x(D)]_{\nu k}  [C^x(D)]_{\xi \ell}; 
\end{equation}
\item the value of the (irrelevant) constant $E^{\mathrm{env}}_x(D)$ is given in \eqref{eq:E_env}.
\end{itemize}
\end{proposition}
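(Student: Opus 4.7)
The plan is to verify the claimed formula by computing $\langle \WFimpTrial \wedge \WFcore | \Hamiltonian | \WFimpTrial \wedge \WFcore \rangle$ for an arbitrary $\WFimpTrial \in \Fock(\Impurity{D})$, rewriting $\Hamiltonian$ in an orthonormal basis adapted to the decomposition $\HSpace = \Impurity{D} \oplus \Env{D}$, and ``integrating out'' the environment degrees of freedom whose state is frozen to $\WFcore$. Uniqueness of $\ImpurityHamiltonian{D}$ with respect to the defining identity \eqref{eq:def_impurity_Ham_1} follows from polarization, so it suffices to exhibit an operator on $\Fock(\Impurity{D})$ whose diagonal quadratic form coincides with that right-hand side.

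First, I would introduce creation/annihilation operators in the adapted basis. Let $b_i^\dagger, b_i$, $1\le i\le 2\FragSize$, be associated with the orthonormal basis of $\Impurity{D}$ given by the columns of $C^x(D)$, and let $c_m^\dagger, c_m$ be associated with an orthonormal basis $\{\phi_m^{\mathrm{env}}\}_m$ of $\Env{D}$ whose first $\NElec-\FragSize$ vectors span $\Core{D}$. Then $\Ann[\kappa]^\dagger = \sum_{i} (C^x(D))_{\kappa i}\, b_i^\dagger + \sum_m (\phi_m^{\mathrm{env}})_\kappa\, c_m^\dagger$, and analogously for $\Ann[\kappa]$. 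Substituting these identities into \eqref{eq:Hamiltonian} expands $\Hamiltonian$ as a sum of monomials, which I would classify by how many environment operators they contain.

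Second, I would take partial expectations against $\WFcore$. Since $\WFcore$ is a Slater determinant filling exactly $\Core{D}$, its one-body reduced density matrix on $\Env{D}$ is the orthogonal projector onto $\Core{D}$, whose matrix in the atomic basis is precisely ${\mathfrak D}^x(D)$ as in \eqref{eq:Dxd}; this uses the formula $D\MatFrag(\MatFrag^T D\MatFrag)^{-1}\MatFrag^T D$ for the orthogonal projector on $D\Frag$, which is well-defined by the compatibility hypothesis and Lemma~\ref{lem:Omega_L}. Higher reduced density matrices of $\WFcore$ factorize via Wick's theorem. Monomials whose environment factor does not contain equal numbers of $c^\dagger$ and $c$, or whose environment indices cannot all be paired into occupied core orbitals, contribute nothing. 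The surviving monomials split into three families: purely impurity one- and two-body terms that persist verbatim; mixed two-body terms in which exactly two environment operators get contracted against ${\mathfrak D}^x(D)$, producing a one-body operator on the impurity; and purely environment terms that produce a scalar. For the mixed two-body terms, the two admissible Wick pairings, together with the symmetries of $V$ listed after \eqref{eq:Hamiltonian}, generate precisely the Coulomb-minus-exchange combination $J({\mathfrak D}^x(D)) - K({\mathfrak D}^x(D))$ defined in \eqref{eq:direct_exchange}.

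Third, I would collect the surviving pieces. The impurity one-body contributions assemble into $\sum_{i,j} [C^x(D)^T(h + J({\mathfrak D}^x(D)) - K({\mathfrak D}^x(D)))C^x(D)]_{ij}\, b_i^\dagger b_j$; the purely impurity two-body part becomes $\tfrac12\sum_{ijkl} [V^x(D)]_{ijkl}\, b_i^\dagger b_j^\dagger b_l b_k$ with $V^x(D)$ as in \eqref{eq:tensor_VxD}; and the purely environment residual defines the scalar $E^{\mathrm{env}}_x(D)$. Identifying $b_i$ with $\Ann[i](D)$ yields the formula \eqref{eq:impurity_Hamiltonian}. The main obstacle is the careful bookkeeping of fermionic anticommutation signs when moving $c$-operators past $b$-operators to put each monomial in the separated normal form required for taking the partial expectation: these signs are exactly what flips one of the two environment Wick pairings of the mixed two-body term into the exchange piece with its correct sign, and so they must be tracked with particular care in order to recover $J - K$ rather than $J + K$ or $K - J$.
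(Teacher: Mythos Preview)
Your proposal is correct and follows essentially the same approach as the paper: rewrite $\Hamiltonian$ in creation/annihilation operators adapted to the decomposition $\HSpace = \Impurity{D}\oplus\Env{D}$, split the expectation $\langle \WFimpTrial\wedge\WFcore|\Hamiltonian|\WFimpTrial\wedge\WFcore\rangle$ according to how many operators act on the environment, discard particle-number-nonconserving terms, and reduce the mixed two-body piece to $J({\mathfrak D}^x(D))-K({\mathfrak D}^x(D))$ via the one-body density matrix of $\WFcore$. The paper carries out the index computations explicitly (seven terms $a_1,\dots,a_7$ and a direct manipulation showing $\widetilde{\mathfrak D}^x(D)={\mathfrak D}^x(D)$) where you invoke Wick's theorem and the projector formula more abstractly, but the structure and all key steps coincide.
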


Note that the matrix ${\mathfrak D}^x(D)$ is in fact the one-body density matrix associated with the Slater determinant $\WFcore$ (see Section~\ref{sec:impurity_pb}).

\subsection{Domain of the high-level map}

A matrix $D \in \Grass$ is in the domain of the high-level map $\HLMap$ formally defined in Section~\ref{sec:overview} if and only if 
\begin{enumerate}
\item $D$ is compatible with the fragment decomposition (see Lemma~\ref{lem:Omega_L}), in such a way that the impurity problem  \eqref{eq:ImpurityProblem} is well defined for each $x$;
\item the set 
\begin{align*}
M_D:=\bigg\{ \ChemicalPotential \in \R \; \bigg| \; &\forall x, \mbox{ the impurity problem~\eqref{eq:ImpurityProblem} has a unique ground-state 1-RDM $P_{x,D,\mu}$}, \\
& \mbox{and } \sum_{x=1}^{\NFrag} \Trace\left( \Projector[x] P_{x,D,\ChemicalPotential} \Projector[x] \right) = \NElec \bigg\}
\end{align*}
is non-empty;
\item the function 
\begin{equation}
\cF_D: M_D \ni \ChemicalPotential \mapsto \sum_{x=1}^{\NFrag} \Projector[x] P_{x,D,\ChemicalPotential} \Projector[x] \in \DiagonalBlockCHGrass \nonumber
\end{equation}
is a constant over $M_D$, which we denote by $\HLMap(D)$.
\end{enumerate}

In the proof of Theorem~\ref{thm:theory}, we will study $\HLMap_\alpha$ in the non-interacting ($\alpha=0$) and weakly interacting ($|\alpha|$ small ) cases.
We will see that in these regimes the domain of $\HLMap_\alpha$ contains a neighborhood of $\DGS$ in $\Grass$.

\section{$\NElec$-representability and low-level map}
\label{sec:ll_DMET}

In this section, we focus our study on the low level map defined in \eqref{eq:LowLevelMapDefinition}.
Clearly, \eqref{eq:LowLevelMapDefinition} has minimizers if and only if $P \in \PartitionProj(\Grass)$ (otherwise, the feasible set of the minimization problem is empty). 

\medskip

The next Lemma covers the extreme cases of minimal ($\NFrag=2$) and maximal ($\NFrag=\DimH$) numbers of fragments. 

\begin{lemma}[Global $\NElec$-representability] \label{lem:N-rep} $\,$
\begin{enumerate}
    \item If $\NFrag=\DimH$ (one site per fragment), then $\PartitionProj(\Grass[\NElec])=\PartitionProj(\CHGrass[\NElec])=\DiagonalBlockCHGrass$.
    \item If $\NFrag=2$ and $\DimH \ge 3$, then $\PartitionProj(\Grass[\NElec])\subsetneq \PartitionProj(\CHGrass[\NElec])=\DiagonalBlockCHGrass$.
    More precisely, 
    \begin{align*}
    & \PartitionProj(\Grass[\NElec])=\big\{P \in \DiagonalBlockCHGrass  \; \big|\; \forall 0 < n < 1, \; {\rm dim}({\rm Ker}(\Projector[1] P \Projector[1] -n)) = {\rm dim}({\rm Ker}(\Projector[2] P \Projector[2] -(1-n)) \big\}.
    \end{align*}
\end{enumerate}
\end{lemma}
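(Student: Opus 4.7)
When $\NFrag = \DimH$, the map $\PartitionProj$ extracts the diagonal, so $\DiagonalBlockCHGrass$ is identified with the set of vectors $d \in [0,1]^\DimH$ satisfying $\sum_i d_i = \NElec$, and the inclusion $\PartitionProj(\Grass) \subset \DiagonalBlockCHGrass$ is automatic from $\Grass \subset \CHGrass$. For the reverse inclusion I would invoke the Schur--Horn theorem: every $D \in \Grass$ is a symmetric matrix with spectrum $(1,\ldots,1,0,\ldots,0)$ containing exactly $\NElec$ ones, so its possible diagonals form the permutohedron of this spectrum. A short majorization computation identifies this permutohedron with $\{d \in [0,1]^\DimH : \sum_i d_i = \NElec\}$, yielding the claimed equality.

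\textbf{Part 2 (two fragments), necessity.} The natural tool here is the thin CS decomposition. Writing $D = VV^T$ with $V \in \R^{\DimH \times \NElec}$ an isometry ($V^T V = I_\NElec$) and splitting $V = (V_1^T\ V_2^T)^T$ along the partition, the CS decomposition provides orthogonal matrices $U_1, U_2$ (of sizes $\FragSize[1]$ and $\FragSize[2]$) and $Q \in \R^{\NElec \times \NElec}$ such that $U_1^T V_1 Q$ and $U_2^T V_2 Q$ take canonical block forms with identity, $\mathrm{diag}(\cos\theta_i)$, $\mathrm{diag}(\sin\theta_i)$ and zero sub-blocks, where $\theta_i \in (0,\pi/2)$. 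It follows that $\Projector[1] D \Projector[1] = V_1 V_1^T$ has eigenvalues $1$, $0$ and $\cos^2\theta_i$, while $\Projector[2] D \Projector[2] = V_2 V_2^T$ has eigenvalues $1$, $0$ and $\sin^2\theta_i$, with matching multiplicities on the trigonometric part. Since $\sin^2\theta_i = 1 - \cos^2\theta_i$, this immediately gives $\dim\Ker(\Projector[1] D \Projector[1] - n) = \dim\Ker(\Projector[2] D \Projector[2] - (1-n))$ for every $n \in (0,1)$.

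\textbf{Part 2, sufficiency.} Given $P \in \DiagonalBlockCHGrass$ with blocks $P_1 = \Projector[1] P \Projector[1]$ and $P_2 = \Projector[2] P \Projector[2]$ satisfying the matching condition, I diagonalize $P_1$ and $P_2$ and, using the assumed equality of dimensions, fix for each $n \in (0,1)$ in the spectrum of $P_1$ an isometry between $\Ker(P_1 - n)$ and $\Ker(P_2 - (1-n))$. This lets me pair each eigenvector $\phi$ of $P_1$ with eigenvalue $n \in (0,1)$ to an eigenvector $\psi$ of $P_2$ with eigenvalue $1-n$; on each such pair I set the restriction of $D$ to $\Span\{\phi \oplus 0,\ 0 \oplus \psi\}$ to be the rank-one orthogonal projector $\begin{pmatrix} n & \sqrt{n(1-n)} \\ \sqrt{n(1-n)} & 1-n \end{pmatrix}$. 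On the remaining (unpaired) eigenvectors of $P_1$ and $P_2$, which carry eigenvalues $0$ or $1$, I let the off-diagonal block of $D$ vanish. A direct check shows that the assembled $D$ is a symmetric idempotent with the prescribed diagonal blocks and $\Trace(D) = \Trace(P_1) + \Trace(P_2) = \NElec$, hence lies in $\Grass$. The strict inclusion when $\DimH \geq 3$ is witnessed by any $P \in \DiagonalBlockCHGrass$ violating the matching condition, e.g.\ $\FragSize[1] = 1$, $\FragSize[2] = 2$, $\NElec = 1$, $P_1 = 1/2$ and $P_2 = \mathrm{diag}(1/3, 1/6)$.

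The main obstacle I anticipate is combinatorial rather than analytic: multiplicities must be tracked carefully in Part 2 so that the pairing between eigenspaces is well-defined and the assembled $D$ manifestly decomposes as a direct sum of $2 \times 2$ rank-one projectors and $0/1$ diagonal entries. Once that is set up, the CS decomposition provides one direction and the explicit block construction the other; the non-trivial multiplicity matching in the statement is exactly the compatibility condition for this pairing to be carried out.
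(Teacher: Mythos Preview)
Your proof is correct and close in spirit to the paper's. For Part~1 the paper simply cites Kadison's characterization of the diagonals of projections, which is the projection-specific form of the Schur--Horn argument you sketch. For Part~2 the paper does not invoke the CS decomposition as a named result; instead it writes $D$ in block form with off-diagonal block $C$ (after orthogonally diagonalizing $P_1$ and $P_2$) and reads off from $D^2=D$ the relations $CC^T = D_1 - D_1^2$, $C^TC = D_2 - D_2^2$, and $(1-m_i-m_j')C_{ij}=0$, which force the multiplicity matching and are precisely the CS identities derived by hand. The sufficiency construction is then identical to yours: take $C_j=\sqrt{n_j(1-n_j)}\,I_{d_j}$ on each paired eigenspace. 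Your route via the CS decomposition is a little cleaner since it packages the linear algebra into a standard theorem, while the paper's version is fully self-contained.

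One small caveat: your explicit counterexample has $\FragSize[1]=1=\NElec$, which violates the paper's standing hypothesis $\FragSize < \NElec$ for all $x$. This does not affect the logic, but to stay within the paper's setting you could instead take $\FragSize[1]=\FragSize[2]=2$, $\NElec=3$, $P_1=\mathrm{diag}(0.9,0.7)$, $P_2=\mathrm{diag}(0.8,0.6)$.
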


Our analysis of the DMET method in the non-interacting and weakly perturbative settings relies on the following weaker $\NElec$-representability result.

\begin{definition}[Local $\NElec$-representability] Let $D \in \Grass$ be compatible with the fragment decomposition. We say that the local $\NElec$-representability condition is satisfied at $D$ if the linear map $\PartitionProj$ is surjective from $T_D\Grass$ to $\cY$.
\end{definition}

\medskip

Note that Assumption (A3) can be rephrased as: the local $\NElec$-representability condition is satisfied at $\DGS$.

\medskip

A necessary condition for the local $\NElec$-representability condition to be satisfied at some $D \in \GrassCompatiblePartition$ is that 
\begin{equation}\label{eq:NecCondLorRep}
\NElec(\DimH-\NElec) = \dim(\Grass) = \dim(\R^{N_{\rm v} \times \NElec}) \ge \dim(\cY) = \sum_{x=1}^{\NFrag} \frac{\FragSize(\FragSize+1)}{2} - 1.
\end{equation}

If $\NFrag=\DimH$ (one site per fragment), the above condition reads $\NElec(\DimH-\NElec) \ge \DimH-1$, and is therefore  satisfied for any $1 \le \NElec \le \DimH-1$, i.e. for any non-trivial case. On the other hand, if $\NFrag=2$ and $\DimH=2\FragSize[1]=2\FragSize[2]$ (two fragments of identical sizes), the necessary condition reads $\NElec(\DimH-\NElec) \ge \frac{\DimH(\DimH+2)}4 - 1$ and is never satisfied as soon as $\DimH \ge 3$. This result is in agreement with the global $\NElec$-representability results in Lemma~\ref{lem:N-rep}. In  usual DMET calculations, condition  \eqref{eq:NecCondLorRep} is always satisfied, so that, generically, $\DiagonalBlockCHGrass$ and $\PartitionProj(\Grass)$ coincide in the neighbourhood of $\PGS$.
\medskip
  
The next lemma provides a sufficient local $\NElec$-representability criterion. 

\begin{lemma}[A local $\NElec$-representability criterion] \label{lem:N-rep-loc} 
Let $D \in \cD$ be compatible with the fragment decomposition (i.e. $D\in \GrassCompatiblePartition$). The following assertions are equivalent:
\begin{enumerate}
    \item the local $\NElec$-representability condition is satisfied at $D$;
    \item the only matrices $M \in \R^{\DimH \times \DimH}_{\rm sym}$ commuting with both $D$ and the matrices $\Projector[x]$ for all $1 \le x \le \NFrag$ are of the form $M = \lambda I_L$ for some $\lambda \in \R$;
    \item if $\Phi \in \R^{\DimH \times \DimH}$ is an orthogonal matrix such that 
    \begin{equation} \label{eq:D=PhiIPhi}
D = \Phi \left( \begin{array}{cc} I_\NElec & 0 \\ 0 & 0 \end{array} \right) \Phi^T,
\end{equation}
then the linear map 
\begin{equation}\label{eq:Nrepmap}
\R^{(\DimH-\NElec) \times \NElec} \ni X \mapsto \sum_{x=1}^{\NFrag} \Projector[x] \Phi \left( \begin{array}{cc} 0 & X^T \\ X & 0 \end{array} \right) \Phi^T \Projector[x] \in  \cY
\end{equation}
is surjective.
\end{enumerate}
\end{lemma}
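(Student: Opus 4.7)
The plan is to prove (1)$\Leftrightarrow$(3) by parametrizing $T_D\Grass$ explicitly, and then (1)$\Leftrightarrow$(2) by a duality argument identifying the orthogonal complement of $\PartitionProj(T_D\Grass)$ inside $\cY$.

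First, I fix an orthogonal $\Phi$ diagonalizing $D$ as in \eqref{eq:D=PhiIPhi} and differentiate the defining relations $D^2=D$, $D^T=D$ at $D$. Writing $\delta D \in \R^{\DimH\times \DimH}_{\rm sym}$ in $\Phi$-coordinates as $\Phi^T \delta D \Phi = \begin{pmatrix} A & X^T \\ X & C \end{pmatrix}$ with $A,C$ symmetric, the linearized constraint $D\delta D + \delta D\cdot D = \delta D$ forces $A=0$ and $C=0$, so $T_D\Grass$ is precisely the set of matrices $\delta D(X) := \Phi \begin{pmatrix} 0 & X^T \\ X & 0 \end{pmatrix}\Phi^T$ indexed by $X \in \R^{(\DimH-\NElec)\times \NElec}$; the trace-zero constraint on tangent vectors is automatic. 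Statement (3) is then, by definition, the assertion that the composition $X \mapsto \PartitionProj(\delta D(X))$ is surjective onto $\cY$ (the image lies in $\cY$ since $\Trace(\delta D(X)) = 0$), whence (1)$\Leftrightarrow$(3).

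Next, for (1)$\Leftrightarrow$(2), I equip both $T_D\Grass$ and $\cY$ with the Frobenius inner product and use that $\PartitionProj:\R^{\DimH\times \DimH}_{\rm sym}\to\R^{\DimH\times \DimH}_{\rm sym}$ is a self-adjoint projector which fixes every $Y \in \cY$. Surjectivity of $\PartitionProj|_{T_D\Grass}:T_D\Grass\to\cY$ is equivalent to the vanishing of the orthogonal complement of its image in $\cY$,
\[
\{Y\in\cY : \Trace(\delta D \cdot Y) = 0 \text{ for all } \delta D \in T_D\Grass\} = \{0\}.
\]
Testing against the parametrization, a short block-matrix computation with $\widetilde Y := \Phi^T Y \Phi$ in $(\NElec, \DimH-\NElec)$-block form gives $\Trace(\delta D(X)\cdot Y) = 2\Trace(X\,\widetilde Y_{12})$, which vanishes for all $X$ iff $\widetilde Y_{12} = 0$, iff $[Y,D]=0$. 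Hence the above set equals $\{Y\in \R^{\DimH\times \DimH}_{\rm sym}: [Y,D]=0,\ [Y,\Projector[x]]=0\ \forall x,\ \Trace Y = 0\}$. The larger commutant obtained by dropping the trace-zero condition always contains $\R I_\DimH$, so its trace-zero part vanishes iff this commutant is exactly $\R I_\DimH$, which is (2).

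I expect the only genuinely technical step to be the orthogonality computation leading to the identification $\PartitionProj(T_D\Grass)^\perp \cap \cY = \{Y\in\cY : [Y,D]=0\}$; once this is in place, the trace-zero constraint built into $\cY$ is exactly what cancels the ever-present $I_\DimH$ direction in the commutant, and the three statements collapse into one another without further work.
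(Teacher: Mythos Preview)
Your proof is correct and follows essentially the same strategy as the paper: both parametrize $T_D\Grass$ via $X\mapsto \Phi\begin{pmatrix}0&X^T\\X&0\end{pmatrix}\Phi^T$ to get (1)$\Leftrightarrow$(3), and then use duality (surjectivity $\Leftrightarrow$ injectivity of the adjoint) to reduce (1) to the condition that any $Y\in\cY$ with $(1-D)YD=0$ vanishes, which is exactly (2). The only cosmetic difference is that the paper computes $\|d_0g_\Phi^*(Y)\|^2 = 4\|(1-D)YD\|^2$ explicitly via the operator $d_0g_\Phi\, d_0g_\Phi^*$, whereas you identify the orthogonal complement of the image directly through the pairing $\Trace(\delta D(X)\cdot Y)=2\Trace(X\,\widetilde Y_{12})$; both routes land on the same commutation condition.
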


The third assertion of Lemma~\ref{lem:N-rep-loc} gives a practical way to check the local $\NElec$-representability criterion: it suffices to (i) diagonalize $D$ in order to write it as in \eqref{eq:D=PhiIPhi} (the columns of $\Phi \in O(\DimH)$ form an orthonormal basis of eigenvectors of $D$), (ii) assemble the matrix of the linear map \eqref{eq:Nrepmap} in the canonical bases of $\R^{(\DimH-\NElec) \times \NElec}$ and $\cY$, and (iii) check whether the number of positive singular values of this matrix is equal to $\dim(\cY)=\sum_{x=1}^{\NFrag} \frac{\FragSize(\FragSize+1)}2 - 1$.

\section{Proofs}
\label{sec:proofs}

\subsection{Proof of Lemma~\ref{lem:Omega_L}}

Let $D \in \Grass$. 

\medskip

\noindent
2) $\iff$ 3). Assume that
\begin{equation}
\forall 1 \le x \le \NFrag, \quad \dim(D\Frag)=\dim((1-D)\Frag)= \FragSize. \nonumber
\end{equation}
Since $D^2=D$, we have for all $y \in \R^{\FragSize}$,
\begin{equation}\label{eq:Dchi_rx}
y^T (\MatFrag^T D \MatFrag) y = 
y^T (\MatFrag^T D^2 \MatFrag) y= (D (\MatFrag y))^T  (D(\MatFrag y)) = |D(\MatFrag y)|^2,
\end{equation}
and therefore, 
\begin{equation}
0 \le y^T (\MatFrag^T D \MatFrag) y = |D(\MatFrag y)|^2 \le |\MatFrag y |^2 = |y|^2. \nonumber
\end{equation}
Thus $0 \le \MatFrag^T D \MatFrag \le 1$ in the sense of hermitian matrices.
Assume now that $y^T (\MatFrag^T D \MatFrag) y=0$.
Then, $\MatFrag y \in \Ker(D)$.
But we also have $\MatFrag y \in \Frag$.
Since $\dim (D \Frag)=\FragSize$, this implies that $y=0$.
Thus $0 < \MatFrag^T D \MatFrag$ in the sense of hermitian matrices.
Likewise, we have $\MatFrag^T D \MatFrag < 1$.
This proves that 2) $\implies$ 3).
Conversely, if for all $1 \le x \le \NFrag$, $0 < \MatFrag^T D \MatFrag$, we infer from \eqref{eq:Dchi_rx} that $D(\MatFrag y)=0$ implies $y=0$, hence that $\dim(D\Frag)=\FragSize$.
Likewise, $\MatFrag^T D \MatFrag < 1$ implies $\dim((1-D)\Frag)=\FragSize$.
Therefore, 3) $\implies$ 2).

\medskip

\noindent
3) $\iff$ 4). Since $0 < \MatFrag^T D \MatFrag$ is equivalent to $\MatFrag^T D \MatFrag \in {\rm GL}_\R(\FragSize)$ and $\MatFrag^T D \MatFrag < 1$ is equivalent to $\MatFrag^T(1-D)\MatFrag \in {\rm GL}_\R(\FragSize)$, we conclude that 3) $\iff$ 4).
\medskip

\noindent
Lastly, it follows from the definition of ${\mathcal P}$ that 
\begin{equation}
P =\left(\begin{array}{cccc}
P_1& 0 &\cdots & 0 \\
0 &P_2 &\cdots &0\\
\vdots & &\ddots &\vdots \\
0 &0 &\cdots & P_{\NFrag}\\
\end{array}\right) \in \dps\mathop{\mathcal P}^\circ \quad \iff \quad (\forall 1 \le x \le \NFrag, \; 0 < P_x=\MatFrag^T P \MatFrag < 1).
\end{equation}
This shows that 1) $\iff$ 3), which concludes the proof.

\subsection{Proof of Proposition~\ref{prop:HimpDefinition}}

Let $D \in \Grass$ and $1 \le x \le \NFrag$. Let us first concatenate the matrix $C^x(D) \in \R^{\DimH \times 2\FragSize}$ introduced in~\eqref{eq:matrix_C} with a matrix $C^x_{\rm env}(D)\in \R^{\DimH \times (\DimH-2\FragSize)}$ in order to form an orthogonal matrix
$$
{\mathfrak C}^x(D) = (C^x(D)|C^x_{\rm env}(D)) \in O(\DimH).
$$
The column vectors of ${\mathfrak C}^x(D)$ define an orthonormal basis of $\HSpace=\R^\DimH$ adapted to the decomposition $\HSpace=\Impurity{D} \oplus \Env{D}$. The generators of the real CAR algebra associated with this basis are given by
\begin{align*}
    \Ann[i]^x(D)&=\sum_{\kappa=1}^\DimH {\mathfrak C}^x(D)_{\kappa i} \Ann[\kappa], \qquad 
    \Ann[i]^x(D)^\dagger =\sum_{\kappa=1}^\DimH {\mathfrak C}^x(D)_{\kappa i} \Ann[\kappa]^\dagger,
\end{align*}
so that the Hamiltonian 
$$
    \Hamiltonian= \sum_{\kappa, \lambda=1}^\DimH h_{\kappa \lambda} \Ann[\kappa]^\dagger \Ann[\lambda] + \frac{1}2  \sum_{\kappa, \lambda , \nu ,\xi=1}^\DimH V_{\kappa \lambda \nu \xi} \Ann[\kappa]^\dagger \Ann[\lambda]^\dagger \Ann[\xi]  \Ann[\nu]
$$
can be rewritten as 
\begin{align*}
    \Hamiltonian &=\sum_{i,j=1}^\DimH [h^x(D)]_{ij} \Ann[i]^x(D)^\dagger \Ann[j]^x(D) + \frac{1}2  \sum_{i,j,k,l=1}^\DimH [V^x(D)]_{ijkl} \, \Ann[i]^x(D)^\dagger \Ann[j]^x(D)^\dagger \Ann[l]^x(D) \Ann[k]^x(D)
    \end{align*}
    with 
    \begin{align*}
    [h^x(D)]_{ij}&:=\sum_{\kappa, \lambda=1}^\DimH h_{\kappa \lambda} {\mathfrak C}^x(D)_{\kappa i} {\mathfrak C}^x(D)_{\lambda j}  \quad \mbox{i.e.} \quad h^x(D)={\mathfrak C}^x(D)^T h{\mathfrak C}^x(D)
    \end{align*}
    and
    \begin{align*}
    [V^x(D)]_{ijkl}&:= \sum_{\kappa, \lambda,  \nu, \xi=1}^\DimH V_{\kappa \lambda \nu \xi} {\mathfrak C}^x(D)_{\kappa i} {\mathfrak C}^x(D)_{\lambda  j} {\mathfrak C}^x(D)_{\nu k} {\mathfrak C}^x(D)_{\xi l} .
\end{align*}
Note that if $1 \le i,j,k,l \le 2\FragSize$,
$$
[h^x(D)]_{ij}=[C^x(D)^ThC^x(D)]_{ij} \quad \mbox{and} \quad [V^x(D)]_{ijkl} := \sum_{\kappa, \lambda,  \nu, \xi=1}^\DimH V_{\kappa \lambda \nu \xi} C^x(D)_{\kappa i} C^x(D)_{\lambda  j} C^x(D)_{\nu k} C^x(D)_{\xi l},
$$
in agreement with \eqref{eq:tensor_VxD}.
Let $\WF \in \Fock(\HSpace)$ be of the form 
$$
\WF=\WFimpTrial \wedge \WFcore \quad \mbox{with} \quad \WFimpTrial \in \Fock(\Impurity{D})  \quad \mbox{and} \quad \WFcore \in \bigwedge^{(\NElec-\FragSize)} \Core{D}.
$$
We have
\begin{align*}
    \langle \WF | \Hamiltonian |\WF \rangle = 
    \big\langle \WFimpTrial\wedge \WFcore \big| &\sum_{i,j=1}^\DimH [h^x(D)]_{ij} \Ann[i]^x(D)^\dagger \Ann[j]^x(D) \\& + \frac{1}2  \sum_{i,j,k,l=1}^\DimH [V^x(D)]_{ijkl} \Ann[i]^x(D)^\dagger \Ann[j]^x(D)^\dagger \Ann[l]^x(D) \Ann[k]^x(D) \big| \WFimpTrial\wedge \WFcore \big\rangle .
\end{align*}
The terms in the Hamiltonian which change the number of particles in the impurity space or the environment do not contribute. The terms which act only on the environment subspace yield a term proportional to $\|\Psi^{\rm imp}_{x,D}\|^2$. Expanding the above expression, we thus obtain
\begin{align*}
    \langle \WF | \Hamiltonian |\WF \rangle =a_1+a_2+a_3 + a_4+a_5+a_6+a_7
    \end{align*}
    with 
    \begin{align*}
    a_1 :&=\sum_{i,j=1}^{2 \FragSize} [h^x(D)]_{ij} \langle \WFimpTrial\wedge \WFcore|  \Ann[i]^x(D)^\dagger \Ann[j]^x(D)|\WFimpTrial\wedge \WFcore \rangle \\
    & = \sum_{i,j=1}^{2 \FragSize} [h^x(D)]_{ij} \langle \WFimpTrial |  \Ann[i]^x(D)^\dagger \Ann[j]^x(D)|\WFimpTrial  \rangle \\
    & = \sum_{i,j=1}^{2 \FragSize} [C^x(D)^T h C^x(D)]_{ij} \langle \WFimpTrial |  \Ann[i]^x(D)^\dagger \Ann[j]^x(D)|\WFimpTrial  \rangle,
    \end{align*}
    \begin{align*}
    a_2:&=  \sum_{i,j=2\FragSize+1}^\DimH [h^x(D)]_{ij} \langle \WFimpTrial\wedge \WFcore| \Ann[i]^x(D)^\dagger \Ann[j]^x(D) | \WFimpTrial\wedge \WFcore\rangle \\
    &= \left( \sum_{i,j=2\FragSize+1}^\DimH [h^x(D)]_{ij} \langle  \WFcore| \Ann[i]^x(D)^\dagger \Ann[j]^x(D) |  \WFcore\rangle  \right)  \|\Psi^{\rm imp}_{x,D}\|^2, 
    \end{align*}
    \begin{align*}
    a_3:&= \sum_{i=2\FragSize+1}^\DimH \sum_{j=1}^{2\FragSize}  [h^x(D)]_{ij}\langle \!\!\underbrace{\WFimpTrial\wedge \WFcore}_{\substack{\text{$\FragSize$ part. in imp.}\\ \text{$(N-\FragSize)$ part. in env.}}} \!\!\! |\underbrace{ \Ann[i]^x(D)^\dagger \Ann[j]^x(D) | \WFimpTrial\wedge \WFcore\rangle}_{\substack{\text{ $(\FragSize+1)$ part. in imp.}\\ \text{$(N-\FragSize-1)$ part. in env.}}} \\
    &= 0, 
    \end{align*}
    \begin{align*}
    a_4:&= \sum_{i=1}^{2\FragSize} \sum_{j=2\FragSize+1}^\DimH   [h^x(D)]_{ij} \langle \!\!\underbrace{\WFimpTrial\wedge \WFcore}_{\substack{\text{$\FragSize$ part. in imp.}\\ \text{$(N-\FragSize)$ part. in env.}}} \!\!\! |\underbrace{ \Ann[i]^x(D)^\dagger \Ann[j]^x(D) | \WFimpTrial\wedge \WFcore\rangle}_{\substack{\text{ $(\FragSize-1)$ part. in imp.}\\ \text{$(N-\FragSize+1)$ part. in env.}}}  \\
    &= 0, 
    \end{align*}
    \begin{align*}
    a_5:&= \frac{1}{2} \sum_{i,j,k,l=1}^{2\FragSize}[V^x(D)]_{ijkl} \langle \WFimpTrial\wedge \WFcore|\Ann[i]^x(D)^\dagger \Ann[j]^x(D)^\dagger \Ann[l]^x(D) \Ann[k]^x(D)| \WFimpTrial\wedge \WFcore\rangle \\ 
    &= \frac{1}{2} \sum_{i,j,k,l=1}^{2\FragSize} [V^x(D)]_{ijkl} \langle \WFimpTrial |\Ann[i]^x(D)^\dagger \Ann[j]^x(D)^\dagger \Ann[l]^x(D) \Ann[k]^x(D)| \WFimpTrial \rangle,
    \end{align*}
    \begin{align*}
    a_6:&=\frac{1}{2} \sum_{i,j,k,l=2\FragSize+1}^{\DimH} [V^x(D)]_{ijkl} \langle \WFimpTrial\wedge \WFcore| \Ann[i]^x(D)^\dagger \Ann[j]^x(D)^\dagger \Ann[l]^x(D) \Ann[k]^x(D) | \WFimpTrial\wedge \WFcore\rangle \\
    &= \left( \frac{1}{2} \sum_{i,j,k,l=2\FragSize+1}^{\DimH} [V^x(D)]_{ijkl} \langle \WFcore| \Ann[i]^x(D)^\dagger \Ann[j]^x(D)^\dagger \Ann[l]^x(D) \Ann[k]^x(D) | \WFcore\rangle \right) \|\Psi^{\rm imp}_{x,D}\|^2,
    \end{align*} 
    \begin{align*}
    a_7:&=\frac{1}{2} \sum_{i,k=1}^{2\FragSize}  \sum_{j,l=2\FragSize+1}^{\DimH} ([V^x(D)]_{ijkl} \! - \! [V^x(D)]_{ijlk} \!  -  \! [V^x(D)]_{jikl} \! +  \! [V^x(D)]_{jilk}) \\ & \qquad \qquad \times \underbrace{\langle \WFimpTrial\wedge \WFcore| \Ann[i]^x(D)^\dagger \Ann[j]^x(D)^\dagger \Ann[l]^x(D) \Ann[k]^x(D) | \WFimpTrial\wedge \WFcore\rangle}_{\langle \WFimpTrial | \Ann[i]^x(D)^\dagger \Ann[k]^x(D)|\WFimpTrial\rangle \langle \WFcore|\Ann[j]^x(D)^\dagger \Ann[l]^x(D)| \WFcore\rangle}.
\end{align*}
Noticing that
\begin{align}
 \forall 2\FragSize+1 \le j, l \le \DimH, \quad    \langle \WFcore|\Ann[j]^x(D)^\dagger \Ann[l]^x(D)| \WFcore\rangle = ({\mathfrak C}^x(D)^T D {\mathfrak C}^x(D))_{jl}
\end{align}
we get
\begin{align*}
a_7&=\frac{1}{2} \sum_{i,k=1}^{2\FragSize}  \sum_{j,l=2\FragSize+1}^{\DimH} ([V^x(D)]_{ijkl} \! - \! [V^x(D)]_{ijlk} \!  -  \! [V^x(D)]_{jikl} \! +  \! [V^x(D)]_{jilk}) \\ & \qquad \qquad \times({\mathfrak C}^x(D)^T D {\mathfrak C}^x(D))_{jl}  \langle \WFimpTrial | \Ann[i]^x(D)^\dagger \Ann[k]^x(D)|\WFimpTrial\rangle \\
&= \sum_{i,j=1}^{2\FragSize} \bigg(  \sum_{k,l=2\FragSize+1}^{\DimH}  ([V^x(D)]_{ikjl} \! - \! [V^x(D)]_{iklj} \!  -  \! [V^x(D)]_{kijl} \! +  \! [V^x(D)]_{kilj}) ({\mathfrak C}^x(D)^T D {\mathfrak C}^x(D))_{kl}  \bigg)  \\ & \qquad \qquad \times\langle \WFimpTrial | \Ann[i]^x(D)^\dagger \Ann[j]^x(D)|\WFimpTrial\rangle .
\end{align*}
It holds for all $1 \le i,j \le 2\FragSize$,
\begin{align*}
& \sum_{k,l=2\FragSize+1}^{\DimH}  [V^x(D)]_{ikjl} ({\mathfrak C}^x(D)^T D {\mathfrak C}^x(D))_{kl} \\
&\; =  \sum_{k,l=2\FragSize+1}^{\DimH}  \sum_{\kappa,\lambda,\nu,\xi,\sigma,\tau=1}^\DimH V_{\kappa\lambda\nu\xi} [{\mathfrak C}^x(D)]_{\kappa i} [{\mathfrak C}^x(D)]_{\lambda k}  [{\mathfrak C}^x(D)]_{\nu j} [{\mathfrak C}^x(D))]_{\xi l}  [{\mathfrak C}^x(D)]_{\sigma k} D_{\sigma\tau} [{\mathfrak C}^x(D)]_{\tau l} \\
&\; =  \sum_{k,l=1}^{\DimH-2\FragSize}  \sum_{\kappa,\lambda,\nu,\xi,\sigma,\tau=1}^\DimH V_{\kappa\lambda\nu\xi} [C^x(D)]_{\kappa i} [C^x_{\rm env}(D)]_{\lambda k}  [C^x(D)]_{\nu j} [C^x_{\rm env}(D))]_{\xi l}  [C^x_{\rm env}(D)]_{\sigma k} D_{\sigma\tau} [C^x_{\rm env}(D)]_{\tau l} \\
& \; =   \sum_{\kappa,\nu=1}^\DimH [C^x(D)]_{\kappa i} \left(   \sum_{\lambda,\xi,\sigma,\tau=1}^\DimH V_{\kappa\lambda\nu\xi} \left( \sum_{k=1}^{\DimH-2\FragSize}   [C^x_{\rm env}(D)]_{\lambda k}  [C^x_{\rm env}(D)]_{\sigma k} \right) \right. \\
&  \qquad \qquad \qquad \qquad \qquad \qquad \qquad \times D_{\sigma\tau} \left.  \left( \sum_{l=1}^{\DimH-2\FragSize}   [C^x_{\rm env}(D)]_{\tau l}  [C^x_{\rm env}(D)]_{\xi l} \right)   \right) [C^x(D)]_{\nu j} \\
& \; =   \sum_{\kappa,\nu=1}^\DimH [C^x(D)]_{\kappa i} \left(   \sum_{\lambda,\xi,\sigma,\tau=1}^\DimH V_{\kappa\lambda\nu\xi} \left( C^x_{\rm env}(D) C^x_{\rm env}(D)^T D C^x_{\rm env}(D) C^x_{\rm env}(D)^T  \right)_{\lambda\xi} \right) [C^x(D)]_{\nu j} \\
& \;= \left[ C^x(D)^T J(\widetilde{\mathfrak D}^x(D)) C^x(D) \right]_{ij},
\end{align*}
with, recalling that ${\mathfrak C}^x(D) = (C^x(D)|C^x_{\rm env}(D))$ is an orthogonal matrix,
\begin{align*}
\widetilde {\mathfrak D}^x(D):&=  C^x_{\rm env}(D) C^x_{\rm env}(D)^T D C^x_{\rm env}(D) C^x_{\rm env}(D)^T \\
&= (1-C^x(D) C^x(D)^T) D (1-C^x(D) C^x(D)^T)  \\
&= D - D\MatFrag(\MatFrag^TD\MatFrag)^{-1}\MatFrag^TD \\
&= {\mathfrak D}^x(D) \qquad \mbox{(see \eqref{eq:Dxd})}.
\end{align*}
Using similar arguments, we get
$$
a_7=\sum_{i,k=1}^{2\FragSize}   \left[ C^x(D)^T (J({\mathfrak D}^x(D))-K({\mathfrak D}^x(D))) C^x(D) \right]_{ij}
\langle \WFimpTrial | \Ann[i]^x(D)^\dagger \Ann[j]^x(D)|\WFimpTrial\rangle.
$$
We finally obtain
$$
\langle \Psi |\Hamiltonian|\Psi\rangle = \langle \WFimpTrial|\ImpurityHamiltonian{D}|\WFimpTrial\rangle,
$$
where $\ImpurityHamiltonian{D}$ is given by \eqref{eq:impurity_Hamiltonian} with 
    \begin{align}
    E^\mathrm{env}(D)&= \sum_{i,j=2\FragSize+1}^\DimH  [h^x(D)]_{ij} \langle  \WFcore| \Ann[i]^x(D)^\dagger \Ann[j]^x(D) | \WFcore\rangle \nonumber \\ & \quad +\frac{1}{2} \sum_{i,j,k,l=2\FragSize+1}^\DimH [V^x(D)]_{ijkl} \langle  \WFcore| \Ann[i]^x(D)^\dagger \Ann[j]^x(D)^\dagger \Ann[l]^x(D) \Ann[k]^x(D) |\WFcore\rangle .\label{eq:E_env}
\end{align}

\subsection{Proof of Lemma~\ref{lem:N-rep}}

The first assertion is a direct consequence of \cite[Theorem
6]{Kasison2002}.

\medskip

We now prove the second assertion. Let
$$
{\mathcal K}:= \big\{P=(P_1,P_2) \in {\mathbb R}^{\FragSize[1] \times \FragSize[1]}_{\rm sym} \times {\mathbb R}^{\FragSize[2] \times \FragSize[2]}_{\rm sym} \; \big| 
\forall 0 < n < 1, \; {\rm dim}({\rm Ker}(P_1-n)) = {\rm dim}({\rm Ker}(P_2-(1-n)) \big\}. 
$$
Let $P=(P_1,P_2) \in \PartitionProj(\Grass)$ and 
$D \in \Grass$ be such that $\PartitionProj(D)=P$. 
Let $U_1$ and $U_2$ be two orthogonal matrices of sizes $(\FragSize[1] \times \FragSize[1])$ and $(\FragSize[2] \times \FragSize[2])$ respectively, and $D_1=\mbox{diag}(m_1,\cdots,m_{\FragSize[1]})$ and $D_2=\mbox{diag}(m_1',\cdots,m'_{\FragSize[2]})$ two diagonal matrices  with entries in the range $[0,1]$ ranked such that $m_1 \ge \cdots \ge m_{\rm \FragSize[1]}$ and $m_1' \le \cdots \le m_{\rm \FragSize[2]}'$, such that $P_1=U_1D_1U_1^T$ and $P_2=U_2D_2U_2^T$. It holds 
$$
D= \left( \begin{array}{cc} U_1 & 0 \\ 0 & U_2 \end{array} \right)   \left( \begin{array}{cc} D_1 & C \\ C^T & D_2 \end{array} \right) \left( \begin{array}{cc} U_1^T & 0 \\ 0 & U_2^T \end{array} \right)  \quad \mbox{for some } C \in {\mathbb R}^{\FragSize[1] \times \FragSize[2]}.
$$
The condition $D^2=D$ reads
$$
CC^T = D_1-D_1^2, \quad C^TC = D_2-D_2^2, \quad C-D_1C-CD_2=0,
$$
that is 
$$
\forall 1 \le i \le \FragSize[1], \quad \forall 1 \le j \le \FragSize[2], \quad \sum_{k=1}^{\FragSize[2]} C_{ik}^2 = m_i-m_i^2, \quad \sum_{k=1}^{\FragSize[1]} C_{kj}^2 = m'_j-{m_j'}^2, \quad (1-m_i-m'_j) C_{ij}=0.
$$
This implies that $C_{ij}=0$ unless $m'_j=1-m_j$ and that $C_{ij}=0$ whenever $m_i=0$ or $1$, or $m_j'=0$ or $1$. It follows that 
\begin{align}
& \left( \begin{array}{cc} U_1^T & 0 \\ 0 & U_2^T \end{array} \right)  D  \left( \begin{array}{cc} U_1 & 0 \\ 0 & U_2 \end{array} \right)  \nonumber \\
& \qquad = \left( \begin{array}{ccccc|ccccc} I_{r_1} &  &  &  &  &  &  \\  & n_1 I_{d_1} &  &  &  & &  C_1 \\   &  & \ddots &  &  &  &  & \ddots  \\  &  &  & n_\ell I_{d_\ell} &  &  &  & & C_\ell  \\  &  &  &  & 0_{s_1}  &  & \\ \hline
 & & & & & 0_{s_2}  &  &  &  &  \\  & C_1^T & &&  & & (1-n_1) I_{d_1'}  &  &  & \\   &  & \ddots & & & & &  \ddots &  &    \\  & & & C_\ell^T & & &  &  & (1-n_\ell) I_{d_\ell'} &      \\  &  &  &  & & & & & & I_{r_2} 
  \end{array} \right), \label{eq:decomp_D}
 \end{align}
 with $0 < n_\ell < \cdots < n_1 < 1$. Using again the idempotency of $D$, we obtain the relations $C_jC_j^T = n_j(1-n_j)I_{d_j}$ and  $C_j^TC_j = n_j(1-n_j)I_{d_j'}$. Taking the trace leads to $d_j=d_j'$. Therefore, $P \in {\mathcal K}$ so that $\PartitionProj(\Grass) \subset {\mathcal K}$. 
 \medskip
 
 Conversely, let $P \in {\mathcal K}$ and $U_1$, $U_2$, $D_1$, $D_2$ as before. Then $U_1^T P_1 U_1$ and $U_2^T P_2 U_2$ read as the diagonal blocks of the right-hand side of \eqref{eq:decomp_D} with $d_j=d_j'$ for all $j$. Setting $C_j=\sqrt{n_j(1-n_j)}I_{d_j}$, the matrix $D$ defined by \eqref{eq:decomp_D} is in ${\mathcal M}_{\rm S}$ and satisfies $\PartitionProj(D)=P$. Hence, $P \in \PartitionProj(\Grass)$ and therefore ${\mathcal K} \subset  \PartitionProj(\Grass)$. 

\subsection{Proof of Lemma~\ref{lem:N-rep-loc}}

Let $N_{\rm v}:=\DimH-N$. For $X \in \R^{N_{\rm v} \times N}$ such that $\|X\| < 1/2$, we set
\begin{align*}
f_\Phi(X):&=\Phi \left( \begin{array}{cc} \frac 12\left( I_{N} + (I_{N}-4X^TX)^{1/2} \right) & X^T \\ X & \frac 12\left( I_{N_{\rm v}} - (I_{N_{\rm v}}-4XX^T)^{1/2} \right)
\end{array} \right) \Phi^T, \\ g_\Phi(X):&=\PartitionProj(f_\Phi(X)).
\end{align*}
The map $f_\Phi$ provides a local system of coordinates of $\Grass$ in the vicinity of $D$. Therefore, the local $\NElec$-representability condition is satisfied at $D$ if and only if the map 
$$
d_0g_\Phi: \R^{N_{\rm v} \times N} \ni X \mapsto d_0g_\Phi = \sum_{x=1}^{\NFrag} \Projector[x] \Phi \left( \begin{array}{cc} 0 & X^T \\ X & 0 \end{array} \right) \Phi^T \Projector[x] \in \cY
$$
is surjective. This proves the equivalence between the first and third assertions of the lemma. 

\medskip

Writing $\Phi$ as $\Phi=(\Phi^{\rm occ}| \Phi^{\rm virt})$ with $\Phi^{\rm occ} \in \R^{\DimH \times \NElec}$ and $\Phi^{\rm virt} \in \R^{\DimH \times N_{\rm v}}$, the adjoint of $d_0g_\Phi$ is given by 
$$
d_0g_\Phi^* : \cY \ni Y \mapsto d_0g_\Phi^*(Y)=2{\Phi^{\rm virt}}^T Y \Phi^{\rm occ} \in \R^{N_{\rm v} \times N}.
$$
We therefore have for all $Y \in \cY$,
\begin{align}\label{eq:BB*}
(d_0g_\Phi d_0g_\Phi^*) Y &= 2 \sum_{x =1}^{\NFrag} \Projector[x] \left( (1-D)YD + DY(1-D) \right) \Projector[x], 
\end{align}
and therefore
\begin{align*}
\|d_0g_\Phi^*(Y)\|^2&= \Trace\left( Y (d_0g_\Phi d_0g_\Phi^*)(Y)) \right) = 2  \Trace \left( Y \sum_{x =1}^{\NFrag} \Projector[x] \left( (1-D)YD + DY(1-D) \right) \Projector[x] \right) \\
&= 2  \Trace \left(  \sum_{x =1}^{\NFrag} \Projector[x] Y \Projector[x] \left( (1-D)YD + DY(1-D) \right)  \right) \\
&= 2  \Trace \left(  Y \left( (1-D)YD + DY(1-D) \right)  \right) = 4 \| (1-D)YD \|^2.
\end{align*}
Thus 
$$
\forall Y \in \cY, \quad \|d_0g_\Phi^*(Y)\|=2\|(1-D)YD \|.
$$
The map $d_0g_\Phi$ is surjective if and only if its adjoint is injective. Thus the criterion is satisfied if and only if
$$
\forall Y \in \cY, \quad (1-D)YD=0 \quad \Rightarrow \quad Y = 0.
$$
As $D$ is an orthogonal projector, $(1-D)YD=0$ if and only if $Y$ commutes with $D$. In addition, a matrix $Y \in \R^{\DimH \times \DimH}_{\rm sym}$ is in $\cY$ if and only if (i) it commutes with all the $\Projector[x]$'s, and (ii) its trace is equal to $0$. Thus, the criterion is satisfied if and only if any zero trace matrix $Y \in \R^{\DimH \times \DimH}_{\rm sym}$ commuting with $D$ and the $\Projector[x]$'s is the null matrix. Lastly, this condition is equivalent to: any matrix $Y \in \R^{\DimH \times \DimH}_{\rm sym}$ commuting with $D$ and the $\Projector[x]$'s is of the form $\lambda I_L$ for some $\lambda \in \R$. This completes the proof of the second statement.

\subsection{Proof of Proposition~\ref{prop:DMET0}}

For $\alpha=0$, the low-level map is formally given by 
\begin{align} \label{eq:Fll_0}
\LLMap_{0}(P) &= \mathop{\rm argmin}_{D \in \Grass, \; \PartitionProj(D)=P} \Trace(hD) \qquad \mbox{(formal)}.
\end{align}
Under Assumption (A1) (i.e. $\varepsilon_N < 0 <  \varepsilon_{N+1}$), $\DGS$ is the unique minimizer of 
$$
\mathop{\rm argmin}_{D \in \Grass} \Trace(hD).
$$
Since $\PartitionProj(\DGS)=\PGS$ (by definition of $\PGS$), $\DGS$ is the unique minimizer of \eqref{eq:Fll_0} for $P=\PGS$. Thus, $\PGS$ is in the domain of $\LLMap_{0}$ and $\LLMap_0(\PGS)=\DGS$. 

\medskip

For $\alpha=0$, the high-level map takes the simple formal expression 
\begin{align*}
\HLMap_{0}(D) &= \sum_{x=1}^{\NFrag} \Projector[x] C^x(D) \1_{(-\infty,0]}\left( C^x(D)^T (h-\mu \Projector[x]) C^x(D)  \right) C^x(D)^T  \Projector[x] \qquad \mbox{(formal)},
\end{align*}
where $C^x(D)$ is defined in \eqref{eq:matrix_C} and  $\ChemicalPotential \in \R$ is such that
$$
\sum_{x=1}^{\NFrag}  \Trace\left(\Projector[x] C^x(D) \1_{(-\infty,0]}\left( C^x(D)^T  (h-\ChemicalPotential \Projector[x]) C^x(D) \right) C^x(D)^T  \Projector[x]   \right) = \NElec.
$$
Therefore, a matrix $D \in \Grass$ is in the domain of $\HLMap_{0}$ if and only if
\begin{enumerate}
\item the set
$$
M_D:=\left\{\ChemicalPotential \in \R \; \big| \; 
\sum_{x=1}^{\NFrag}  \Trace\left(\Projector[x] C^x(D) \1_{(-\infty,0]}\left( C^x(D)^T  (h-\ChemicalPotential\Projector[x]) C^x(D) \right) C^x(D)^T  \Projector[x]   \right) = \NElec \right\}
$$
is non-empty;
\item the function
$$
{\mathcal F}_D: M_D \ni \ChemicalPotential \mapsto \sum_{x=1}^{\NFrag} \Projector[x] C^x(D) \1_{(-\infty,0]}\left( C^x(D)^T (h-\ChemicalPotential \Projector[x]) C^x(D)  \right) C^x(D)^T  \Projector[x] \in \R^{\DimH \times \DimH}_{\rm sym}
$$
is constant over $M_D$. Its value is an element of $\DiagonalBlockCHGrass$, which we denote by $\HLMap_0(D)$.
\end{enumerate}

Let us prove that under Assumptions (A1) and (A2), $\DGS$ belongs to the domain of $\HLMap_0$ and $\HLMap_0(\DGS)=\PGS$.

\medskip

First, we observe that for each $1 \le x \le \NFrag$, the space $\Impurity{0} := \Frag + \DGS \Frag$ is $\DGS$-invariant since $\DGS$ is a projector. The linear operator $\DGS$ on $\R^\DimH$ therefore has a a block-diagonal operator representation in the decomposition $\Impurity{0} \oplus \Impurity{0}^\perp$ of $\HSpace=\R^\DimH$:
$$
\DGS \equiv  \left( \begin{array}{cc}
\DGS^x & 0 \\
0 & \widetilde {\DGS^x} 
\end{array} \right) \qquad \mbox{(in the decomposition $\HSpace = \Impurity{0} \oplus \Impurity{0}^\perp$)},
$$

where $\DGS^x$ and $\widetilde{\DGS^x}$ are both orthogonal projectors. The corresponding representation of $h$ is not necessarily block-diagonal:
$$
h \equiv  \left( \begin{array}{cc} h^x & h^x_{\rm OD} \\ {h^x_{\rm OD}}^T & \widetilde{h^x} \end{array} \right) \qquad \mbox{(in the decomposition $\HSpace = \Impurity{0} \oplus \Impurity{0}^\perp$)}.
$$

Let us now focus on the operator $h^x$. To lighten the notation, we set 
$$
D_{0,x}:=\MatFrag^T \DGS \MatFrag.
$$
We infer from Assumption~(A2) and Lemma~\ref{lem:Omega_L} that ${\rm dim}(\DGS \Frag)={\rm dim}((1-\DGS)\Frag)=\FragSize$ and  
$$
C^x_0:=C^x(\DGS)=\left(\DGS\MatFrag D_{0,x}^{-1/2}|  (1-\DGS)\MatFrag (1-D_{0,x})^{-1/2}\right)
$$
forms an orthonormal basis of $\Impurity{0}$. In this basis, the operator $h^x$ is represented by the matrix 
\begin{align}
{\mathfrak h}^x &:=  {C^x_0}^T h C^x_0 =  \left( \begin{array}{cc} {\mathfrak h}^x_- & 0 \\ 0 & {\mathfrak h}^x_+ \end{array} \right), \label{eq:matrix_hr}
\end{align}
with 
\begin{align}
{\mathfrak h}^x_-:&= D_{0,x}^{-1/2}  \MatFrag^T \DGS h \DGS \MatFrag D_{0,x}^{-1/2}, \label{eq:hr-} \\
{\mathfrak h}^x_+:&=  (1-D_{0,x})^{-1/2} \MatFrag^T (1-\DGS)h(1-\DGS) \MatFrag  (1-D_{0,x})^{-1/2}. \label{eq:hr+}
\end{align}
The zeros in the off-diagonal blocks of ${\mathfrak h}^x$ come from the fact that $\DGS h(1-\DGS)=(1-\DGS)h\DGS=0$ since $h$ and $\DGS$ commute. In addition, we have 
\begin{align}
\varepsilon_1 \DGS \le \DGS h\DGS &= \sum_{i=1}^N \varepsilon_i \phi_i \phi_i^T \le \varepsilon_N \DGS  \label{eq:bounds_D0hD0}, \\
\varepsilon_{N+1} (1-\DGS) \le (1-\DGS)h(1-\DGS)&= \sum_{a=N+1}^\DimH \varepsilon_a \phi_a \phi_a^T  \le \varepsilon_{\DimH} (1-\DGS). 
\label{eq:bounds_(1-D0)h(1-D0)}
\end{align}
Combining \eqref{eq:hr-} and \eqref{eq:bounds_D0hD0} on the one hand, and \eqref{eq:hr+} and \eqref{eq:bounds_(1-D0)h(1-D0)} on the other hand, we obtain
\begin{align}
\varepsilon_1 I_{\FragSize} \le {\mathfrak h}^x_-  \le \varepsilon_N I_{\FragSize} \qquad \mbox{and} \qquad 
\varepsilon_{N+1} I_{\FragSize} \le  {\mathfrak h}^x_+\le  \varepsilon_{\DimH} I_{\FragSize}. \label{eq:lower_bound_h+}
\end{align}
We therefore have
\begin{equation}\label{eq:1hx}
\1_{(-\infty,0]}({\mathfrak h}^x) = \1_{(-\infty,0)}({\mathfrak h}^x) = \left( \begin{array}{cc} I_{\FragSize} & 0 \\ 0 & 0 \end{array} \right) , \qquad \1_{[0,\infty)}({\mathfrak h}^x) = \1_{(0,\infty)}({\mathfrak h}^x) = \left( \begin{array}{cc} 0 & 0 \\ 0 & I_{\FragSize}  \end{array} \right),
\end{equation}
and thus
\begin{align*}
\sum_{r=1}^{\NFrag} \Projector[x] C_0^x \1_{(-\infty,0]}({\mathfrak h}^x) {C_0^x}^T \Projector[x] 
& = \sum_{r=1}^{\NFrag} \Projector[x] C_0^x \left( \begin{array}{cc} I_{\FragSize} & 0 \\ 0 & 0 \end{array} \right) {C_0^x}^T \Projector[x] \\
&= \sum_{r=1}^{\NFrag} (\MatFrag\MatFrag^T) \DGS \MatFrag  D_{0,x}^{-1} 
\MatFrag^T \DGS (\MatFrag\MatFrag^T) \\
&= \sum_{r=1}^{\NFrag} \Projector[x] \DGS \Projector[x] = \PartitionProj(\DGS) = \PGS.
\end{align*}
As $\Trace(\PGS)=N$, we have $0 \in M_{\DGS}$ and ${\mathcal F}_{\DGS}(0)=\PGS$. Let us now show that $M_{\DGS}=\{0\}$. It holds
$$
\Projector[x] \equiv  \left( \begin{array}{cc} \Projector[x]^x & 0 \\ 0 & 0 \end{array} \right) \qquad \mbox{(in the decomposition $\HSpace = \Impurity{0} \oplus \Impurity{0}^\perp$)},
$$
and in the basis defined of $\Impurity{0}$ defined by $C^x_0$, the orthogonal projector $\Projector[x]^x$ is represented by the matrix
\begin{equation}\label{eq:matrix_pix}
{\mathfrak p}^x := {C^x_0}^T \Projector[x] C^x_0 = \left( \begin{array}{cc} D_{0,x}  & D_{0,x}^{1/2}(1-D_{0,x})^{1/2} \\ (1-D_{0,x})^{1/2}D_{0,x}^{1/2}  & (1-D_{0,x})\ \end{array} \right). 
\end{equation}
We therefore have in particular ${{\mathfrak p}^x}^2={\mathfrak p}^x={{\mathfrak p}^x}^T$. Consider the function  
\begin{align*}
\R \ni \mu \mapsto  \zeta(\mu):&=\sum_{x=1}^{\NFrag}  \Trace\left(\Projector[x] C_x^0 \1_{(-\infty,0]}\left( {C_x^0}^T  (h-\mu \Projector[x]) C_x^0 \right) {C_x^0}^T  \Projector[x]   \right)
  \\
  &= \sum_{x=1}^{\NFrag}  \Trace\left({\mathfrak p}^x \1_{(-\infty,0]}\left( {\mathfrak h}^x-\mu{\mathfrak p}^x \right)  \right) \\
  & = \sum_{x=1}^{\NFrag}  \Trace\left({{\mathfrak p}^x} \1_{(-\infty,0]}\left( {\mathfrak h}^x-\mu{\mathfrak p}^x  \right)   {{\mathfrak p}^x}\right) \ge 0.
\end{align*}
We already know that $\zeta(0)=N$. We see from \eqref{eq:lower_bound_h+} that $0$ is not in the spectrum of ${\mathfrak h}$ for all~$x$. By a simple continuity argument, we obtain that for $|\mu|$ small enough, $0$ is not in the spectrum of ${\mathfrak h}^x-\mu{\mathfrak p}^x$ for all $x$. We therefore have 
\begin{equation}\label{eq:def_zeta}
\zeta(\mu)=\sum_{x=1}^{\NFrag}  \frac{1}{2\pi i} \oint_{\mathcal C} \Trace\left({\mathfrak p}^x \left( z-({\mathfrak h}^x-\mu{\mathfrak p}^x) \right)^{-1}  \right) \, dz \qquad \mbox{(for $|\mu|$ small enough)}, 
\end{equation}
where $\mathcal C$ is e.g. a circle in the complex plane, centered on the negative real axis, containing $0$ and of large enough radius. It follows that $\zeta$ is analytic in the vicinity of $0$ and that
\begin{align} \label{eq:zeta'0}
\zeta'(0)&= - \sum_{x=1}^{\NFrag}  \frac{1}{2\pi i} \oint_{\mathcal C} \Trace\left({\mathfrak p}^x \left( z-{\mathfrak h}^x \right)^{-1} {\mathfrak p}^x
  \left( z-{\mathfrak h}^x \right)^{-1}  \right) \, dz = \sum_{x=1}^{\NFrag} \langle {\mathfrak p}^x ,  {\mathfrak L}_x^+  {\mathfrak p}^x \rangle,
\end{align}
where ${\mathfrak L}_x^+$ is the linear operator on $\R^{2\FragSize \times 2\FragSize}_{\rm sym}$ defined by 
\begin{equation}\label{eq:Lx+}
\forall M \in \R^{2\FragSize \times 2\FragSize}_{\rm sym}, \quad {\mathfrak L}_x^+M = 
- \frac{1}{2\pi i} \oint_{\mathcal C}  \left( z-{\mathfrak h}^x \right)^{-1} M
  \left( z-{\mathfrak h}^x \right)^{-1}   \, dz,
\end{equation}
which can alternatively be defined by the linear response formula
\begin{equation}\label{eq:linear_response}
\1_{(-\infty,0]}({\mathfrak h}^x+ M) = \1_{(-\infty,0]}({\mathfrak h}^x) - {\mathfrak L}_x^+M + o(\|M\|).
\end{equation}
Let us diagonalize the real symmetric matrix ${\mathfrak h}^x$ as
$$
{\mathfrak h}^x = \sum_{n=1}^{2\FragSize} \widetilde \varepsilon_{x,n} \widetilde\phi_{x,n} \widetilde\phi_{x,n}^T \quad \mbox{with} \quad
\widetilde \varepsilon_{x,1} \le \cdots \le \widetilde \varepsilon_{x,2\FragSize}, \quad \widetilde\phi_{x,m}^T\widetilde\phi_{x,n}=\delta_{mn},
$$
with (using \eqref{eq:lower_bound_h+})
$$
\forall 1 \le i \le \FragSize, \quad \forall \FragSize \le a \le 2\FragSize, \quad \widetilde\varepsilon_{x,i} \le \varepsilon_N < 0 < \varepsilon_{N+1} \le \widetilde\varepsilon_{x,a}.
$$
 Using Cauchy residue formula, we get
\begin{align} \label{eq:L+OD}
\forall M  = \left( \begin{array}{cc} M^{--} & {M^{+-}}^T \\ M^{+-} & M_{++} \end{array} \right) \in \R^{2\FragSize \times 2\FragSize}_{\rm sym}, \quad  {\mathfrak L}_x^+M =   \left( \begin{array}{cc} 0 & N(M^{+-})^T \\ N(M^{+-}) & 0 \end{array} \right)
\end{align}
with 
\begin{equation}\label{eq:NM} \forall 1 \le m,n \le \FragSize, \quad 
[N(M^{+-})]_{mn}=\frac{[M^{+-}]_{mn}}{\widetilde\varepsilon_{x,m+\FragSize}-\widetilde\varepsilon_{x,n}}.
\end{equation}
The operator ${\mathfrak L}_x^+$ is self-adjoint and positive. Denoting by $\gamma:=\varepsilon_{N+1}-\varepsilon_N > 0$ the HOMO-LUMO gap, we have
\begin{align}
\forall M  = \left( \begin{array}{cc} M^{--} & M^{-+} \\ M^{+-} & M_{++} \end{array} \right) \in \R^{2\FragSize \times 2\FragSize}_{\rm sym}, \quad \langle M, {\mathfrak L}_x^+M \rangle & \ge  2\gamma^{-1} \|M^{-+}\|^2 \label{eq:LB_Liouvillian_PI}.
\end{align}
Indeed, we have
\begin{align*}
\langle M, {\mathfrak L}_x^+M \rangle = 2 \sum_{x=1}^{\NFrag}  \sum_{i=1}^{\FragSize}  \sum_{a=\FragSize+1}^{2\FragSize} \frac{|\widetilde\phi_{x,i}^T M \widetilde\phi_{x,a}|^2}{\widetilde\varepsilon_{x,a} -\widetilde\varepsilon_{x,i} } &\ge 2 \gamma^{-1} \sum_{x=1}^{\NFrag}  \sum_{i=1}^{\FragSize}  \sum_{a=\FragSize+1}^{2\FragSize} |\widetilde\phi_{x,i}^T M \widetilde\phi_{x,a}|^2 \\ & = 2 \gamma^{-1} \|\1_{(-\infty,0)}({\mathfrak h}^x)  M \1_{(0,+\infty)}({\mathfrak h}^x)\|^2 
= 2\gamma^{-1} \|M^{-+}\|^2 .
\end{align*}
Let 
$$
{\mathcal J}_0:= \left\{ \mu \in \R \; \bigg| \; \prod_{x=1}^{\NFrag} {\rm det}\left({\mathfrak h}^x-\mu{\mathfrak p}^x\right) =0 \right\}.
$$
Since $\mu \mapsto {\rm det}\left({\mathfrak h}^x-\mu{\mathfrak p}^x\right)$ is a polynomial of degree $\FragSize$, the set  ${\mathcal J}_0$ contains at most $\DimH$ elements. By similar arguments as above, the function $\zeta$ is real-analytic and non-decreasing on each connected components of $\R \setminus {\mathcal J}_0$. At each $\mu_0 \in {\mathcal J}_0$, the jump of $\zeta$ is given by
$$
\zeta(\mu_0+0)-\zeta(\mu_0-0) = \sum_{x=1}^{\NFrag}   \Trace\left({{\mathfrak p}^x} \1_{\{0\}}\left( {\mathfrak h}^x-\mu_0{\mathfrak p}^x  \right)   {{\mathfrak p}^x}\right) \ge 0.
$$
The function $\zeta$ is therefore nondecreasing on $\R$. As a consequence, the set $M_{\DGS}$ is an interval $I_{\DGS}$ containing $0$. Using \eqref{eq:matrix_pix}, \eqref{eq:zeta'0} and \eqref{eq:LB_Liouvillian_PI}, we get
\begin{align*}
\zeta'(0)& \ge 2 \gamma^{-1} \sum_{x=1}^{\NFrag} \|D_{0,x}^{1/2}(1-D_{0,x})^{1/2}\|^2 = 2 \gamma^{-1}  \sum_{x=1}^{\NFrag} \Trace(D_{0,x}(1-D_{0,x})) > 0,
\end{align*}
since, in view of Lemma~\ref{lem:Omega_L}, all the eigenvalues of the symmetric matrix $D_{0,x}(1-D_{0,x})$ are positive. Thus $M_{\DGS}=\{0\}$. This proves that $\DGS$ is in the domain of $\HLMap_0$ and that $\HLMap(\DGS)=\PGS$. 

\medskip

Combining this result with the previously established relation $\LLMap_0(\PGS)=\DGS$, we obtain that $\PGS$ is a fixed point of the DMET map for $\alpha=0$.

\subsection{Proof of Theorem~\ref{thm:theory}}

We endow $\Grass$ with the Riemannian metric induced by the Frobenius inner product on $\R^{\DimH \times \DimH}_{\rm sym}$.
For $\eta > 0$, we set
$$
\omega_\eta := \left\{ P \in \DiagonalBlockCHGrass \; | \; \|P-\PGS\| < \eta \right\} \quad \mbox{and} \quad \Omega_\eta := \left\{ D \in \Grass \; | \; \|D-\DGS\| < \eta \right\}.
$$

\subsubsection{Low-level map in the perturbative regime}
\label{sec:LLMPR}

Let us introduce the maps
\begin{align}
 &g: \Grass \to \cY && \mbox{s.t.} \quad  \forall D \in \Grass, \quad g(D):=\PartitionProj(D)-\PGS,  \nonumber \\
 &a : \Grass \to \R &&  \mbox{s.t.} \quad  \forall D \in \Grass, \quad a(D) := \Trace(hD), \nonumber \\
 &b: \Grass \to \R && \mbox{s.t.} \quad  \forall D \in \Grass, \quad b(D) := \frac 12 \Trace\left((J(D)-K(D)) D\right), \nonumber \\
 & E: \R \times \Grass \to \R  && \mbox{s.t.} \quad  \forall (\alpha,D) \in \R \times \Grass, \quad E(\alpha,D) := \EMF_\alpha(D)=a(D) + \alpha b(D). \nonumber
\end{align}
Since the maps $\PartitionProj,J,K: \R^{\DimH \times \DimH}_{\rm sym} \to \R^{\DimH \times \DimH}_{\rm sym}$ are linear, the maps $g$, $a$, $b$ and $E$ are real-analytic. With this notation,  we have
$$
\left( \mbox{Assumption~(A3)} \right) \iff \left( B:=d_{\DGS}g=\PartitionProj: T_{\DGS}\Grass \to \cY \mbox{ surjective}\right).
$$

\begin{lemma}[Low-level map in the perturbative regime] \label{lem:ll_perturbative} Under Assumptions (A1)-(A3), there exists $\alpha_{\rm LL} > 0$ and $0 < \eta_{\rm LL} < \frac 12$ such that 
\begin{enumerate}
\item $\omega_{\eta_{\rm LL}} \subset {\rm Dom}(\LLMap_\alpha)$ for all $\alpha \in (-\alpha_{\rm LL},\alpha_{\rm LL})$;
\item the function $(\alpha,P) \mapsto \LLMap_\alpha(P)$ is real-analytic on $(-\alpha_{\rm LL},\alpha_{\rm LL}) \times \omega_{\eta_{\rm LL}}^\cY$.
\end{enumerate}
\end{lemma}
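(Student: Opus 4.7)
The plan is to apply the analytic implicit function theorem to the KKT system for the constrained minimization $\LLMap_\alpha(P) = \mathop{\mathrm{argmin}}_{D \in \Grass,\, \PartitionProj(D) = P} E(\alpha, D)$, and then promote the local critical branch so obtained to the global argmin by a compactness argument. Since $\Grass$ is a compact real-analytic submanifold of $\R^{\DimH \times \DimH}_{\rm sym}$ and $E$ is polynomial in $(\alpha, D)$, analyticity will follow once non-degeneracy is verified at the base point $(\alpha, P, D) = (0, \PGS, \DGS)$. I would introduce a Lagrange multiplier $\Lambda \in \cY$ for the constraint and form the KKT map
\[
F(\alpha, P, D, \Lambda) := \bigl( \Pi_{T_D\Grass}(\nabla_D E(\alpha,D) - \Lambda),\; \PartitionProj(D) - P \bigr) \in T_D\Grass \times \cY,
\]
using the fact that $\PartitionProj(\Lambda) = \Lambda$ for $\Lambda \in \cY$. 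At the base point, both components of $F$ vanish: the second by definition of $\PGS = \PartitionProj(\DGS)$, and the first because Assumption (A1) makes $\DGS = \1_{(-\infty,0]}(h)$ the unconditional argmin of $a(D) := \Trace(hD)$ over $\Grass$, so that $\Pi_{T_{\DGS}\Grass}(h) = 0$.

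The differential of $F$ with respect to $(D, \Lambda)$ at the base point has the bordered form $\left(\begin{smallmatrix} H & -B^* \\ B & 0 \end{smallmatrix}\right)$, where $B = d_{\DGS}\PartitionProj|_{T_{\DGS}\Grass} : T_{\DGS}\Grass \to \cY$ is surjective by (A3), and $H$ is the Riemannian Hessian of $a$ at $\DGS$. Parametrizing $\xi \in T_{\DGS}\Grass$ as $\xi = [X, \DGS]$ with $X$ skew-symmetric and diagonalizing $h$ in its orthonormal eigenbasis $(\phi_n)$, a direct computation using the curve $D(t) = e^{tX}\DGS e^{-tX}$ yields
\[
\langle \xi, H\xi \rangle \;=\; 2 \sum_{i \le N < a} (\varepsilon_a - \varepsilon_i)\, X_{ia}^2 \;\ge\; \gamma \|\xi\|^2, \qquad \gamma := \varepsilon_{N+1} - \varepsilon_N > 0.
\]
Thus $H$ is positive-definite on $T_{\DGS}\Grass$, and in particular on $\Ker B$; combined with surjectivity of $B$, this gives invertibility of the bordered system (standard saddle-point fact). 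The analytic IFT then produces real-analytic maps $(\alpha, P) \mapsto (D(\alpha, P), \Lambda(\alpha, P))$ defined on some product neighborhood $(-\alpha_{\rm LL}, \alpha_{\rm LL}) \times \omega_{\eta_{\rm LL}}$ of $(0, \PGS)$, with $F \equiv 0$ and base value $(\DGS, 0)$.

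It remains to identify $D(\alpha, P)$ with the global argmin defining $\LLMap_\alpha(P)$. By continuity of the Hessian, $D(\alpha, P)$ is a strict local minimizer of $E(\alpha, \cdot)$ on the constraint manifold for all parameters close to the base point. To rule out a distant competing global minimizer, I would use a short compactness argument: for any sequence of global minimizers $D_n$ at parameters $(\alpha_n, P_n) \to (0, \PGS)$, the lower bound $a(D_n) \ge a(\DGS)$ (from (A1), valid on all of $\Grass$) combined with the upper bound $E(\alpha_n, D_n) \le E(\alpha_n, D_n')$ for any $D_n' \in \Grass$ with $\PartitionProj(D_n') = P_n$ and $D_n' \to \DGS$ (such $D_n'$ exist by (A3) and the IFT applied to the constraint $\PartitionProj(D) = P$ alone) forces $E(\alpha_n, D_n) \to a(\DGS)$. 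Since $\DGS$ is the unique global minimizer of $a$ on $\Grass$, a compactness/continuity argument then gives $D_n \to \DGS$; shrinking $\alpha_{\rm LL}, \eta_{\rm LL}$ if necessary, every global minimizer eventually lies in the IFT neighborhood and therefore coincides with $D(\alpha, P)$. The main obstacle is precisely this global-versus-local identification; the Hessian computation and invertibility of the bordered system are routine once (A1) and (A3) are available.
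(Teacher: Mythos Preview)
Your proposal is correct and follows essentially the same route as the paper: both set up the KKT system for the constrained problem, verify that at $(\alpha,P,D,\Lambda)=(0,\PGS,\DGS,0)$ the bordered Jacobian $\left(\begin{smallmatrix} A & B^* \\ B & 0 \end{smallmatrix}\right)$ is invertible (using $A\succeq\gamma I$ on $T_{\DGS}\Grass$ from (A1) and surjectivity of $B=\PartitionProj|_{T_{\DGS}\Grass}$ from (A3)), and then apply the analytic implicit function theorem. The only minor difference is in the global-versus-local identification: the paper uses the quantitative coercivity inequality $a(D)\ge a(\DGS)+\tfrac{\gamma}{2}\|D-\DGS\|^2$ to bound $\|D_{\alpha,P}-\DGS\|$ directly, whereas you use a softer sequential compactness argument based on the same ingredients; both are valid and the paper's version additionally yields an explicit rate.
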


\begin{proof} The first assertion means that for all $(\alpha,P)  \in (-\alpha_{\rm LL},\alpha_{\rm LL}) \times \omega_{\eta_{\rm LL}}$, the problem 
\begin{equation} \label{eq:min_EMF_alpha_M}
\min_{D \in \Grass \; | \; \PartitionProj(D)=P} \EMF_\alpha(D) = \min_{D \in \Grass \; | \; g(D)=P-\PGS} E(\alpha,D) 
\end{equation}
has a unique minimizer, which we denote by $\LLMap_\alpha(P)$. 

\medskip

Using Lemma~\ref{lem:N-rep-loc} and the submersion theorem, we deduce from Assumptions (A2)-(A3) that there exists $\eta > 0$ and $C \in \R_+$ such that for all $P \in \omega_\eta$, the set $\PartitionProj^{-1}(P)$ is nonempty and there exists $D_P \in \PartitionProj^{-1}(P)$ such that $\|D_P-\DGS \| \le C \|P-\PGS\|$. Let $D_{\alpha,P}$ be a minimizer of $\EMF_\alpha$ on $\PartitionProj^{-1}(P)$. Such a minimizer exists since $\EMF_\alpha$ is continuous on $\Grass$ and $\PartitionProj^{-1}(P)$ is a nonempty compact subset of $\Grass$, and satisfies the optimality conditions
\begin{equation} \label{eq:min_EMF_alpha_M_Euler}
\nabla_{\Grass} E(\alpha,D_{\alpha,P}) + d_{D_{\alpha,P}}g^* \Lambda_{\alpha,P} = 0, \quad g(D_{\alpha,P})=P-\PGS,
\end{equation}
where $\nabla_{\Grass} E(\alpha,D_{\alpha,P}) \in T_{D_{\alpha,P}} \Grass$ is the gradient at $D_{\alpha,P}$ of the function $\Grass \ni D \to E(\alpha,D) \in \R$ for the Riemannian metric induced with the Frobenius inner product, and $\Lambda_{\alpha,P} \in \cY$ the Lagrange multiplier of the constraint $g(D_{\alpha,P})=P-\PGS$.

Denoting by 
$$
C_{\rm nl}:= \frac 12 \max_{D \in \Grass} |\Trace((J(D)-K(D))D)|,
$$
we have 
\begin{align}
\EMF_\alpha(D_{\alpha,P})  &\le \EMF_\alpha(D_{P}) \le \EMF_0(D_{P}) + \alpha C_{\rm nl} \le \EMF_0(\DGS) + \|h\|  \|P-\PGS\| + \alpha C_{\rm nl} . \label{eq:EMF_alpha_UB}
\end{align}
To obtain a lower bound of $\EMF_\alpha(D_{\alpha,P})$, we use that
\begin{align*}
\forall D \in \Grass, \quad \EMF_0(D) = \Trace(hD) \ge \EMF_0(\DGS) + \frac \gamma 2 \|D-\DGS\|^2.
\end{align*}
This inequality is classical, but we recall its proof for the sake of completeness. For $M \in \R^{\DimH \times \DimH}_{\rm sym}$ we set
$$
M^{--}:=\DGS M \DGS, \quad M^{-+}:=\DGS M(1-\DGS), \quad M^{+-}:=(1-\DGS)M\DGS, \quad M^{++}:=(1-\DGS)M(1-\DGS).
$$
Let $D \in \Grass$ and $Q:=D-\DGS$. Since $\DGS=\1_{(-\infty,0]}(h)$, we have 
$$
h^{-+}=h^{+-}=0, \quad h^{--} \le \varepsilon_N, \quad h^{++} \ge \varepsilon_{N+1}, \quad Q^{++} \ge 0, \quad Q^{--} \le 0,
$$
and we deduce from the fact that both $D$ and $\DGS$ are rank-$\NElec$ orthogonal projectors that
$$
 Q^2=Q^{++}-Q^{--} \quad \mbox{and} \quad  \Trace(Q^{++})+\Trace(Q^{--})=0.
$$
Combining all the above properties, we obtain
\begin{align}
 \forall D \in \Grass, \quad a(D)  & =\Trace(hD) \nonumber \\ &= \Trace(h\DGS) +\Trace(h(D-\DGS)) \nonumber  \\
& = a(\DGS) + \Trace\left( h^{++} Q^{++} \right) + \Trace \left( h^{--} Q^{--} \right) \nonumber \\
& \ge a(\DGS) + \varepsilon_{N+1} \Trace\left(  Q^{++} \right) + \varepsilon_{N} \Trace \left(Q^{--} \right) \nonumber \\
&= a(\DGS) + \frac\gamma 2 \Trace\left(  Q^{++} - Q^{--} \right) \nonumber\\
&= a(\DGS) + \frac \gamma 2  \|D-\DGS\|^2. \label{eq:a_coercive}
\end{align}
As $\EMF_0(D)=a(D)$, this implies that
\begin{align*}
&\EMF_\alpha(D_{\alpha,P}) \ge  \EMF_0(D_{\alpha,P}) - \alpha C_{\rm nl} \ge \EMF_0(\DGS)+ \frac \gamma 2 \|D_{\alpha,P}-\DGS\|^2 - \alpha C_{\rm nl} .
\end{align*}
Combining this result with \eqref{eq:EMF_alpha_UB}, we obtain
$$
\|D_{\alpha,P}-\DGS\|^2 \le  2\gamma^{-1} \left( 2\alpha C_{\rm nl} +  \|h\| P-\PGS \| \right).
$$
This implies in particular that for $|\alpha|$ and $\|P-\PGS\|$ small enough, any minimizer $D_{\alpha,P}$ of \eqref{eq:min_EMF_alpha_M} is close to $\DGS$. To conclude, it suffices to prove that for $|\alpha|$ and $\|P-\PGS \|$ small enough, \eqref{eq:min_EMF_alpha_M_Euler} has a unique critical point close to $\DGS$. 
This leads us to introduce the function 
$$
\Theta : (\R \times {\mathcal P}) \times (\Grass \times \cY) \ni ((\alpha,P),(D,\Lambda)) \mapsto \Theta((\alpha,P),(D,\Lambda)) \in T_D\Grass \times \cY
$$
defined by 
$$
\Theta((\alpha,P),(D,\Lambda)):= \left( \nabla_{\Grass}E(\alpha,D) + (d_{D}g)^* \Lambda , g(D)-(P-\PGS) \right).
$$
As $\DGS$ is the unique minimizer of $D \mapsto E(0,D)$ on $\Grass$ and $\PGS=\PartitionProj(\DGS)$, we have $\nabla_{\Grass}E(0,\DGS)=0$ and $g(\DGS)=0$, so that 
$$
\Theta((0,\PGS),(\DGS,0))=(0,0).
$$
In addition, denoting by 
\begin{equation}\label{eq:defA}
A:= D^2_{\Grass}a(\DGS) : T_{\DGS} \Grass \to T_{\DGS} \Grass
\end{equation}
the Hessian at $\DGS$ of the function $a$ for the Riemannian metric induced by the Frobienius inner product, we have
$$
\forall (Q,\Lambda) \in T_{\DGS} \Grass  \times \cY, 
\quad \left[ d_{D,\Lambda}\Theta ((0,\PGS),(\DGS,0)) \right]
\begin{pmatrix}
  Q\\\Lambda
\end{pmatrix}
=
\begin{pmatrix}
  A&B^{*}\\
  B&0
\end{pmatrix}
\begin{pmatrix}
  Q\\\Lambda
\end{pmatrix},
$$
where we recall that $B:=d_{\DGS}g$. In view of \eqref{eq:a_coercive}, we have
\begin{equation}\label{eq:D2a_defpos}
\forall Q \in T_{\DGS} \Grass, \quad  \langle Q,AQ\rangle \ge \gamma \|Q\|^2.
\end{equation}
Since $A$ is coercive and $B: T_{\DGS} \Grass \to \cY$ is surjective,
it follows from the Schur complement formula that the map 
$$
d_{D,\Lambda}\Theta ((0,\PGS),(\DGS,0)) : T_{\DGS}\Grass \times \cY \to T_{\DGS}\Grass \times \cY
$$
is invertible. It follows from the real-analytic implicit function theorem on manifolds that there exists $\alpha_{\rm LL} > 0$, $\eta > 0$ and $\eta > 0$, such that for all $(\alpha,P) \in (-\alpha_{\rm LL},\alpha_{\rm LL}) \times \omega_{\eta}$, \eqref{eq:min_EMF_alpha_M_Euler} has a unique solution $(D_{\alpha,P},\Lambda_{\alpha,P})$ with $D_{\alpha,P} \in \omega_{\eta}$ and the map $(\alpha,P) \mapsto D_{\alpha,P}$ is real-analytic on $(-\alpha_{\rm LL},\alpha_{\rm LL}) \times \omega_{\eta}$.
\end{proof}

\subsubsection{High-level map in the perturbative regime} \label{sec:HL-PT}

The following result states that the high-level map $(\alpha,D) \mapsto \HLMap_\alpha(D)$ is well-defined and real-analytic on a neighborhood of $(0,\DGS)$. 

\begin{lemma}[High-level map in the perturbative regime] \label{lem:hl_perturbative} 
Under Assumptions (A1)-(A2), there exists $\alpha_{\rm HL} > 0$ and $0 < \eta_{\rm HL} < \frac 12$ such that 
\begin{enumerate}
\item  $\Omega_{\eta_{\rm HL}} \subset {\rm Dom}(\HLMap_\alpha)$ for all $\alpha \in (-\alpha_{\rm HL},\alpha_{\rm HL})$;
\item the function $(\alpha,D) \mapsto \HLMap_\alpha(D)$ is real-analytic on $(-\alpha_{\rm HL},\alpha_{\rm HL}) \times \Omega_{\eta_{\rm HL}}$.
\end{enumerate}
\end{lemma}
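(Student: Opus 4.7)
The plan is to work locally around $(\alpha, D, \mu) = (0, \DGS, 0)$, combining analytic perturbation theory for the grand-canonical impurity Hamiltonian with the analytic implicit function theorem to pin down the chemical potential $\mu$ as an analytic function of $(\alpha, D)$. As a preliminary step, Assumption (A2) together with Lemma~\ref{lem:Omega_L} ensures that $\DGS$ is compatible with the fragment decomposition; this condition, being the invertibility of $\MatFrag^T D \MatFrag$ and $\MatFrag^T(1-D)\MatFrag$ for each $x$, is open in $D$, so there is a neighborhood $\Omega_1$ of $\DGS$ in $\Grass$ on which every $D$ is compatible and the matrices $C^x(D)$ from \eqref{eq:matrix_C} depend real-analytically on $D$. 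Using these bases to identify each $\Fock(\Impurity{D})$ with a fixed Fock space of dimension $2^{2\FragSize}$, the explicit formula in Proposition~\ref{prop:HimpDefinition} shows that the matrix of $\ImpurityHamiltonian{D} - \mu \NumberOpe_{\Frag}$ associated with $\Hamiltonian = \Hamiltonian_\alpha$ depends jointly real-analytically on $(\alpha, D, \mu) \in \R \times \Omega_1 \times \R$.

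Second, I would analyze the base point $(\alpha, D, \mu) = (0, \DGS, 0)$. The impurity operator reduces there to a quadratic Hamiltonian whose one-body matrix is $\mathfrak{h}^x$ from \eqref{eq:matrix_hr}; by Assumption (A1) and the bounds \eqref{eq:lower_bound_h+}, $\mathfrak{h}^x$ has a spectral gap around $0$, with negative eigenvalues bounded above by $\varepsilon_N < 0$ and positive ones bounded below by $\varepsilon_{N+1} > 0$. The grand-canonical ground state in $\Fock(\Impurity{\DGS})$ is therefore the unique Slater determinant filling the $\FragSize$ negative modes, with many-body gap at least $\min(|\varepsilon_N|, \varepsilon_{N+1}) > 0$. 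Standard analytic perturbation theory, conveniently implemented through a Riesz contour integral of the resolvent in the spirit of \eqref{eq:def_zeta}--\eqref{eq:Lx+}, preserves this spectral gap in a neighborhood of $(0, \DGS, 0)$ and yields real-analytic dependence on $(\alpha, D, \mu)$ of the normalized ground state and of its associated one-body density matrix $P_{x, D, \mu}$.

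Third, I would solve the chemical-potential equation by the analytic implicit function theorem. Introducing
\begin{equation*}
\zeta(\alpha, D, \mu) := \sum_{x=1}^{\NFrag} \Trace\bigl(\Projector[x] P_{x, D, \mu} \Projector[x]\bigr) - \NElec,
\end{equation*}
this function is real-analytic by the previous step and vanishes at $(0, \DGS, 0)$ by Proposition~\ref{prop:DMET0}. The crucial ingredient is $\partial_\mu \zeta(0, \DGS, 0) > 0$, which is precisely the computation performed in the proof of Proposition~\ref{prop:DMET0}, combining \eqref{eq:zeta'0} with the coercivity estimate \eqref{eq:LB_Liouvillian_PI} and the positivity of $\Trace(D_{0,x}(1-D_{0,x}))$ on every fragment. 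The implicit function theorem then produces a real-analytic map $(\alpha, D) \mapsto \mu^\star(\alpha, D)$ defined on a neighborhood of $(0, \DGS)$ with $\mu^\star(0, \DGS) = 0$, and setting $\HLMap_\alpha(D) := \sum_{x=1}^{\NFrag} \Projector[x] P_{x, D, \mu^\star(\alpha, D)} \Projector[x]$ yields the desired real-analytic high-level map.

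The main technical subtlety, rather than a genuine obstacle, is to verify that $\mu^\star(\alpha, D)$ places $D$ in $\mathrm{Dom}(\HLMap_\alpha)$ as defined in Section~\ref{sec:hl_DMET}: one must check both that $M_D$ is nonempty (immediate, since $\mu^\star(\alpha, D) \in M_D$) and that $\cF_D$ is constant on $M_D$. By strict monotonicity of $\mu \mapsto \zeta(\alpha, D, \mu)$ near $\mu^\star$, and a continuity argument analogous to the one concluding the proof of Proposition~\ref{prop:DMET0}, upon shrinking $\alpha_{\rm HL}$ and $\eta_{\rm HL}$ if needed the intersection of $M_D$ with a fixed neighborhood of $0$ reduces to the singleton $\{\mu^\star(\alpha, D)\}$, which is enough to secure the domain condition locally and to conclude the proof.
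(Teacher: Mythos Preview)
Your proposal is correct and follows essentially the same route as the paper's proof: real-analytic dependence of the impurity Hamiltonian on $(\alpha,D,\mu)$, non-degeneracy and gap at the base point $(0,\DGS,0)$ drawn from the proof of Proposition~\ref{prop:DMET0}, Kato-type analytic perturbation theory for the ground state, and the implicit function theorem applied to the particle-number constraint using $\zeta'(0)>0$. The only cosmetic difference is that the paper parametrizes the impurity problem via the basis $\widetilde C^x(D)$ of \eqref{eq:matrix_C_tilde} rather than $C^x(D)$, and that you spell out the domain condition on $M_D$ slightly more explicitly than the paper does; neither affects the argument.
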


\begin{proof} For $D \in \Grass$ compatible with the fragment decomposition, we set
\begin{align}\label{eq:matrix_h_tilde}
[\widetilde h_x(D)]_{\kappa\lambda} &:=\left[ \widetilde C^x(D)^T h \widetilde C^x(D)\right]_{\kappa\lambda} =  \sum_{\kappa'\lambda' = 1}^{\DimH} [\widetilde C^x(D)]_{\kappa,\kappa'} [\widetilde C^x(D)]_{\lambda,\lambda'}  h_{\kappa'\lambda'}, \\
[\widetilde V_x(D)]_{\kappa\lambda\nu\xi}&:= \sum_{\kappa'\lambda'\nu'\xi' = 1}^{\DimH} [\widetilde C^x(D)]_{\kappa,\kappa'} [\widetilde C^x(D)]_{\lambda,\lambda'} [\widetilde C^x(D)]_{\nu,\nu'}  [\widetilde C^x(D)]_{\xi,\xi'} V_{\kappa'\lambda'\nu'\xi'},
\end{align}
where $\widetilde C^x(D)$ is defined in Lemma~\ref{lem:Omega_L}.
Denoting by $c_\kappa$, $c_\kappa^\dagger$, $1 \le \kappa \le 2\FragSize$ the generators of the CAR algebra on $\Fock(\R^{2\FragSize})$ associated with the canonical basis of $\R^{2\FragSize}$, the high-level map can be formally written as
\begin{align} \label{eq:def_Fhl_2_formal}
\HLMap_\alpha(D)=\sum_{x =1}^{\NFrag} \sum_{\kappa,\lambda=1}^{\FragSize} e_{\FragSize'+\kappa} \Trace_{\Fock(\R^{2N_x})}\left( \Gamma_{\alpha,x,D,\mu} c_\kappa^\dagger c_\lambda \right) e_{\FragSize'+\lambda}^T \quad \mbox{(formal)},
\end{align}
where $\Gamma_{\alpha,x,D,\mu} \in \LinearMap(\Fock(\R^{2\FragSize}))$ is the ground-state (many-body) density matrix associated with the grand-canonical impurity Hamiltonian
$$
\widetilde H_{\alpha,x,D,\mu}^{\rm imp}  := \sum_{\kappa,\lambda=1}^{2\FragSize} [\widetilde h_x(D)]_{\kappa\lambda} c_\kappa^\dagger c_\lambda + 
 \alpha \sum_{\kappa,\lambda,\nu,\xi=1}^{2\FragSize} [\widetilde V_x(D)]_{\kappa\lambda\nu\xi} c_\kappa^\dagger c_\lambda^\dagger c_\xi c_\nu -  \mu \sum_{\kappa=1}^{\FragSize} c_\kappa^\dagger c_\kappa,
$$
the parameter $\mu \in \R$ being chosen such that
$$
\sum_{x =1}^{\NFrag} \sum_{\kappa,\lambda=1}^{\FragSize}  \Trace_{\Fock(\R^{2N_x})}\left( \Gamma_{\alpha,x,D,\mu} c_\kappa^\dagger c_\lambda \right) =N.
$$
The results established in the proof of Proposition~\ref{prop:DMET0} can be rephrased as follows: under Assumptions (A1)-(A2), 
\begin{enumerate}
\item the impurity Hamiltonian $\widetilde H_{0,x,\DGS,0}^{\rm imp}$ has a non-degenerate ground-state for each $x$ and that it holds
$$
\sum_{x =1}^{\NFrag} \sum_{\kappa,\lambda=1}^{\FragSize}  \Trace_{\Fock(\R^{2N_x})}\left( \Gamma_{0,x,\DGS,0} c_\kappa^\dagger c_\lambda \right) = N;
$$
\item the function
$$
\R \ni \mu \mapsto \sum_{x =1}^{\NFrag} \sum_{\kappa,\lambda=1}^{\FragSize}  \Trace_{\Fock(\R^{2N_x})}\left( \Gamma_{0,x,\DGS,\mu} c_\kappa^\dagger c_\lambda \right) \in \R
$$
is non-decreasing, real-analytic in the neighborhood of $\mu=0$, and its derivative at $\mu=0$ is positive.
\end{enumerate}
Since the maps
$$
\Grass \ni D \mapsto[\widetilde h_x(D]_{\kappa\lambda} \in \R \quad \mbox{and} \quad \Grass \ni D \mapsto [\widetilde V_x(D)]_{\kappa\lambda\nu\xi} \in \R
$$
are real-analytic in the neighborhood of $\DGS$, we deduce from Kato's analytic perturbation theory and the implicit function theorem that there exists $\alpha_{\rm HL} > 0$, $\eta_{\rm HL} > 0$, and $\mu_{\rm HL} > 0$ such that 
\begin{enumerate}
\item for each $(\alpha,D,\mu) \in (-\alpha_{\rm HL},\alpha_{\rm HL}) \times \Omega_{\eta_{HL}} \times (-\mu_{\rm HL},\mu_{\rm HL})$, the impurity Hamiltonian $H^{\rm imp}_{\alpha,x,D,\mu}$ has a non-degenerate ground-state for each $x$; we denote by $\Gamma_{\alpha,x,D,\mu(\alpha,X)}$ the corresponding ground-state many-body density matrix;
\item for each $(\alpha,D) \in (-\alpha_{\rm HL},\alpha_{\rm HL}) \times \Omega_{\eta_{HL}}$, there exists a unique $\mu(\alpha,D) \in (-\mu_{\rm HL},\mu_{\rm HL})$ such that
$$
\sum_{x =1}^{\NFrag} \sum_{\kappa,\lambda=1}^{\FragSize}  \Trace_{\Fock(\R^{2N_x})}\left( \Gamma_{\alpha,x,D,\mu(\alpha,D)} c_\kappa^\dagger c_\lambda \right) = N;
$$
\item the maps $(\alpha,D) \mapsto \mu(\alpha,D)$, $(\alpha,D) \mapsto \Gamma_{\alpha,x,D,\mu(\alpha,D)}$, and 
$$
(\alpha,D) \mapsto \HLMap_\alpha(D):= \left(\sum_{x =1}^{\NFrag} \sum_{\kappa,\lambda=1}^{\FragSize} e_{\FragSize'+\kappa} \Trace_{\Fock(\R^{2N_x})}\left( \Gamma_{\alpha,x,D,\mu(\alpha,D)} c_\kappa^\dagger c_\lambda \right) e_{\FragSize'+\lambda}^T  \right)
$$
are real-analytic on $(-\alpha_{\rm HL},\alpha_{\rm HL}) \times \Omega_{\eta_{\rm HL}}$.
\end{enumerate}
This proves the two assertions of Lemma~\ref{lem:hl_perturbative}.
\end{proof}

\subsubsection{Existence, uniqueness, and analyticity}
\label{sec:EUA}

We infer from Lemma~\ref{lem:ll_perturbative} and Lemma~\ref{lem:hl_perturbative} that there exist $\alpha_{\rm DMET} > 0$ and $\eta_{\rm DMET} > 0$ such that the function 
$$
 (-\alpha_{\rm DMET},\alpha_{\rm DMET}) \times \omega_{\eta_{\rm DMET}} \ni \alpha, P \mapsto  \Phi(\alpha,P):=F_\alpha^{\rm DMET}(P)-P := F_\alpha^{\rm HL}(F_\alpha^{\rm LL}(P))-P \in \cY
$$
is well-defined and real-analytic, and we know from Proposition~\ref{prop:DMET0} that
$$
\Phi(0,\PGS)=0.
$$
To complete the proof of Theorem~\ref{thm:theory}, we have to check that the function $\Phi$ satisfies all the hypotheses of the implicit function theorem, namely that 
\begin{equation}\label{eq:dPPhi}
d_P\Phi(0,\PGS) = (d_{\DGS}F_0^{\rm HL}) \, (d_{\PGS}\LLMap_0) - I_\cY : \cY \to \cY
\end{equation}
is invertible.

\medskip

Let us first compute $d_{\PGS}\LLMap_0: \cY \to T_{\DGS}\Grass$. Differentiating the equality
$$
\forall P \in \omega_\eta, \quad 
\Theta((0,P),(\LLMap_0(P),\Lambda_{0,P})) = (0,0),
$$
we obtain that the derivatives at $\PGS$ of the functions $\omega_\eta \ni P \mapsto \LLMap_0(P) \in \Grass$ and $\omega_\eta \ni P \mapsto \lambda(P):=\Lambda_{0,P} \in \cY$ are characterized by the relation
$$
\forall Y \in \cY, \quad
\underbrace{[d_P\Theta((0,\PGS),(\DGS,0))] Y}_{=(0,-Y)}
+ \underbrace{[d_{(D,\Lambda)}\Theta((0,\PGS),(\DGS,0))]((d_{\PGS}\LLMap_0) Y, (d_{\PGS}\lambda) Y)}_{=(A[(d_{\PGS}\LLMap_0) Y]+B^*(d_{\PGS}\lambda) Y),B[(d_{\PGS}\LLMap_0) Y] } = 0,
$$
from which we infer that
\begin{equation} \label{eq:dFll}
d_{\PGS}\LLMap_0 = A^{-1}B^* (BA^{-1}B^*)^{-1}.
\end{equation}

Let us now compute $d_{\DGS}F_0^{\rm HL}: T_{\DGS}\Grass \to \cY$. We have
$$
\forall D \in \Omega_{\eta_{\rm HL}}, \quad F_0^{\rm HL}(D) = \sum_{x=1}^{\NFrag} \Projector[x] C^x(D) \1_{(-\infty,0]} \left( C^x(D)^T (h-\mu(0,D) \Projector[x]) C^x(D) \right) C^x(D)^T \Projector[x],
$$
where the function 
$$
\Grass \ni D \mapsto C^x(D)=(\underbrace{D\MatFrag(\MatFrag^TD\MatFrag)^{-1/2}}_{C^x_-(D)}| \underbrace{(1-D)\MatFrag(\MatFrag^T(1-D)\MatFrag)^{-1/2}}_{C^x_+(D)}) \in \R^{\DimH \times (2\FragSize)}
$$
has been introduced in \eqref{eq:matrix_C}. Setting as previously $C^x_0:=C^x(\DGS)$, and denoting by $M(Q):=[d_{\DGS}C^x](Q)$ and $\ell(Q):=[d_{D}\mu(0,\DGS)](Q)$, we get
\begin{align*}
d_{\DGS}\HLMap(Q) = &
\sum_{x=1}^{\NFrag} \Projector[x] \left( M(Q)  \1_{(-\infty,0]} \left( {\mathfrak h}^x \right) {C^x_0}^T + C^x_0\1_{(-\infty,0]} \left( {\mathfrak h}^x \right) M(Q)^T \right) \Projector[x] \\
& -  \sum_{x=1}^{\NFrag} \Projector[x]C^x_0   {\mathfrak L}_x^+ \left( M(Q)^T h C^x_0 + {C^x_0}^T h M(Q) - \ell(Q) {\mathfrak p}^x  \right) {C^x_0}^T \Projector[x].
\end{align*}
Using \eqref{eq:1hx}, we obtain
\begin{align*}
    M(Q)  \1_{(-\infty,0]} \left( {\mathfrak h}^x \right) {C^x_0}^T + C^x_0\1_{(-\infty,0]} \left( {\mathfrak h}^x \right) M(Q)^T &= [d_{\DGS}C^x_-(Q)] [C^x_-(\DGS)]^T + C^x_-(\DGS)[d_{\DGS}C^x_-(Q)]^T \\&= d_{\DGS}[C^x_-{C^x_-}^T](Q).
\end{align*}
This implies that 
\begin{align*}
\Projector[x] \left( M(Q)  \1_{(-\infty,0]} \left( {\mathfrak h}^x \right) {C^x_0}^T + C^x_0\1_{(-\infty,0]} \left( {\mathfrak h}^x \right) M(Q)^T \right) \Projector[x]&= 
d_{\DGS}[\Projector[x] C^x_-{C^x_-}^T\Projector[x]](Q).
\end{align*}
Since 
\begin{align*}
\Projector[x] C^x_-(D){C^x_-(D)}^T\Projector[x] &= (\MatFrag\MatFrag^T) (D\MatFrag(\MatFrag^TD\MatFrag)^{-1/2}) ((\MatFrag^TD\MatFrag)^{-1/2}\MatFrag^TD)  (\MatFrag\MatFrag^T) = \Projector[x] D \Projector[x],
\end{align*}
we get $d_{\DGS}[\Projector[x] C^x_-{C^x_-}^T\Projector[x]](Q)=\Projector[x]Q\Projector[x]$ and therefore
\begin{align*}
\sum_{x=1}^{\NFrag}\Projector[x] \left( M(Q)  \1_{(-\infty,0]} \left( {\mathfrak h}^x \right) {C^x_0}^T + C^x_0\1_{(-\infty,0]} \left( {\mathfrak h}^x \right) M(Q)^T \right) \Projector[x]&= \PartitionProj(Q) =BQ.
\end{align*}
Next, observing that for all $Q \in T_{\DGS}\Grass$,
\begin{align*}
d_{\DGS}C^x_-(Q)&= \DGS \MatFrag S_-(Q) + Q \MatFrag (\MatFrag^T\DGS \MatFrag)^{-1/2}, \\
d_{\DGS}C^x_+(Q)&= (1-\DGS) \MatFrag S_+(Q) - Q \MatFrag (\MatFrag^T(1-\DGS)\MatFrag)^{-1/2},
\end{align*}
with $Q \mapsto S_\pm(Q) \in \R^{\FragSize \times \FragSize}$ linear and 
\begin{equation}\label{eq:QinTD}
Q=\DGS Q(1-\DGS)+(1-\DGS)Q\DGS,
\end{equation}
we obtain that 
\begin{align*}
M(Q)^T h C^x_0 + {C^x_0}^T h M(Q) &= \left( \begin{array}{cc} * & N(Q)^T \\ N(Q) & * \end{array} \right) 
\end{align*}
with
\begin{align*}
N(Q):&= (\MatFrag^T(1-\DGS)\MatFrag)^{-1/2}\MatFrag^T \left( (1-\DGS)hQ - Qh\DGS \right) \MatFrag(\MatFrag^T \DGS \MatFrag)^{-1/2} \\
&=  (\MatFrag^T(1-\DGS)\MatFrag)^{-1/2}\MatFrag^T (1-\DGS)[h,Q]  \DGS \MatFrag(\MatFrag^T \DGS \MatFrag)^{-1/2}. 
\end{align*}
We thus have
\begin{align*}
M(Q)^T h C^x_0 + {C^x_0}^T h M(Q) &= \left( \begin{array}{cc} * & 0 \\ 0 & * \end{array} \right) - {C_0^x}^T [\DGS,[h,Q]] C_0^x,
\end{align*}
which implies, using \eqref{eq:L+OD}, 
$$
{\mathfrak L}_x^+ \left( M(Q)^T h C^x_0 + {C^x_0}^T h M(Q) - \ell(Q) {\mathfrak p}^x  \right) = 
{\mathfrak L}_x^+ \left(-{C_0^x}^T [\DGS,[h,Q]] C_0^x - \ell(Q) {\mathfrak p}^x  \right).
$$
We therefore obtain
$$
d_{\DGS}F_0^{\rm HL} = B+L,
$$
with $L: T_{\DGS}\Grass \to \cY$ given by
\begin{equation} \label{eq:def_L}
\forall Q \in T_{\DGS}\Grass, \quad 
LQ:=  \sum_{x=1}^{\NFrag} \Projector[x]C^x_0   {\mathfrak L}_x^+ \left({C_0^x}^T [\DGS,[h,Q]] C_0^x + \ell(Q) {\mathfrak p}^x  \right) {C^x_0}^T \Projector[x].
\end{equation}
Combining with \eqref{eq:dFll}, and setting 
\begin{equation}\label{eq:def_R}
R:=LA^{-1}B^* : \cY \to \cY,
\end{equation}
we obtain
$$
d_{P}\Phi(0,\PGS)= (B+L) (A^{-1}B^* (BA^{-1}B^*)^{-1})-I_\cY = R (BA^{-1}B^*)^{-1}.
$$
To conclude, we just have to show that the map $R$ rigorously defined by \eqref{eq:def_R} actually coincides with the 4-point response function formally defined by \eqref{eq:def_R_formal} (the latter is bijective by Assumption (A4)). We have for all $Q \in T_{\DGS}\Grass$ and $Y \in \cY$,
\begin{align*}
\langle Q, B^* Y \rangle_{T_{\DGS}\Grass} & = \langle B Q, Y \rangle_{\cY} = \Trace((BQ)Y) \\
&=  \Trace\left( \left( \sum_{x=1}^{\NFrag} \Projector[x] Q \Projector[x]\right)Y \right) = 
\sum_{x=1}^{\NFrag} \Trace\left( \Projector[x] Q \Projector[x] Y \right) = \sum_{x=1}^{\NFrag} \Trace\left(  Q \Projector[x] Y \Projector[x] \right) \\
&= \Trace\left( Q \left( \sum_{x=1}^{\NFrag} \Projector[x] Y \Projector[x]\right) \right) = \Trace(QY)
= \Trace (Q (\underbrace{\DGS Y(1-\DGS)+(1-\DGS)Y\DGS)}_{\in T_{\DGS}\Grass}) \\ &= \langle Q, \DGS Y(1-\DGS)+(1-\DGS)Y\DGS \rangle_{T_{\DGS}\Grass}.
\end{align*}
Therefore 
\begin{equation} \label{eq:B*}
Y \in \cY, \quad B^*Y= \DGS Y(1-\DGS)+(1-\DGS)Y\DGS.
\end{equation}
By a classical calculation (see e.g.~\cite[Section 2.2]{CKL_2021}), we have
\begin{equation} \label{eq:A}
\forall Q \in \cY, \quad AQ = - [\DGS,[h,Q]].
\end{equation}
It is also easily checked that
\begin{align} \label{eq:B*Y}
{C^x_0}^T (B^*Y) C^x_0 = {C^x_0}^T  \left(\DGS Y(1-\DGS )+  (1-\DGS )Y\DGS \right) {C^x_0}  = \left( \begin{array}{cc} * & 0 \\ 0 & * \end{array} \right) +  {C^x_0}^T  Y {C^x_0}.
\end{align}
Putting together~\eqref{eq:L+OD} and \eqref{eq:def_L}-\eqref{eq:B*Y} yields
\begin{align} \label{eq:def_R_math}
RY & =  \sum_{x=1}^{\NFrag} \Projector[x] C^x_0 {\mathfrak L}_{x}^+ \left( {C^x_0}^T \left(Y-\widetilde \ell(Y) \Projector[x]\right) {C^x_0}\right) {C^x_0}^T \Projector[x], 
\end{align}
where 
\begin{equation}\label{eq:def_ltilde}
\widetilde \ell(Y):=\ell(A^{-1}B^*Y) = \Trace(G Y ) \quad \mbox{with} \quad G:= \sum_{x=1}^{\NFrag}  {C^x_0}  {\mathfrak L}_x^+ \left( {\mathfrak p}^x \right) {C^x_0}^T \in \R^{\DimH \times \DimH}_{\rm sym}.
\end{equation}
Using the notation introduced in \eqref{eq:def_R_formal}, we have
\begin{align*}
\widetilde \HLMap_{h+Y}(\DGS )
&=
 \sum_{x=1}^{\NFrag} \Projector[x] C^x_0 \1_{(-\infty,0]} \left( {C^x_0}^T (h+Y-\mu_{Y} \Projector[x]) C^x_0 \right) {C^x_0}^T \Projector[x] ,
\end{align*}
where $\mu_{Y} \in \R$ is chosen such that 
$\Trace(\widetilde \HLMap_{h+Y}(\DGS ))=N$. Using similar perturbation argument as in Section~\ref{sec:HL-PT}, one can check that $\widetilde \HLMap_{h+Y}(\DGS )$ is well-defined for $Y \in \cY$ small enough, and that
\begin{align*}
\widetilde \HLMap_{h+Y}(\DGS )
&=  
 \sum_{x=1}^{\NFrag} \Projector[x] C^x_0 \1_{(-\infty,0]} \left( {\mathfrak h}^x + ({C^x_0}^T (Y - \mu_{Y} \Projector[x]) C^x_0 \right) {C^x_0}^T \Projector[x] \\
 &= \widetilde \HLMap_{h}(\DGS ) +
 \sum_{x=1}^{\NFrag} \Projector[x] C^x_0 {\mathfrak L}_x^+ \left( {C^x_0}^T (Y + \mu_{Y} \Projector[x]) C^x_0 \right) {C^x_0}^T \Projector[x] + o(\|Y\|),
\end{align*}
with $\mu_Y=\widetilde\ell(Y)$ by particle conservation. This shows that the map $R$ defined by \eqref{eq:def_R_math}-\eqref{eq:def_ltilde} actually coincides with the 4-point response function in Assumption (A4).

\subsubsection{About Assumptions (A3) and (A4) in the one-site-per-fragment setting} \label{sec:A4}

Let us show that when $\NFrag=\DimH$, we have under Assumptions (A1)-(A2),
\begin{align*}
\mbox{(A3) are satisfied} &\implies  
\DGS  \mbox{ is an irreducible matrix} \iff  
\mbox{(A4) is satisfied}.
\end{align*}  
Throughout this section, we assume that (A1)-(A2) are fulfilled.

\medskip

Let us first show that (A3) implies that $\DGS $ is irreducible.
We deduce from the second assertion of Lemma~\ref{lem:N-rep-loc} that (A3) is satisfied if and only if the only matrices in $\R^{\DimH \times \DimH}_{\rm sym}$ which commute with $\DGS $ and all the $\Projector[x]$'s are the multiples of the identity matrix. When $\NFrag=\DimH$, the matrices in $\R^{\DimH \times \DimH}_{\rm sym}$ which commute with all the $\Projector[x]$ are the diagonal matrices. The diagonal matrices $\Lambda=\mbox{diag}(\lambda_1,\cdots,\lambda_L)$ which commute with $\DGS $ are the ones for which
$$
\forall 1 \le i,j \le \DimH, \quad \lambda_i [\DGS ]_{ij} = [\DGS ]_{ij} \lambda_j.
$$
If $\DGS $ was reducible, then one could find a permutation matrix $P \in O(\DimH)$ such that $P\DGS P^{-1}$ is a $2 \times 2$ block-diagonal matrix. The matrix $P\mbox{diag}(1,\cdots,1,2,\cdots,2)P^{-1}$, where the numbers of entries $1$ and $2$ match the sizes of the blocks of $P\DGS P^{-1}$, would then be a diagonal matrix which commutes with $\DGS $ and is not proportional to the identity matrix. We reach a contradiction. Thus, (A3) implies that $\DGS $ is irreducible.

\medskip

Let us now show the equivalence 
$$
\DGS  \mbox{ is an irreducible matrix} \iff  
\mbox{(A4) is satisfied}.
$$
We have for all $Y \in \cY$,
\begin{align*}
\|RY\|^2 &= \Trace\left( (RY) (RY) \right) \\
& = \sum_{x,x'=1}^{\NFrag} \Trace \left(  \Projector[x] C^x_0 {\mathfrak L}_{x}^+ \left( {C^x_0}^T \left(Y-\widetilde \ell(Y) \Projector[x]\right) {C^x_0}\right) {C^x_0}^T \Projector[x]  \Projector[x']  C^{x'}_0 {\mathfrak L}_{x'}^+ \left( {C^{x'}_0}^T \left(Y-\widetilde \ell(Y) \Projector[x']\right) {C^{x'}_0}\right) {C^{x'}_0}^T \Projector[x']  \right) \\
& = \sum_{x=1}^{\NFrag} \Trace \left(  \Projector[x] C^x_0 {\mathfrak L}_{x}^+ \left( {C^x_0}^T \left(Y-\widetilde \ell(Y) \Projector[x]\right) {C^x_0}\right) {C^x_0}^T \Projector[x] C^{x}_0 {\mathfrak L}_{x}^+ \left( {C^{x}_0}^T \left(Y-\widetilde \ell(Y) \Projector[x]\right) {C^{x}_0}\right) {C^{x}_0}^T \Projector[x]  \right) \\
& = \sum_{x=1}^{\NFrag} \Trace \left( {\mathfrak p}_{x} {\mathfrak L}_{x}^+ \left( {C^x_0}^T \left(Y-\widetilde \ell(Y) \Projector[x]\right) {C^x_0}\right)  {\mathfrak p}_{x}  {\mathfrak L}_{x}^+ \left( {C^{x}_0}^T \left(Y-\widetilde \ell(Y) \Projector[x]\right) {C^{x}_0}\right)   \right) \\
& = \sum_{x=1}^{\NFrag} \|{\mathfrak p}^x  {\mathfrak L}_{x}^+ \left( {C^x_0}^T \left(Y-\widetilde \ell(Y) \Projector[x]\right) {C^x_0}\right)  {\mathfrak p}^x\|^2.
\end{align*}
Using \eqref{eq:matrix_pix} and \eqref{eq:L+OD}-\eqref{eq:NM}, we obtain after straightforward algebraic manipulations that
\begin{align*}
(RY=0) &\iff \left( \forall 1 \le x \le \NFrag, \; {\mathfrak p}^x  {\mathfrak L}_{x}^+ \left( {C^x_0}^T \left(Y-\widetilde \ell(Y) \Projector[x]\right) {C^x_0}\right)  {\mathfrak p}^x =0\right) \\
& \iff \left( \forall 1 \le x \le \NFrag, \; (1-D_{0,x})^{1/2} \widetilde N_x(Y) D_{0,x}^{1/2} + D_{0,x}^{1/2} \widetilde N_x(Y)^T(1-D_{0,x})^{1/2} = 0
\right),
\end{align*}
with 
$$
\widetilde N_x(Y):=N\left( (1-D_{0,x})^{-1/2} \MatFrag^T (1-\DGS ) Y\DGS \MatFrag D_{0,x}^{-1/2}  - \widetilde\ell(Y)D_{0,x}^{1/2} (1-D_{0,x})^{1/2} \right). 
$$
In the case when $\NFrag=\DimH$, we have $\FragSize=1$ for all $x$, and thus, $D_{0,x}$ and $\widetilde N(Y)$ are scalar quantities. We then have in this special case by assumption (A2),
\begin{align*}
(RY=0) &\iff \left( \forall 1 \le x \le \NFrag, \; N_x(Y) = 0
\right) \iff \left( M y = \widetilde \ell(Y) z \right),
\end{align*}
where $y = (Y_{11}, \cdots, Y_{LL})^T \in \R^\DimH$, $z = (D_{0,1}(1-D_{0,1}), \cdots , D_{0,\DimH}(1-D_{0,\DimH}))^T \in \R^\DimH$, and $M \in \R^{\DimH \times \DimH}_{\rm sym}$ is the matrix with entries
$$
M_{xx}= [D_{0}]_{xx}-[D_{0}]_{xx}^2, \quad M_{xx'}= -[D_{0}]_{xx'}^2 \; \mbox{if } x \neq x'.
$$
Still by Assumption (A2), $\sum_{x=1}^{\NFrag} z_x > 0$, and therefore using the fact that $\DGS $ is an orthogonal projector (hence that $\sum_{x=1}^{\NFrag} [\DGS ]_{x,x'}^2=[\DGS ^2]_{xx}=[\DGS ]_{xx}$), we get
$$
(M y = \widetilde \ell(Y) z) \implies \left( \widetilde\ell(Y) = \frac{\sum_{x,x'=1}^{\NFrag} M_{x,x'} y_{x'} }{\sum_{x=1}^{\NFrag} z_x}
=  \frac{\sum_{x=1}^{\NFrag} [\DGS ]_{x,x} y_{x} - \sum_{x,x'=1}^{\NFrag} [\DGS ]_{x,x'}^2 y_{x'} }{\sum_{x=1}^{\NFrag} z_x}  = 0\right) .
$$
Therefore, 
\begin{align*}
(RY=0) &\iff \left( M y = 0 \right).
\end{align*}
The matrix $M$ is hermitian, diagonal dominant with positive diagonal elements and non-positive off-diagonal elements, and such that
$$
\forall 1 \le x \le \NFrag, \quad M_{xx}= - \sum_{x' \neq x} M_{xx'}.
$$
Therefore the kernel of $M$ is reduced to $\R (1,\cdots,1)^T$ if and only if $M$ is irreducible. Besides, we see from the expressions of the coefficients of $M$ and Assumption (A2) that $M$ is irreducible if and only if $\DGS $ is irreducible. We conclude that $R$ is injective, hence bijective, if and only if $\DGS $ is irreducible.

\subsection{Proof of Theorem~\ref{thm:DMET_HF}}

\subsubsection{Perturbation expansion in the Fock space}
\label{sec:PTE_1}

This calculation is classical in the physics and chemistry literature, but we report it here for the sake of completeness. Consider a family of Hamiltonians $(\widehat H_\alpha)_{\alpha \in \R}$ of the form 
$$
\widehat H_\alpha := \widehat H_0 + \alpha (\widehat W_1 + \widehat W_2)
$$
on the real Fock space ${\rm Fock}(\R^{N_b})$ where
$$
\widehat H_0 := \sum_{m,n=1}^{N_b} [h_0]_{mn} c_m^\dagger c_n \quad \mbox{and} \quad \widehat W_1 := \sum_{m,n=1}^{N_b} [W_1]_{mn} c_m^\dagger c_n 
$$
are one-body Hamiltonians and 
$$
\widehat W_2 := \frac 12 \sum_{m,n,p,q,=1}^{N_b} [W_2]_{mnpq} c_m^\dagger c_n^\dagger c_q c_p
$$
is a two-body Hamiltonian. 

\medskip

Let us provisionally assume that $h_0$ is diagonal, and more precisely that
$$
h_0 = \mbox{diag}(\varepsilon_{1}^0 , \cdots , \varepsilon_{N_b}^0) \quad \mbox{with} \quad \varepsilon_{1}^0 \le  \cdots \le \varepsilon_{\mathcal N}^0 < 0 < \varepsilon_{{\mathcal N}+1}^0 \le \cdots \varepsilon_{N_b}^0.
$$
This amounts to working in a molecular orbital basis set of the unperturbed one-body Hamiltonian $h_0$ and assuming that the Fermi level $\epsilon_{\rm F}$ for having $\mathcal N$ particles in the ground state can be chosen equal to zero.
The ground state $\Psi_0$ of $\widetilde H_0$ in the ${\mathcal N}$-particle sector then is unique and so is the one of $\widetilde H_\alpha$ for $\alpha$ small by perturbation theory. We have
$$
\Psi_0 = \frac{1}{\sqrt{{\mathcal N}!}} c_{\mathcal N}^\dagger \cdots c_1^\dagger |0\rangle, \quad E_0 := \langle \Psi_0|\widehat H_0 | \Psi_0\rangle =  \sum_{i=1}^{\mathcal N} \varepsilon_{i}^0.
$$
 Denoting by $d(\alpha)$ the ground-state one-body reduced density matrix of $\widehat H_\alpha$, the map $\alpha \mapsto d(\alpha)$ is real-analytic in the neighborhood of $0$ and 
$$
d(\alpha) = d_0+\alpha d_1 + O(\alpha^2) \quad \mbox{with} \quad d_0 :=  \left( \begin{array}{cc} I_{\mathcal N} & 0 \\ 0 & 0 \end{array} \right).
$$
In addition, we have
$$
[d_1]_{mn}=\langle \Psi_1 | c_m^\dagger c_n | \Psi_0 \rangle + \langle \Psi_0 | c_m^\dagger c_n | \Psi_1 \rangle,
$$
where $\Psi_1$ is the first-order perturbation of the ground-state wave-function $\Psi_0$, solution to
$$
(\widehat H_0 -E_0) \Psi_1 = - \Pi_{\Psi_0^\perp}\left( (\widehat W_1+\widehat W_2) \Psi_0 \right), \qquad \Psi_1 \in \Psi_0^\perp.
$$
For $1 \le i_1 < \cdots < i_r \le  {\mathcal N}$ (occupied orbitals) and $m+1 \le a_1 < \cdots < a_r \le N_b$ (virtual orbitals), we set
$$
\Phi_0^0 := \Psi_0 \quad \mbox{and} \quad \Phi_{i_1\cdots i_r}^{a_1\cdots a_r} = c_{a_r}^\dagger \cdots c_{a_1}^\dagger c_{i_1}\cdots c_{i_r} \Phi_0^0 .
$$
The $\Phi_{i_1\cdots i_r}^{a_1\cdots a_r}$'s ($0 \le r \le \min({\mathcal N},N_b-{\mathcal N})$, $1 \le i_1 < \cdots < i_r \le {\mathcal N}$, $a_1 < \cdots < a_r \le N_b$, form an orthonormal basis of eigenfunctions of the restriction of $\widehat H_0$ to the $\mathcal{N}$-particle sector and it holds
$$
\widehat H_0 \Phi_{i_1\cdots i_r}^{a_1\cdots a_r} = E_{i_1\cdots i_r}^{a_1\cdots a_r}  \Phi_{i_1\cdots i_r}^{a_1\cdots a_r} \quad \mbox{with} \quad 
E_{i_1\cdots i_r}^{a_1\cdots a_r} = E_0 + \sum_{s=1}^r \varepsilon_{a_s} - \sum_{s=1}^r \varepsilon_{i_s}.
$$
We thus obtain the sum-over-state formula
$$
\Psi_1 = - \sum_{1 \le r \le \min({\mathcal N},N_b-{\mathcal N})} \sum_{1 \le i_1 < \cdots < i_r \le {\mathcal N}} \sum_{{\mathcal N}+ 1 \le a_1 < \cdots < a_r \le N_b} 
\frac{\langle \Phi_{i_1\cdots i_r}^{a_1\cdots a_r} | \widehat W_1+\widehat W_2 | \Phi_0^0 \rangle}{E_{i_1\cdots i_r}^{a_1\cdots a_r} - E_0} \Phi_{i_1\cdots i_r}^{a_1\cdots a_r},
$$
yielding
\begin{align*}
[d_1]_{mn}=  - \sum_{1 \le r \le \min({\mathcal N},N_b-{\mathcal N})} \sum_{1 \le i_1 < \cdots < i_r \le {\mathcal N}}&  \sum_{{\mathcal N}+1 \le a_1 < \cdots < a_r \le N_b} 
\frac{\langle \Phi_{i_1\cdots i_r}^{a_1\cdots a_r} | \widehat W_1+\widehat W_2 | \Phi_0^0 \rangle}{E_{i_1\cdots i_r}^{a_1\cdots a_r} - E_0} \\
& \times 
\left( \langle \Phi_{i_1\cdots i_r}^{a_1\cdots a_r} | c_m^\dagger c_n | \Phi_0^0\rangle + 
 \langle  \Phi_0^0 | c_m^\dagger c_n | \Phi_{i_1\cdots i_r}^{a_1\cdots a_r} \rangle \right).
\end{align*}
Since $ \langle \Phi_{i_1\cdots i_r}^{a_1\cdots a_r} | a_m^\dagger a_n | \Phi_0^0\rangle =0$ if $r \ge 2$, and 
\begin{align*}
\langle \Phi_{i}^{a} | c_m^\dagger c_n | \Phi_0^0\rangle &= \delta_{n,i}\delta_{m,a}, \\
\langle \Phi_{i}^{a} | c_m^\dagger c_n^\dagger c_q c_p  | \Phi_0^0\rangle &= - \delta_{m,q}\delta_{n,i}\delta_{p,a} \delta_{q \le {\mathcal N}} + 
\delta_{m,p}\delta_{n,i}\delta_{q,a} \delta_{p \le {\mathcal N}} + \delta_{m,i}\delta_{n,q}\delta_{p,a} \delta_{q \le {\mathcal N}}  - \delta_{m,i}\delta_{n,p}\delta_{q,a} \delta_{p \le {\mathcal N}}, 
\end{align*} 
this expression reduces to
\begin{align*}
[d_1]_{mn}=  -  \sum_{i=1}^N &  \sum_{a={\mathcal N}+1}^{N_b} 
\frac{\langle \Phi_{i}^{a} | \widehat W_1+\widehat W_2 | \Phi_0^0 \rangle}{\varepsilon_a^0-\varepsilon_i^0} 
\left( \delta_{n=i}\delta_{m=a} + \delta_{m=i} \delta_{n=a}  \right).
\end{align*}
We obtain that $d_1$ is of the form
$$
d_1 =  \left( \begin{array}{cc} 0 & d_1^{+-} \\ {d_1^{+-}}^T & 0 \end{array} \right) \quad \mbox{with} \quad 
\forall 1\le i \le {\mathcal N} <  {\mathcal N}+1 \le a \le N_b, \quad   [d_1]_{ai} = \frac{\langle \Phi_{i}^{a} | \widehat W_1+\widehat W_2 | \Phi_0^0 \rangle}{\varepsilon_a^0-\varepsilon_i^0} .
$$
Finally, we have
\begin{align*}
[d_1]_{ai} & = \sum_{m,n=1}^{N_b} [W_1]_{mn} \frac{\langle \Phi_{i}^{a} | c_m^\dagger c_n   | \Phi_0^0 \rangle}{\varepsilon_a^0-\varepsilon_i^0}  +  \sum_{m,n,p,q=1}^{N_b} [W_2]_{mnpq} \frac{\langle \Phi_{i}^{a} | c_m^\dagger  c_n^\dagger c_q c_p | \Phi_0^0 \rangle}{\varepsilon_a^0-\varepsilon_i^0} \\
& =  \frac{ [W_1+ J_{W_2}(d_0)-K_{W_2}(d_0)]_{ai}}{\varepsilon_a^0-\varepsilon_i^0} ,
\end{align*}
where the direct and exchange operators are respectively given by
$$
[J_{W_2}(d)]_{mn} := \sum_{p,q=1}^{N_b}  [W_2]_{npmq} d_{pq} \quad \mbox{and} \quad [K_{W_2}(d)]_{mn} := \sum_{p,q=1}^{N_b}  [W_2]_{npqm}d_{pq}.
$$
Introducing the linear response operator $\mathfrak L^+_{h_0}$ such that
$$
\1_{(-\infty,\epsilon_{\rm F}]}(h_0+W) = \underbrace{\1_{(-\infty,\epsilon_{\rm F}]}(h_0+W)}_{=d_0} - \mathfrak L^+_{h_0} W + O(\|W\|),
$$
we finally obtain 
\begin{equation}\label{eq:d1_perturb}
d_1 = - \mathfrak L^+_{h_0}\left(W_1+ J_{W_2}(d_0)-K_{W_2}(d_0) \right),
\end{equation}
this formula remaining valid in the general case when $h_0$ is not {\it a priori} diagonal and $\epsilon_{\rm F}$ not {\it a priori} equal to zero. 

\subsubsection{Perturbation expansion of the DMET ground-state}
\label{sec:PTE_2}

Under Assumption (A1), the Hartree-Fock problem
$$
\mathop{\rm argmin}_{D \in \Grass} \EMF_\alpha(D)
$$
has a unique minimizer $D^{\rm HF}(\alpha)$ for $\alpha$ small enough and the map $\alpha \mapsto D^{\rm HF}(\alpha)$ is real-analytic in the neighborhood of $0$. This results from a straightforward application of nonlinear perturbation theory, which we do not detail here for the sake of brevity. We set $ P^{\rm HF}(\alpha):={\rm Bd}(D^{\rm HF}(\alpha))$, and
\begin{align*}
&D_1^{\rm exact}:=\frac{dD^{\rm exact}}{d\alpha}(0), \qquad D_1^{\rm HF}:=\frac{dD^{\rm HF}}{d\alpha}(0), \qquad D_1^{\rm DMET}:=\frac{dD^{\rm DMET}}{d\alpha}(0),\\
&P_1^{\rm exact}:=\frac{dP^{\rm exact}}{d\alpha}(0), \qquad P_1^{\rm HF}:=\frac{dP^{\rm HF}}{d\alpha}(0), \qquad P_1^{\rm DMET}:=\frac{dP^{\rm DMET}}{d\alpha}(0).
\end{align*}
We are going to prove that the above first three matrices on the one hand, and the last three ones on the other hand are equal in $T_{\DGS}\Grass$ and $\cY$ respectively. 

\medskip

First, we deduce from \eqref{eq:d1_perturb} applied with $N_b=\DimH$, $\epsilon_{\rm F}=0$, $h_0=h$, $W_1=0$, $W_2=v$, that
$$
D_1^{\rm exact} = - \mathfrak L^+_{h}\left(J(\DGS )-K(\DGS ) \right),
$$
where $J$ and $K$ are the direct and exchange operators for the two-body interaction potential $\widehat V$ introduced in \eqref{eq:direct_exchange}.

\medskip

Next, by differentiating the self-consistent equation 
$$
D^{\rm HF}(\alpha) = \1_{(-\infty,0]}\left(h^{\rm MF}(\alpha,D^{\rm HF}(\alpha))\right),
$$
where 
$$
h^{\rm MF}(\alpha,D) = h + \alpha \left( J(D)-K(D) \right)
$$
is the Fock Hamiltonian for the interaction parameter $\alpha$, we get
$$
D^{\rm HF}_1 = - \mathfrak L^+_{h}\left(J(\DGS )-K(\DGS ) \right).
$$
Hence 
$$
D^{\rm HF}_1=D_1^{\rm exact} \quad \mbox{and} \quad P^{\rm HF}_1=\PartitionProj(D^{\rm HF}_1) = \PartitionProj(D_1^{\rm exact})=P_1^{\rm exact}.
$$

\medskip

Let us now show that $P^{\rm DMET}_1=P^{\rm HF}_1$. For convenience, we will use the following notation
\begin{align*}
& \LLMap(\alpha,P) := \LLMap_\alpha(P), \qquad  \HLMap(\alpha,D) = \HLMap_\alpha(D),  \\
& \HLMap_{\rm HF}(\alpha,D) := \sum_{x=1}^{\NFrag} \Projector[x] C^x(D) \1_{(-\infty,0]}\left( C^x(D)^T  \left( h^{\rm MF}(\alpha,D) - \mu^{\rm HF}(\alpha,D) \Projector[x] \right) C^x(D) \right) C^x(D)^T \Projector[x],
\end{align*}
where $\mu^{\rm HF}(\alpha,D) \in \R$ is the Lagrange parameter of the charge conservation constraint. The map $\HLMap_{\rm HF}(\alpha,D)$ is the high-level Hartree-Fock map for the interacting parameter $\alpha$, introduced in Remark~\ref{rem:hl_HF} for $\alpha=1$.

\medskip

We know from Theorem~\ref{thm:theory} that for all $\alpha$ small enough
$$
\HLMap\left(\alpha,\LLMap\left(\alpha,P^{\rm DMET}(\alpha)\right)\right) = P^{\rm DMET}(\alpha).
$$
Taking the derivative at $\alpha=0$, we get
\begin{align} \label{eq:deriv_DMET}
\partial_\alpha \HLMap(0,\DGS ) + \partial_D \HLMap(0,\DGS ) \left( \partial_\alpha  \LLMap(0,\PGS) + \partial_P  \LLMap(0,\PGS)  P^{\rm DMET}_1 \right) = P^{\rm DMET}_1.
\end{align}

\medskip

The same arguments as in the proof of Proposition~\ref{prop:DMET0} allow one to show that for all $\alpha$ small enough
$$
\HLMap_{\rm HF}\left(\alpha,\LLMap\left(\alpha,P^{\rm HF}(\alpha)\right)\right) = P^{\rm HF}(\alpha),
$$
yielding
\begin{align} \label{eq:deriv_HF}
\partial_\alpha \HLMap_{\rm HF}(0,\DGS ) + \partial_D \HLMap_{\rm HF}(0,\DGS ) \left( \partial_\alpha  \LLMap(0,\PGS) + \partial_P  \LLMap(0,\PGS)  P^{\rm HF}_1 \right) = P^{\rm HF}_1.
\end{align}
Since $\HLMap_{\rm HF}(0,D)=\HLMap(0,D)$ for all $D$ in the neighborhood of $\DGS $, we have 
$$
\partial_P  \HLMap_{\rm HF}(0,\DGS )=\partial_P  \HLMap(0,\DGS ).
$$
Using \eqref{eq:dPPhi} and the invertibility of $d_P\Phi(0,\PGS)$ established in Section~\ref{sec:EUA}, we obtain 
\begin{align}
P^{\rm DMET}_1  &= - \left(d_P\Phi(0,\PGS)\right)^{-1} \left(  \partial_\alpha \HLMap(0,\DGS ) + \partial_D \HLMap(0,\DGS ) \partial_\alpha  \LLMap(0,\PGS) \right), \label{eq:PDMET1} \\
P^{\rm HF}_1  &= - \left(d_P\Phi(0,\PGS)\right)^{-1} \left(  \partial_\alpha \HLMap_{\rm HF}(0,\DGS ) + \partial_D \HLMap(0,\DGS ) \partial_\alpha  \LLMap(0,\PGS) \right) \label{eq:PHF1}.
\end{align}
Let us show that $\partial_\alpha  \HLMap(0,\DGS ) =\partial_\alpha  \HLMap_{\rm HF}(0,\DGS )$. On the one hand, we have
\begin{align*}
\HLMap_{\rm HF}(\alpha,\DGS )&= \sum_{x=1}^{\NFrag} \Projector[x] C^x_0 \1_{(-\infty,0]} \left( {C^x_0}^T \left( h+\alpha \left( J(\DGS )-K(\DGS ) \right) - \mu_{\rm HF}(\alpha,\DGS )  \Projector[x] \right) C^x_0 \right) {C^x_0}^T  \Projector[x],
\end{align*}
and therefore
\begin{align} \label{eq:dalphaFhlHF}
\partial_\alpha  \HLMap_{\rm HF}(0,\DGS ) &= - \sum_{x=1}^{\NFrag} \Projector[x] C^x_0 {\mathfrak L}_x^+ \left( {C^x_0}^T \left( J(\DGS )-K(\DGS ) \right) {C^x_0} - \partial \mu_{\rm HF}(0,\DGS )  {\mathfrak p}^x  \right) {C^x_0}^T  \Projector[x].
\end{align}
On the other hand, we have
\begin{align*} 
\HLMap(\alpha,\DGS )&= \sum_{x=1}^{\NFrag} \Projector[x] C^x_0  D_{x,\DGS }^{\rm imp}(\alpha) {C^x_0}^T  \Projector[x],
\end{align*}
where $D_{x,\DGS }^{\rm imp}(\alpha)$ is the ground-state one-body reduced density matrix in the basis of $Y_{x,\DGS }$ defined by $C^x_0$ of the impurity Hamiltonian (see Proposition~\ref{prop:HimpDefinition}) 
\begin{align*}
\ImpurityHamiltonian{\DGS }(\alpha) =& \sum_{i,j=1}^{2\FragSize} \left[ {C^x_0}^T \left(h+\alpha(J({\mathfrak D}^x(\DGS ))-K({\mathfrak D}^x(\DGS )))\right) C^x_0 \right]_{ij} \Ann[i](\DGS )^\dagger \Ann[j](\DGS )  \nonumber \\
& +\frac \alpha 2 \sum_{i,j,k,\ell=1}^{2\FragSize} [V^x(\DGS )]_{ijkl} \Ann[i](\DGS )^\dagger \Ann[j](\DGS )^\dagger \Ann[\ell](\DGS )  \Ann[k](\DGS ) \\
& -\mu(\alpha)  \sum_{i,j=1}^{2\FragSize}  \left[ {C^x_0}^T \Projector[x] C^x_0 \right]_{ij} \Ann[i](\DGS )^\dagger \Ann[j](\DGS ), 
\end{align*}
where $\mu(\alpha)$ is the Lagrange multiplier of the charge neutrality constraint and where we have discarded the irrelevant constant $E^{\rm env}_x(\DGS )$. Using the notation introduced in \eqref{eq:matrix_hr}, this Hamiltonian can be rewritten as
\begin{align*}
\ImpurityHamiltonian{\DGS }(\alpha) = &  \sum_{i,j=1}^{2\FragSize} [{\mathfrak h}^x]_{ij} \Ann[i](\DGS )^\dagger \Ann[j](\DGS ) \\
&+ \alpha \bigg( \sum_{i,j=1}^{2\FragSize} \left[  {C^x_0}^T  \left(J({\mathfrak D}^x(\DGS ))-K({\mathfrak D}^x(\DGS )) \right) C^x_0 \right]_{ij} \Ann[i](\DGS )^\dagger \Ann[j](\DGS )  \nonumber \\
& \qquad +\frac 1 2 \sum_{i,j,k,\ell=1}^{2\FragSize} [V^x(\DGS )]_{ijkl}  \Ann[i](\DGS )^\dagger \Ann[j](\DGS )^\dagger \Ann[\ell](\DGS )  \Ann[k](\DGS ) \bigg) \\
&-\mu(\alpha)  \sum_{i,j=1}^{2\FragSize}  \left[ {C^x_0}^T \Projector[x] C^x_0 \right]_{ij} \Ann[i](\DGS )^\dagger \Ann[j](\DGS ).
\end{align*}
We have
$$
D_{x,\DGS }^{\rm imp}(0) = \left( \begin{array}{cc} I_{\FragSize} & 0 \\ 0 & 0 \end{array} \right).
$$
Since $\mu(0)=0$ and $\alpha \mapsto \mu(\alpha)$ is real-analytic, we can easily adapt the analysis done in the previous section to the case when 
$$
N_b=2\FragSize, \quad h_0={\mathfrak h}^x, \quad 
W_1 = {C^x_0}^T  \left(J({\mathfrak D}^x(\DGS ))-K({\mathfrak D}^x(\DGS ))-\mu'(0) \Projector[x]\right) C^x_0, \quad W_2=V^x(\DGS ),
$$
and infer that 
\begin{align*}
    D_{x,\DGS }^{\rm imp}(\alpha)= D_{x,\DGS }^{\rm imp}(0) 
    - {\mathfrak L}_x^+\bigg( & {C^x_0}^T  \left(J({\mathfrak D}^x(\DGS ))-K({\mathfrak D}^x(\DGS )) - \mu'(0) \Projector[x] \right) C^x_0 \\ &+ J_{V^x(\DGS )}(D_{x,\DGS }^{\rm imp}(0))- K_{V^x(\DGS )}(D_{x,\DGS }^{\rm imp}(0)) \bigg) + O(\alpha^2),
\end{align*}
where ${\mathfrak L}_x^+$ is the linear response operator introduced in \eqref{eq:linear_response}. Observing that
$$
{C^x_0}^T  \left(J({\mathfrak D}^x(\DGS ))-K({\mathfrak D}^x(\DGS )) \right) C^x_0+ J_{V^x(\DGS )}(D_{x,\DGS }^{\rm imp}(0))- K_{V^x(\DGS )}(D_{x,\DGS }^{\rm imp}(0)) = {C^x_0}^T \left( J(\DGS )-K(\DGS ) \right) C^x_0,
$$
we obtain that
\begin{align} \label{eq:dalphaFhl}
\partial_\alpha \HLMap(0,\DGS )&= - \sum_{x=1}^{\NFrag} \Projector[x] C^x_0  {\mathfrak L}_x^+\left( {C^x_0}^T \left( J(\DGS )-K(\DGS ) \right) C^x_0 - \mu'(0) {\mathfrak p}^x \right) {C^x_0}^T  \Projector[x].
\end{align}
Since the roles of the scalars $\partial_\alpha \mu(0,\DGS )$ in \eqref{eq:dalphaFhlHF} and $\mu'(0)$ in \eqref{eq:dalphaFhl} are simply to ensure charge neutrality, these two scalars are the same. It follows that $\partial_\alpha \HLMap_{\rm HF}(0,\DGS )=\partial_\alpha \HLMap(0,\DGS )$, which allows us to deduce from \eqref{eq:PDMET1}-\eqref{eq:PHF1} that $P_1^{\rm DMET}=P_1^{\rm HF}$. Finally, we obtain that $D_1^{\rm DMET}=D_1^{\rm HF}$ by differentiating the relations
$$
D^{\rm DMET}(\alpha) = \LLMap(\alpha,P^{\rm DMET}(\alpha)) \quad \mbox{and} \quad D^{\rm HF}(\alpha) = \LLMap(\alpha,P^{\rm HF}(\alpha)),
$$
and using the fact that $P_1^{\rm DMET}=P_1^{\rm HF}$.

\section*{Acknowledgements} This project has received funding from the European Research Council (ERC) under the European Union's Horizon 2020 research and innovation programm (grant agreement EMC2 No 810367) and from the Simons Targeted Grant Award No. 896630. 
Moreover, it was partially supported by the Air Force Office of Scientific Research under the award number FA9550-18-1-0095 and by the Simons Targeted Grants in Mathematics and Physical Sciences on Moir\'e Materials Magic (F.M.F.).
The authors thank Emmanuel Fromager, Lin Lin, and Solal Perrin-Roussel for useful discussions and comments. Part of this work was done during the IPAM program {\it Advancing quantum mechanics with mathematics and statistics}. 

\appendix 

\newpage

\section{Notation table}
\label{sec:appendix}
The following table collects the main notations in use in this article.
%
\begin{table}[ht] 
\begin{center}
\begin{tabular}{|c|c|c|}
\hline
\textbf{Symbol} & \textbf{Meaning} & \textbf{See Eq.} \\ \hline
$\Fock(E)$ & Fermionic Fock space associated with &  \\ & the one-particle state space $E \subset \HSpace$ & \\ \hline
$\HSpace=\R^\DimH$ & One-particle state space of the whole system, $\DimH$ its dimension &\eqref{eq:one_particle_state_space} \\
$\AtomicBasis=(\AtomicVector)_{1 \le \kappa \le \DimH} $ & Canonical basis of $\HSpace$  & \eqref{eq:one_particle_state_space}\\
$\Hamiltonian$ & Hamiltonian of the whole system (op. on ${\rm Fock}({\mathcal H})$) & \eqref{eq:Hamiltonian} \\
$\Hamiltonian_0$ & Non-interacting Hamiltonian of the whole system & \eqref{eq:def_H0} \\
$\Hamiltonian_\alpha$ & Hamiltonian of the whole system for coupling parameter $\alpha$ & \eqref{eq:Halpha}  \\
$\NElec$ & Number of electrons in the system &  \\
$\Grass$ & Set of 1-RDMs associated with $\NElec$-particles Slater states &  \eqref{eq:def_Grass} \\ & (Grassmann manifold $\mathrm{Gr}(\NElec,\DimH)$) &  \\
$\CHGrass$ & Convex hull of  $\Grass$ &  \eqref{eq:def_CHGrass} \\ & (set of mixed-state 1-RDMs with $\NElec$ particles)  &  \\
$\DGS$ & $\NElec$-particle round-state 1-RDM of $\widehat H_0$  & \eqref{eq:GS_DM} \\  
$D_\alpha^{\rm exact}$ & $\NElec$-particle ground-state 1-RDM of $\widehat H_\alpha$   & \\ 
$D_\alpha^{\rm HF}$ & Hartree-Fock $\NElec$-particle ground state 1-RDM of $\widehat H_\alpha$   & \\ 
$\EMF$ & Hartree-Fock energy functional   & \eqref{eq:HF_functional} \\ 
$J$ and $K$ & Coulomb and exchange energy functionals   & \eqref{eq:direct_exchange} \\ 
$h^{\rm HF}(D)$ & Mean-field (Fock) Hamiltonian (op. on $\cH$) & \eqref{eq:Fock_Hamiltonian} \\ \hline 
$\NFrag$ & Number of fragments  & \\
$\FragSize$ & Number of sites in fragment $x$ &  \\
$\Frag$ & $x$-th fragment subspace,  $\Frag=\Span(\AtomicVector, \kappa \in \FragIndic) \subset \cH$ & \eqref{eq:dec_H1} \\
$\Projector[x]$ & Orthogonal projector on $\Frag$ (op. on $\R^{\DimH \times \DimH}_{\rm sym}$) &  \\
$\MatFrag$ & Matrix of the $\FragSize$ orbitals of fragment $x$ ($\MatFrag \in \R^{\DimH \times \FragSize}$) & \eqref{eq:def_Ex}\\
$\PartitionProj$ & Projector defined by $\PartitionProj(M)= \sum_{x=1}^{\NFrag} \Projector[\Frag] M \Projector[\Frag]$  (op. on $\R^{\DimH \times \DimH}_{\rm sym}$) & \eqref{eq:PartitionProjDefinition}\\
$\mathcal P$ & Convex set of block-diagonal matrices with eigenvalues in $[0,1]$ & \eqref{eq:def_PP} \\
$\cY$ & Space of traceless block-diagonal matrices $\cY \subset \R^{\DimH \times \DimH}_{\rm sym}$ & \eqref{eq:defY} \\
$\Impurity{D}$ & $x$-th impurity space, subspace of $\HSpace$, $\Impurity{D} = \Frag + D\Frag \subset \cH$  & \eqref{eq:ImpuritySubspaceDefinition}  \\
$C^x(D)$, $\widetilde C^x(D)$ & Matrices in $\R^{\DimH \times 2\FragSize}$ defining orthonormal bases of $\Impurity{D}$ & \eqref{eq:matrix_C}, \eqref{eq:matrix_C_tilde} \\
$\ImpurityHamiltonian{D}$ & $x$-th impurity Hamiltonian (op. on ${\rm Fock}(\Impurity{D})$) & \eqref{eq:def_impurity_Ham_1}, \eqref{eq:impurity_Hamiltonian} \\
$R$ & 4-point DMET linear response function (op. on $\cY$) & \eqref{eq:def_R_formal}, \eqref{eq:def_R_math} \\
 \hline
$\LLMap$, resp. $\LLMap_\alpha$ & Low-level map for $\Hamiltonian$, resp. $\Hamiltonian_\alpha$  & \eqref{eq:LowLevelMapDefinition} \\
$\HLMap$, resp. $\HLMap_\alpha$ & High-level map for $\Hamiltonian$, resp. $\Hamiltonian_\alpha$  & \eqref{eq:HLmap}, \eqref{eq:def_Fhl_2_formal} \\
$\ChemicalPotential$ & DMET global chemical potential &  \\
\hline
\end{tabular}
\end{center}
\caption{\label{tab:notations}Collection of the main notations used in the paper.}
\end{table}

\pagebreak
\section{Analysis of the DMET bifurcation for H$_6^{4-}$}
\label{sec:appendixB}

We shall finally proceed with the analysis of the DMET solutions along the two bifurcation paths for H$_6^{4-}$ around $\Theta_3$ (see Section~\ref{sec:H6}). 
To begin with, we calculate the molecular orbitals at $\Theta_3$. 
The molecular orbital energies exhibit two-fold degeneracies resulting from the fact that the $E'$ and $E''$ are irreducible representations of the H$_6^{4-}$ symmetry point group (D$_{\rm 3h}$) are two-dimensional.
For a visual representation of the molecular orbital energies and their corresponding molecular orbitals, see Fig.~\ref{fig:MOsOcc} and Fig.~\ref{fig:MOsVirt}.

\begin{figure}[h!]
     \centering
     \begin{subfigure}[t]{0.45\textwidth}
         \centering
         \includegraphics[width=\textwidth]{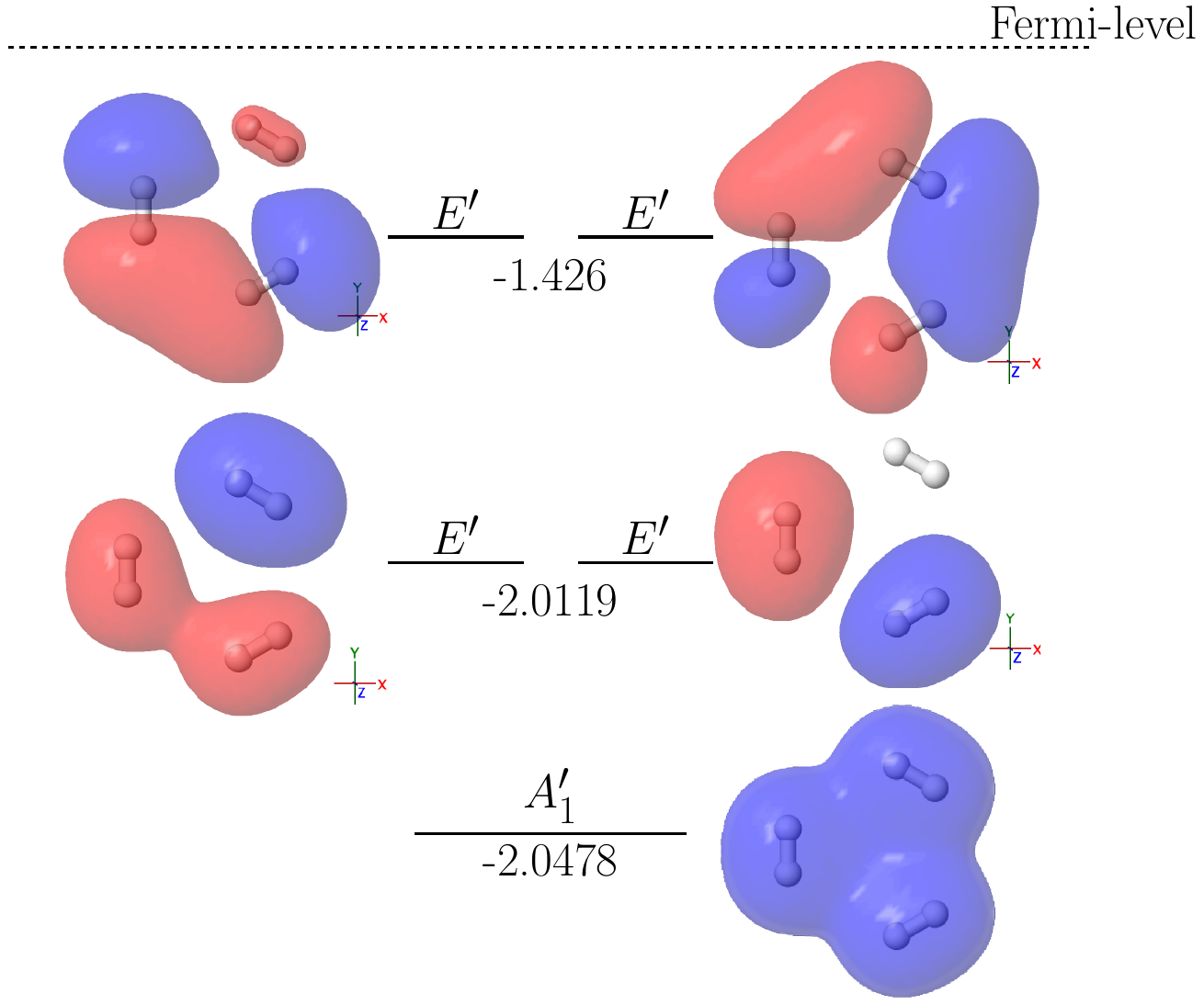}
         \caption{Occupied orbitals}
         \label{fig:MOsOcc}
     \end{subfigure}
     \hfill
     \begin{subfigure}[t]{0.45\textwidth}
         \centering
         \includegraphics[width=\textwidth]{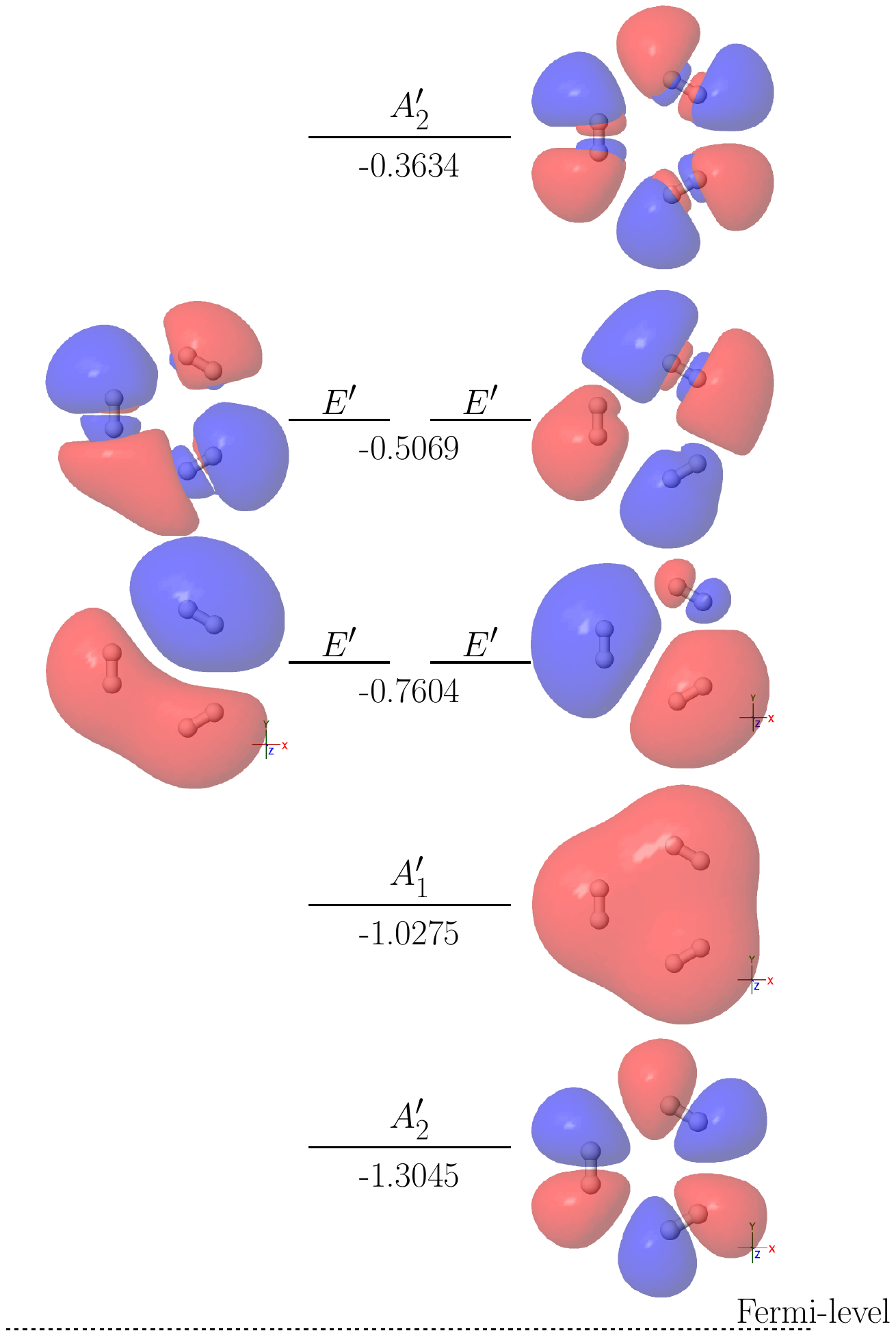}
         \caption{Virtual orbitals}
         \label{fig:MOsVirt}
     \end{subfigure}
     \caption{Depiction of the molecular orbitals, their irreducible representation with respect to the $D_{3h}$ point group symmetry and molecular energies. The left panel shows the occupied molecular orbitals and the right panel shows the virtual molecular orbitals. }
\end{figure}

For the two solutions on the respective bifurcation branches, $P_0$ and  $P_1$, we compute 
\begin{equation}
P_0(\Theta) - P_1(\Theta)
= (\Theta-\Theta_3)
\left[
\begin{array}{c|c}
0 & Q_{+-} \\
\hline
Q_{-+} & 0 \\
\end{array}
\right] + o(\Theta-\Theta_3),
\end{equation}
where $Q_{-+}= Q_{+-}^\top \in \mathbb{R}^{7 \times 5}$.
From the matrix $Q_{-+}$ we deduce ``excitation'' patterns that give physical insight into the different branches.
The numerical values of $Q_{-+}$ are given by 
\begin{equation}
Q_{-+}
=
\left[
\begin{array}{ccccc}
 0    &  0    &  0    &  0    &  0 \\    
-0.0004&  0    &  0    &  0    &  0   \\ 
 0    &  0.0001&  0    &  0.0002&  0    \\
 0    &  0    & -0.0001&  0    & -0.0002\\
 0    &  0.0001&  0    &  0.0001&  0    \\
 0    &  0    & -0.0001&  0    & -0.0001\\
 0    &  0    &  0    &  0    &  0    \\
\end{array}
\right]
\end{equation}
Upon inspecting $Q_{-+}$, we observe the following ``excitation'' pattern: 
The first molecular orbital (A$_1$' symmetry) is rotated in the direction of the seventh molecular orbital (A$_1$' symmetry), while the $4$-dimensional space generated by the second to fifth molecular orbitals ($E'$ symmetry) is tilted according to directions which are linear combinations of the eighth to eleventh molecular orbitals ($E'$ symmetry).
We summarize this ``excitation'' pattern in Fig.~\ref{fig:ExcitationPattern1}

\begin{figure}[h!]
    \centering
    \includegraphics[width = 0.5\textwidth]{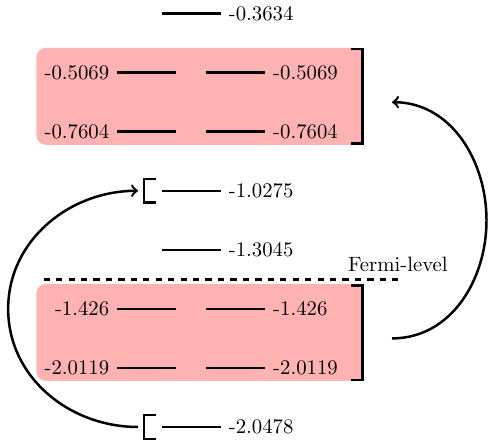}
    \caption{Molecular energies and ``excitation'' patterns concluded from $Q_{-+}$}
    \label{fig:ExcitationPattern1}
\end{figure}

We see that the pair of degenerate occupied orbitals are excited into the pair of degenerate virtual orbitals. 
This block of excitations is highlighted by the red shaded area in Fig.~\ref{fig:ExcitationPattern1} .
A more detailed depiction of the excitations between the red-shaded areas is given in Fig.~\ref{fig:ExcitationPattern2}.

\begin{figure}[h!]
    \centering
    \includegraphics[width = \textwidth]{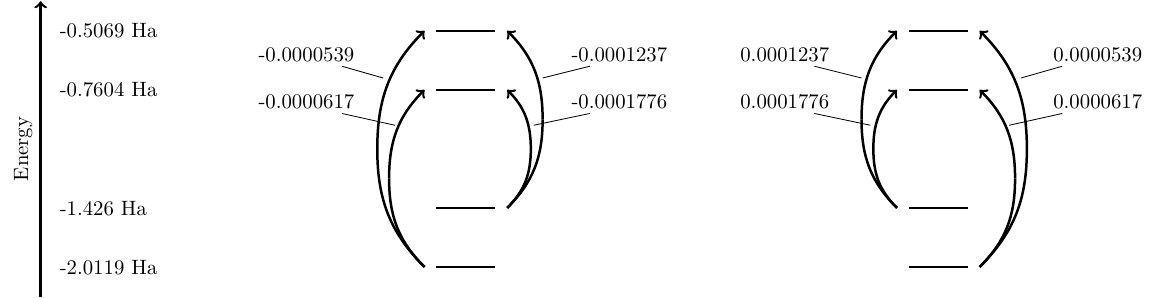}
    \caption{Excitation patterns concluded from $Q_{-+}$ for symmetric and anti-symmetric molecular orbitals respectively.}
    \label{fig:ExcitationPattern2}
\end{figure}

\newpage
\bibliographystyle{IEEEtranS}
\bibliography{biblio}
\end{document}